\newcommand{\bea}{\begin{eqnarray}}
\newcommand{\eea}{\end{eqnarray}}
\def\beaa{\begin{eqnarray*}}
\def\eeaa{\end{eqnarray*}}
\def\ba{\begin{array}}
\def\ea{\end{array}}
\def\be#1{\begin{equation} \label{#1}}
\def \eeq{\end{equation}}
\newcommand{\nn}{\nonumber}
\def\a{{\alpha}}
\def\b{{\beta}}
\def\be{{\beta}}
\def\ga{\gamma}
\def\de{\delta}
\def\De{\Delta}
\def\ep{\epsilon}
\def\eps{\epsilon}
\def\ka{\kappa}
\def\la{\lambda}
\def\si{\sigma}
\def\Si{\Sigma}
\def\om{\omega}
\def\Om{\Omega}
\def\th{\theta}
\def\ze{\zeta}
\def\ka{\kappa}
\def\nab{\nabla}
\def\pr{{\partial}}
\def\al{\alpha}
\def\rh{{\rho}}
\def\AA{{\mathcal A}}
\def\BB{{\mathcal B}}
\def\MM{{\mathcal M}}
\def\NN{{\mathcal N}}
\def\II{{\mathcal I}}
\def\FF{{\mathcal F}}
\def\HH{{\mathcal H}}
\def\LL{{\mathcal L}}
\def\GG{{\mathcal G}}
\def\WW{{\mathcal W}}
\def\SS{{\mathcal S}}
\def\Ss{{\mathcal S}}
\def\UU{{\mathcal U}}
\def\JJ{{\mathcal J}}
\def\RR{{\mathcal R}}
\def\QQ{{\mathcal Q}}
\def\AA{{\mathcal A}}
\def\VV{{\mathcal V}}
\def\B{{\bf B}}
\def\D{{\bf D}}
\def\M{{\bf M}}
\def\N{{\bf N}}
\def\O{{\bf O}}
\def\T{{\bf T}}
\def\E{{\bf E}}
\def\g{{\bf g}}
\def\SSS{{\mathbb S}}
\def\RRR{{\mathbb R}}
\def\f12{{\frac 1 2}}
\def\dual{{\,\,^*}}
\def\lb{{\,\underline{l}}}
\def\trch{{\mbox tr}\, \chi}
\def\trchb{{\mbox tr}\, \chib}
\def\chih{{\hat \chi}}
\def\chib{{\underline \chi}}
\def\etab{{\underline \eta}}
\def\omb{{\underline{\om}}}
\def\bb{{\underline{\b}}}
\def\aa{{\underline{\a}}}
\def\ul{{\underline{l}}}
\def\xib{{\underline{\xi}}}
\def\th{\theta}
\def\thb{{\underline{\theta}}}
\def\va{\vartheta}
\def\vab{{\underline{\vartheta}}}
\def\Omm{\overline{m}}
\def\Mm{m}
\def\Ll{\underline{l}}
\def\rhod{{\,\dual\rho}}
\def\f{\widetilde{f}}
\def\up{{u_+}}
\def\um{{u_-}}
\def\Lp{{L_+}}
\def\Lm{{L_-}}
\newtheorem{theorem}{Theorem}[section]
\newtheorem{lemma}[theorem]{Lemma}
\newtheorem{proposition}[theorem]{Proposition}
\newtheorem{definition}[theorem]{Definition}
\newtheorem{remark}[theorem]{Remark}
\numberwithin{equation}{section}
\begin{document}

\title[On the  uniqueness  of smooth, stationary black holes in vacuum]
{On the  uniqueness  of smooth, stationary black holes in vacuum}
\author{Alexandru D. Ionescu}
\address{University of Wisconsin -- Madison}
\email{ionescu@math.wisc.edu}
\author{Sergiu Klainerman}
\address{Princeton University}
\email{seri@math.princeton.edu}
\thanks{The first author was supported in part by an NSF grant and a Packard Fellowship. The second author was partially supported by NSF grant DMS-0070696.}
\begin{abstract}
A fundamental conjecture in  General Relativity
asserts that the domain of outer communication
of a regular, stationary, four dimensional,  vacuum black hole solution is isometrically diffeomorphic to the domain of outer communication of a Kerr black hole.  So far the conjecture has been
resolved, by combining results of Hawking \cite{H-E},
  Carter \cite{Ca1} and Robinson \cite{Rob},   under the additional hypothesis of  non-degenerate
horizons and \textit{real analyticity}  of the space-time.
We develop a new strategy to bypass  analyticity  
based on a  tensorial  characterization of  the Kerr solutions, due to Mars
\cite{Ma1},  and new  geometric Carleman
 estimates. 
 We prove, under a technical  assumption (an identity  relating the Ernst potential  and the Killing scalar)  on the bifurcate sphere of the event horizon,   that the domain of outer communication  of a smooth, regular, stationary Einstein vacuum spacetime of dimension $4$ is locally isometric to the domain of outer communication of a Kerr spacetime.
  \end{abstract}
\maketitle
\tableofcontents

\section{Introduction} 
A fundamental conjecture in  General Relativity\footnote{See reviews 
by  B. Carter \cite{Ca-R} and   P. Chusciel   \cite{Chrusc-Rev}, \cite{Chrusc-Rev2}, 
 for a history  and  review of the current status of the conjecture.}
asserts that the domains of outer communication
of regular\footnote{The notion of regularity 
needed here  requires a careful discussions
 concerning the geometric hypothesis 
 on the space-time. }, stationary, four dimensional,  vacuum black hole solutions
are isometrically diffeomorphic to those of 
Kerr black holes.  One expects,  due to gravitational radiation,
 that general, asymptotically flat, dynamic,  solutions of the Einstein-vacuum
  equations settle down, asymptotically,  into a stationary regime. A similar scenario is supposed to hold true in the presence of matter. Thus
  the conjecture, if true, would characterize all possible asymptotic states 
  of the general evolution.

 So far the conjecture has  been
resolved, by combining results of Hawking \cite{H-E},
 Carter \cite{Ca1}, and Robinson \cite{Rob},   under the additional hypothesis of  non-degenerate
horizons and \textit{real analyticity}  of the space-time.  
The assumption of real analyticity  is both hard to justify 
and difficult to dispense of.   
 One  can show, using standard elliptic theory, 
that stationary  solutions are real analytic in regions 
where the corresponding  Killing vector-field  $\T$  is time-like, 
but there is no reason to expect the same result to hold true 
in the ergo-region (in Kerr,
  the Killing  vector-field $\T$, which is time-like in 
the asymptotic region, becomes space-like in the ergo-region).
In view of the main application of the conjectured result to the
 general problem of evolution,  mentioned above, 
 there is also no reason to expect that, by losing gravitational radiation,
 general  solutions become, somehow, analytic. Thus the assumption
  of analyticity is a  serious limitation of the  present uniqueness results.
  Unfortunately   one of  the  main step in the current
   proof,
  due to Hawking \cite{H-E}, depends heavily on  analyticity. As we 
  argue below, to extend Hawking's argument to a smooth setting
   requires solving an \textit{ill posed problem}.  Roughly speaking 
  Hawking's argument is based on the observation that, though a general
  stationary space  may seem  quite complicated, its behavior along the
   event horizon is remarkably  simple. Thus Hawking has shown that
   in addition to the original, stationary, Killing field, which has to be tangent
   to the event horizon, there must exist, infinitesimally along the horizon,  an additional Killing  vector-field.   To   extend this information, from the event
   horizon to the domain of outer communication, requires
   one to solve a boundary value problem, with data on the horizon,  for a linear 
   differential equation.  Such problems are  typically \textit{ill posed} (i.e.  solutions may  fail to exist in the smooth category.)   In the analytic category, however, the problem  can be solved by a
    straightforward  Cauchy-Kowalewsky type argument.
    Thus, by  assuming  analyticity for the stationary metric,  Hawking    bypasses this  fundamental difficulty,  and  thus is able  to extend this additional Killing field to the entire domain of outer communication.  As a consequence,  the   space-time under consideration is not just stationary
    but also axi-symmetric, situation     for which  Carter-Robinson's  uniqueness theorem \cite{Ca1}, \cite{Rob} applies.  It is interesting to remark that  this   final  step  does not require analyticity.
    
    Though ill posed problems do not, in general, admit solutions,
    one can, when a solution is known to exist,  often 
    prove uniqueness (we refer the reader to the  introduction in  \cite{Ion-K1} for
    a more thorough discussion of this issue).  This fact has led us to develop a different     strategy for proving  uniqueness   based on  a characterization of  the Kerr solution,  due to Mars \cite{Ma1},  and   geometric  Carleman estimates
     applied to covariant wave  equations
      on a general, stationary,  black hole background. We discuss 
    this strategy in more details in the following subsection, after we recall 
    a few basic definitions  and results  concerning stationary black holes.
    Our main result, stated in subsection \ref{maintheorem} below, 
    proves uniqueness of the Kerr family among all, smooth, appropriately 
    regular, stationary   solutions, with  a regular, bifurcate, event horizon, under  an additional  assumption which has to be satisfied along the bifurcate  sphere $S_0$
    of the event horizon.  More precisely  we assume  a pointwise   complex scalar identity relating the Ernst potential $\sigma$ and the Killing scalar $\FF^2$ on $S_0$.

     \subsection{Stationary, regular,  black holes}
        In this subsection we review some of the main definitions and results concerning stationary black holes (see also the discussion in the introduction to section 3. We will also give a more detailed discussion  of our  new  approach to  the problem of uniqueness.   Precise assumptions concerning our result will be made 
only         in the next subsection.

The main objects in the  theory of stationary, vacuum, black holes are    $3+1$ dimensional space-times  $(\M,\g)$ which  are smooth, strongly causal, time oriented,    solutions  of the  Einstein vacuum equations, see \cite{H-E} for  precise definitions,  and which are  also  \textit{stationary}, 
\textit{asymptotically flat}.
More precisely one  considers,  see  for example page 2 in  \cite{FrRaWa},  space-times
 $(\M,\g)$ endowed with  a 1-parameter group of isometries $\Phi_t$,
generated by a  Killing vector-field $\T$, and  which possess
a smooth space-like   slice $\Si_0$  with an asymptotically flat end $\Si_0^{(end)}\subset \Si_0$ on 
  which $g(\T,\T)<0$.  To ensure strong causality  we assume that 
  $\M$ is the maximal globally hyperbolic extension of   $\Si_0$.  This implies,
    in particular, that all  orbits of $\T$ are complete,
    see \cite{Chrusc0}, and must intersect $\Si_0$,
     see \cite{CW}.
     Define $\M^{(end)}=\cup_{t\in\RRR}\Phi_t(\Si_0^{end})$. Take $\B$ to be the
   complement of $\II^{-}(\M^{(end)})$,  ${\bf W}$ the complement of $\II^{+}(\M^{(end)})$, where
    $\II^{\pm}(S)$ denote the causal future and past sets of a set $S\subset \M$. In other words $\B$ (called the 
  \textit{black hole} region), respectively ${\bf W}$ (called the \textit{white hole} region), is the set of points in $\M$ for which no future directed, respectively past directed, 
  causal curve  meets $\M^{(end)}$.
  Also we take 
    $\E$ (called  domain of outer communication)  the complement of ${\bf W}\cup \B$, i.e. 
   $\E=\II^{-}(\M^{(end)})\cap\II^{+}(\M^{(end)})$. We further  define 
  the future event horizon   $\HH^+$ to be  the boundary of $\II^{-}(\M^{(end)})$ and
   the past event horizon  $\HH^{-}$ to be  the boundary of $\II^{+}(\M^{(end)})$,
   \beaa
   \HH^+=\de\B,\qquad \HH^-=\de {\bf W}.
   \eeaa
   By definition both  $\HH^+$ and $\HH^-$ are achronal (i.e. no two points on  $\HH^+ $,
   or $\HH^-$  can be connected by time-like curves)
   boundaries generated by null geodesic segments.
     According to the topological censorship
   theorem, see   \cite{CW2} or \cite{FSW},  the domain of outer communication $\E$ is
   simply connected. This implies that  all  connected components of
   event horizons must have the topology
   of  $\SSS^2\times \RRR$. In our work we  shall  assume    that the event horizon has only one component.    
   
   It follows immediately from the definitions
   above that the flow $\Phi_t$ must keep
   $\HH^+$ and $\HH^-$ invariant, therefore 
    the generating  vector-field $\T$ must be tangent 
   to $\HH$.  One further assumes
    that $\Phi_t$  has no fixed points on 
    $\HH$ with the possible exception of 
    $S_0=\HH^+\cap  \HH^-$.  Then either $\T$ is space-like or null  at all   points of  $\HH$.
   If  $\T$ is null on $\HH$, in which case   $\HH$  is said to be   a Killing horizon for $\T$, 
      Sudarski-Wald \cite{Su-Wa} have proved that
      the space-time must be static, i.e. $\T$ is hypersurface orthogonal.  Static solutions, on the other hand,  are known to be isomorphic 
      to    Schwarzschild metrics, see \cite{I}, \cite{Bu-M} and \cite{Chrusc}.
       In this paper we are interested  only  in the case  when    $\T$ is space-like at some points on the horizon.

   The existence of partial Cauchy hypersurface 
   $\Si_0$ implies, in particular,  the existence of a foliation $\Si_t$ on 
    $\E$,  which induces a foliation $S_t$ on
     the horizon $\HH$ with  a well defined area.  
     A key result of Hawking
     \cite{H-E} (see also   \cite{CGD} where  the  area theorem is proved under 
    very general  differentiability  assumptions),  shows that the area of $S_t$
     is a monotonous function of $t$.      
      Using this fact, together with the tangency of the Killing field $\T$,  one can show that  the null second fundamental forms of  both   $\HH^+$ and $\HH^-$ must vanish identically, see  \cite{H-E}. Specializing to the future event horizon  $\HH^+$, Hawking \cite{H-E} (see also \cite{IsMon}) has proved the existence
      of a  non-vanishing vector-field $K$, tangent to the null generators of $\HH^+$ which is Killing to any order along $\HH^+$. Moreover $\D_KK=\kappa K$ with $\kappa$,      constant  along $\HH^+$, called the surface gravity of $\HH^+$. If $\ka\neq 0$ we say that $\HH^+$
      is \textit{non-degenerate}.        
      In the non-degenerate case the work of  Racz and Wald
       \cite{Ra-Wa} supports the hypothesis, which we make  in  our work  (see  next subsection),   that $\HH^+$ and $\HH^-$ are smooth null
      hypersurfaces intersecting smoothly on a $2$ surface  $S_0$ with the topology of the standard sphere.   We say,
      in this case,  that the horizon $\HH$ is  a smooth  \textit{bifurcate horizon}. 
      
      Under the restrictive assumption of real analyticity 
      of the metric $g$  one can show, see \cite{H-E}
      and \cite{Chrusc2}, that the Hawking 
       vector-field $K$ can be extended to a neighborhood  of the entire domain       of communication\footnote{In \cite{FrRaWa} it is shown that   $K$ can be extended 
   in the complement of the domain of outer communication $\E$  without the  restrictive analyticity assumption. However their argument does not apply
    to the domain of outer communication  $\E$. }. One can then show that the spacetime
    $(\M, \g)$ is not just stationary but also
    axi-symmetric. One can then appeal to the   results of Carter \cite{Ca1}  and
    Robinson \cite{Rob} which  show   that the family of Kerr
    solutions with $0\le a<m$ exhaust the class of non-degenerate, stationary axi-symmetric, connected, four
    dimensional, vacuum black holes. This concludes the present
    proof of uniqueness, based on analyticity.

     Without analyticity any hope
    of extending $K$ outside $\HH$, in $\E$,  by a direct argument
    encounters a fundamental difficulty.  Indeed one needs 
    to extend  $K$ such that it satisfies the Killing equation,
    \bea
    \D_\mu K_\nu+\D_\mu K_\nu=0.\label{Killing}
    \eea
    Differentiating the Killing equation
    and using the Ricci flat condition $\mbox{Ric}(\g)=0$
    one derives the covariant wave equation $
    \square_\g K=0$. 
    The obstacle  we encounter  is that the boundary value problem
    $\square_\g K=0$ with $K$ prescribed on $\HH$
    is   \textit{ill posed}, which means that it is impossible
    to extend $K$ by solving $\square_\g K=0$, if the metric
    is smooth but fails to be  real analytic.    To understand the 
    ill posed character of the situation  it helps to  consider 
    the following simpler model problem in 
      the domain   $\E=\{(t,x)\in \RRR^{1+3}/ |x|> 1+|t|\}$ of  Minkowski    space $\RRR^{1+3}$.
     \bea
     \square \phi=F(\phi, \pr\phi),\qquad \phi|_{\de\E}=\phi_0.\label{model}
     \eea
     Here $\square $ is the usual D'Alembertian of 
     $\RRR^{1+3}$ and $F$ a smooth function of $\phi$ and  its partial  derivatives $\pr_\a\phi$, vanishing for 
     $\phi=\pr\phi=0$.  One can regard $\E$ as a model
     of the domain of outer communication and its boundary $\HH=\de\E$ as analogous to the bifurcate
     event horizon considered above. 
     The problem is  still ill posed; even in the case $F\equiv 0$ we cannot, in general,  find solutions for arbitrary  smooth boundary     data $\phi_0$. Yet, as 
     typical to many ill posed problems, even if existence fails we can still prove uniqueness. In other words
     if \eqref{model} has two solutions $\phi_1,\phi_2$
    which agree on $\HH=\de \E$ then they must coincide 
    everywhere in $\E$, see \cite{Ion-K1}. The result is based on Carleman estimates, i.e. on space-time  $L^2$ a-priori estimates   with carefully chosen weights. A more
    realistic  model problem is to consider smooth  space-time metrics $\g$ in $\RRR^{1+3}$ which verify the Einstein vacuum equations and agree, up to
     curvature,  with the  standard    Minkowski metric 
     on the boundary $\HH=\de\E$. Can we prove 
     that $\g$ must be flat also  in $\E$ ?
       It is easy to see,
     using the Einstein equations, that the 
     Riemann curvature tensor  $R$ of such  metrics 
     must verify a  covariant wave equation of the form
     $\square_\g R=R* R$, with $R* R$ denoting an
     appropriate quadratic  product of components of $R$.
     We are thus  led to a question  similar
      to the one above;  knowing that $R$ vanishes
      on the boundary of $\E$ can we deduce that
      it also vanishes  on $\E$ ?   Using methods 
      similar to those of \cite{Ion-K1} we can prove
      that $R$ must vanish in a neighborhood of
      $\HH$. We also expect that, under additional 
      global assumptions on the metric  $\g$, 
      one can show  that $R$ vanishes everywhere
      on $\E$ and therefore $\g$ is locally
       Minkowskian. 
       
       These   considerations  lead us to 
       look for a tensor-field $\SS$, associated  to our  stationary   metric $\g$,  which satisfies the following 
       properties.
       \begin{enumerate}
       \item If $\SS$ vanishes in $\E$ then the metric
       $\g$ is locally isometric to  a Kerr solution.
       \item $\SS$ verifies a covariant wave equation
       of the form,
        \bea
       \square_\g \SS=\AA*\SS+\BB*\D\SS, \label{wave-SS}
       \eea
       with $\AA$ and $\BB$ two arbitrary 
       smooth tensor-fields.
       \item $\SS$ vanishes identically  on the bifurcate 
       event  horizon $\HH$.
       
       \end{enumerate}
       An appropriate space-time  tensor verifying condition (1) has been proposed
       by M. Mars in \cite{Ma1}, based on some previous
       work of W.  Simon \cite{Sim};  we  refer to it as  the Mars-Simon tensor.   In this paper we shall
       show that $\SS$ verifies the desired wave equation
       in (2) and give a sufficient, simple condition on the bifurcate sphere $S_0$,
       which insures that $\SS$ vanishes 
       on  the event horizon $\HH$. We then
       prove, based  on a global unique continuation  argument, 
        that $\SS$ must vanish everywhere in 
       the domain of outer communication $\E$.            
       In view of Mars's result \cite{Ma1} we deduce
       that $\E$ is locally isometric with 
       a Kerr solution.  
       
         The unique continuation strategy  is based 
       on two Carleman estimates.
       The first one  establishes the vanishing of 
       solutions to covariant wave equations, 
       with zero boundary conditions on a neighborhood  of $S_0$ on the   event
       horizon, to a full space-time neighborhood 
       of $S_0$. The proof of this result can  be extended to the exterior of  a regular,  bifurcate 
       null hypersurface (i.e. with a regular  bifurcate sphere),   in a general, smooth,
       Lorentz manifold. Our second, conditional,  Carleman estimate  is significantly  deeper as it depends  heavily  on the specific properties of  stationary solutions        of the Einstein vacuum equations. We use  it,  together with  an appropriate  bootstrap argument, to extend   the region of 
        vanishing of the 
       Mars -Simon tensor from  a neighborhood 
       of $S_0$ to the entire domain of outer
       communication $\E$.        The proof of both
       Carleman estimates (see also discussion in the first subsection of section 3), but especially the  second,  rely on   calculations 
       based on null frames and complex
         null tetrads. We develop our own formalism,
         which is, we hope,  a  useful  compromise between 
         that  of   Newmann-Penrose \cite{N-P} and that used     in  \cite{CKl},    \cite{KlNi}.  Strictly speaking the formalism 
         used in      \cite{CKl}  does not apply in the situation studied here as it presupposes that 
         the horizontal distribution generated by 
    the null pair is integrable.  The horizontal distribution generated by the 
    principal null directions in Kerr do  not  verify this property.

\subsection{Precise assumptions and the Main Theorem}\label{maintheorem}
We state now our precise assumptions. We  assume that  $(\M,\g)$ is a smooth\footnote{$\M$ is assumed to be a connected, orientable, paracompact $C^\infty$ manifold without boundary.}, time oriented, vacuum Einstein spacetime of dimension $3+1$ and $\T\in\T(\M)$ is a smooth Killing vector-field on $\M$. In addition, we make the following assumptions and definitions.

{\bf{AF.}} (Asymptotic flatness) We assume that there is an open subset $\M^{(end)}$ of $\M$ which is diffeomorphic to $\mathbb{R}\times(\{x\in\mathbb{R}^3:|x|>R\})$ for some $R$ sufficiently large. In local coordinates $\{t,x^i\}$ defined by this diffeomorphism, we assume that, with $r=\sqrt{(x^1)^2+(x^2)^2+(x^3)^2}$,
\begin{equation}\label{As-Flat}
\g_{00}=-1+\frac{2M}{r}+O(r^{-2}),\quad \g_{ij}=\delta_{ij}+O(r^{-1}),\quad\g_{0i}=O(r^{-2}),
\end{equation}
for some $M>0$, and
\begin{equation*}
\T=\partial_t\text{ therefore }\partial_t\g_{\mu\nu}=0.
\end{equation*}
We define the domain of outer communication (exterior region)
\begin{equation*}
\E=\II^{-}(\M^{(end)})\cap\II^{+}(\M^{(end)}).
\end{equation*}
We assume that there is an imbedded space-like hypersurface $\Sigma_0\subseteq\M$ which is diffeomorphic to $\{x\in\mathbb{R}^3:|x|>1/2\}$ and, in $\M^{(end)}$, $\Sigma_0$ agrees with the hypersurface corresponding to $t=0$. Let $T_0$ denote the future directed unit vector orthogonal to $\Sigma_0$. We assume that every orbit of $\T$ in $\E$ is complete and intersects the hypersurface $\Sigma_0$, and
\begin{equation}\label{nontang}
|\g(\T,T_0)|>0\text{ on }\Sigma_0\cap\E.
\end{equation}
\medskip

{\bf{SBS.}} (Smooth bifurcate sphere) Let
\begin{equation*}
S_0=\delta(\II^{-}(\M^{(end)}))\cap\delta(\II^+(\M^{(end)})).
\end{equation*}
We assume that $S_0\subseteq\Sigma_0$ and $S_0$ is an imbedded $2$-sphere which agrees with the sphere of radius $1$ in $\mathbb{R}^3$ under the identification of $\Sigma_0$ with $\{x\in\mathbb{R}^3:|x|>1/2\}$. Furthermore, we assume that there is a neighborhood $\mathbf{O}$ of $S_0$ in $\mathbf{M}$ such that the sets
\begin{equation*}
\HH^+=\mathbf{O}\cap \delta(\II^{-}(\M^{(end)})\quad \text{ and }\quad \HH^-=\mathbf{O}\cap \delta(\II^{+}(\M^{(end)})
\end{equation*}
are smooth imbedded hypersurfaces diffeomorphic to $S_0\times(-1,1)$, We assume that these hypersurfaces are null, non-expanding\footnote{A null hypersurface is said to be non-expanding if the trace of its null second fundamental form vanishes identically.}, and intersect transversally in $S_0$. Finally, we assume that the vector-field $\T$ is tangent to both hypersurfaces $\HH^+=\mathbf{O}\cap \delta(\II^{-}(\M^{(end)}))$ and $\HH^-=\mathbf{O}\cap \delta(\II^{+}(\M^{(end)}))$, and does not vanish identically on $S_0$\footnote{In view of a well known result, see  \cite{FS}, any  non-vanishing Killing field  on $S_0$ can only vanish at a finite number of isolated points.}.
\medskip

{\bf{T.}} (Technical assumptions). Let $F_{\a\b}=\D_\a\T_\b$ denote the Killing form on $\mathbf{M}$, and $\FF_{\al\be}=F_{\a\b}+i\dual F_{\a\b}$, where $\dual F_{\a\b}=\frac 1 2 \in_{\a\b\ga\de}F^{\ga\de}$. Let $\FF^2=\FF_{\a\b}\FF^{\a\b}$. The Ernst 1-form associated to $\T$ is defined as $\si_\mu=2 \T^\a\FF_{\a\mu}$. It is easy to check, see  equation \eqref{Ernst1}, that  $\si_\mu$ is exact and, therefore, there exists a complex scalar $\si$ defined in an open neighborhood of $\Sigma_0$, called the Ernst potential, such that $\D_\mu \si=\si_\mu$. In view of the asymptotic flatness assumption {\bf{AF}}, we can choose $\si$ such that $\si\to 1$ at infinity along $\Sigma_0$. Our main technical assumptions are 

\begin{equation}\label{Main-Cond1}
-4M^2\FF^2=(1-\si)^4\quad\text{ on }S_0,
\end{equation}
and
\begin{equation}\label{Main-Cond2}
\qquad \qquad\qquad\Re\big((1-\si)^{-1}\big)>1/2\quad\,\text{  at some point on }S_0.
\end{equation}
\medskip
\begin{remark}
  As we have discussed  in the 
  previous subsection   some  of the assumptions 
  made above have been deduced from
  more primitive assumptions. For example,
  the completeness of orbits of $\E$ can be deduced by assuming that $\M$  is the maximal global
  hyperbolic extension of   $\Si_0$, see \cite{Chrusc0}.    Our  precise
  space-time   asymptotic flatness  conditions can be deduced  by making  asymptotic flatness  assumptions  only on $\Si_0$, see \cite{Be-Si},
   \cite{Be-Si2}.      The assumption \eqref{nontang} can be replaced, at the expense of some additional work in section \ref{lastsection}, by a suitable regularity assumption on the space of orbits of $\T$.
  The non-expanding condition in {\bf SBS} can
  be derived  using the area theorem, see  \cite{H-E},  \cite{CGD}. 
The regular  bifurcate structure of the horizon,
    assumed in   {\bf SBS},  is connected
   to the more primitive assumption of  non-degeneracy   of the horizon, see \cite{Ra-Wa}.
\end{remark} 
    
\begin{remark}    
   Assumption  \eqref{Main-Cond2}  is consistent with 
  the natural condition $0\le a<M$ satisfied by the two parameters of the Kerr family.   The key technical assumption in this paper is the identity \eqref{Main-Cond1}, which is assumed to hold on the bifurcate sphere $S_0$. This  assumption is made in 
  order to insure that the corresponding Mars-Simon
  tensor vanishes on $\HH^{-}\cup\HH^+$. We emphasize, however, that we do not make any technical assumptions in the open set $\E$ itself; the identity \eqref{Main-Cond1} is only assumed to hold on the bifurcate sphere $S_0$, which is a codimension $2$ set, while the inequality \eqref{Main-Cond2} is only assumed at one point of $S_0$. We hope to further relax these technical conditions and interpret them as part of the ``regularity'' assumptions on the black hole in future work. 
\end{remark}

\begin{remark}\label{Kerrexplicit} 
In Boyer-Lindquist coordinates the Kerr metric 
takes the form,
\bea
ds^2=-\frac{\rho^2\Delta}{\Sigma^2}(dt)^2+\frac{\Sigma^2(\sin\theta)^2}{\rho^2}\Big(d\phi-\frac{2aMr}{\Sigma^2}dt\Big)^2+\frac{\rh^2}{\Delta}(dr)^2+\rho^2(d\theta)^2,
\label{metric-Kerr}
\eea
where, 
\beaa
\rho^2&=&r^2+a^2\cos^2\th,\qquad
\De= r^2+a^2-2Mr,\qquad
\Sigma^2=(r^2+a^2)\rho^2+2Mra^2(\sin\theta)^2.
\eeaa
On the horizon we have $r=r_+:=M+\sqrt{M^2-a^2}$ and $\De=0$.   The domain of outer communication $\mathbf{E}$ is given by $r> r_+$. 
One can show that  the   complex Ernst potential $\si$ and the complex scalar
 $\FF^2$   are  given by
\bea
\si&=&1-\frac{2M}{r+ia\cos\th},\qquad \FF^2=-\frac{4M^2}{(r+ia\cos\th)^4}.
\eea
Thus,
\bea
-4M^2\FF^2&=&(1-\si)^4
\eea
everywhere in the exterior region. 
Writing $y+iz:=(1-\si)^{-1}$ we  observe that,
\beaa
y=\frac{r}{2M}\ge \frac{r_+}{2M}>\frac 12.
\eeaa
everywhere in the exterior region. 
\end{remark}

\noindent {\bf Main Theorem}.\quad \textit{Under the assumptions {\bf{AF}}, {\bf{SBS}}, and {\bf{T}} the domain of outer communication $\E$ of $\M$ is  locally isometric to the domain of outer communication of a Kerr space-time with mass $M$ and $0<a<M$.}

\medskip 

As  mentioned earlier, the basic idea of the  proof is to show that the Mars-Simon tensor is well-defined and vanishes in the entire domain of outer communication, by relying on Carleman type estimates. We provide below a more detailed outline of the proof.

In section \ref{generalCarleman}, we prove a sufficiently general geometric Carleman inequality, Proposition \ref{Cargen}, with weights that satisfy suitable conditional pseudo-convexity assumptions. This Carleman inequality is applied in section \ref{sectionS=01} to prove Proposition \ref{Lemmaa1} and section \ref{lastsection} to prove Proposition \ref{vanishingS3}.  

In section \ref{MarsSimon} we define, in a simply connected neighborhood $\widetilde{\mathbf{M}}$ of $\Si_0\cap \overline{\E}$, the Killing form $\mathcal{F}_{\al\be}$ and the Ernst potential $\sigma$. We then introduce the Mars-Simon tensor, see \cite{Ma1}, 
\begin{equation*}
\Ss_{\al\be\mu\nu}=\RR_{\al\be\mu\nu}+6(1-\sigma)^{-1}\big(\FF_{\al\be}\FF_{\mu\nu}-(1/3)\FF^2\II_{\al\be\mu\nu}\big)
\end{equation*}
as a self-dual Weyl tensor, which is well defined and smooth in the open set
\begin{equation*}
\mathbf{N}_0=\{x\in\widetilde{\mathbf{M}}:1-\sigma(x)\neq 0\}.
\end{equation*}
It is important to observe that $\mathbf{N}_0$ contains a neighborhood of the bifurcate sphere $S_0$, since $\Re\sigma=-\T^\al\T_\al$, which is nonpositive on $S_0$. In particular, the Mars-Simon tensor is well defined in a neighborhood of $S_0$. The  main  result of the section, stated in Theorem \ref{thm-MS-eqts}, is the identity 
\begin{equation}\label{ro10}
\D^\si\SS_{\si\a \mu\nu}=\JJ(\SS)_{\a\mu\nu}=-6(1-\sigma)^{-1}\T^\la\SS_{\la\rho\ga\de}\big(\FF_{\a}^{\,\,\,\rho}\de^\ga_\mu\de^\de_\nu-(2/3)\FF^{\ga\de}\II_{\a\,\,\mu\nu}^{\,\,\,\rho}\big),
\end{equation} 
which shows that $\SS$ verifies a divergence equation  with a source term $\JJ(\SS)$  proportional to $\SS$.
  It is then straightforward to deduce, see Theorem
  \ref{wave}, that $\SS$ verifies a covariant wave equation with a source proportional to $\SS$ and  first derivatives of $\SS$. 

In section \ref{horizon} we show that $\SS$ vanishes on the horizon $\delta(\II^{-}(\M^{(end)}))\cup\delta(\II^{+}(\M^{(end)}))$, in a neighborhood of the bifurcate sphere $S_0$. The proof depends on special properties of the horizon, such as the vanishing of the null second fundamental forms and certain null curvature components, and the divergence equation \eqref{ro10}. The proof also depends on the main technical assumption \eqref{Main-Cond1} to show that the component $\rho(\Ss)$ vanishes on $S_0$ (this is the only place where this technical assumption is used).   

In section \ref{sectionS=01} we show that $\SS$ vanishes in a full space-time neighborhood $\O_{r_1}\cap \E$ of $S_0$ in $\E$, see Proposition \ref{Lemmaa1}. For this we derive the Carleman inequality of Lemma \ref{Car}, as a consequence of the more general Proposition \ref{Cargen}. The weight function used in this Carleman inequality is constructed with the help of two optical functions $u_+$ and $u_-$, defined in a space-time  neighborhood of $S_0$. We then apply this Carleman inequality to the covariant wave equation verified by $\SS$, to prove Proposition \ref{Lemmaa1}. 
 
Once we have regions of space-time in which $\SS$ vanishes we can rely on some of the remarkable computations of Mars \cite{Ma1}. In section \ref{section2} we work in an open set $\mathbf{N}\subseteq\mathbf{N}_0$ (thus $1-\sigma\neq 0$ in $\mathbf{N}$), $S_0\subseteq \mathbf{N}$, with the property that $\Ss=0$ in $\mathbf{N}\cap\mathbf{E}$ and $\mathbf{N}\cap\mathbf{E}$ is connected. Such sets exist, in view of the main result of section \ref{sectionS=01}. Following Mars \cite{Ma1}, we define the real functions $y$ and $z$ in $\mathbf{N}$ by
\begin{equation*}
y+iz=(1-\si)^{-1},
\end{equation*}
see Remark \ref{Kerrexplicit} for explicit formulas in the Kerr spaces. The function $y$ satisfies the important identity \eqref{se30}, found by Mars,
\begin{equation}\label{ro11}
\D_\a y\D^\a y=\frac{y^2-y+B}{4M^2(y^2+z^2)}
\end{equation}
in $\mathbf{N}\cap\mathbf{E}$, where $B\in[0,\infty)$ is a constant which has the additional property that $z^2\leq B$ in $\mathbf{N}\cap\mathbf{E}$ (in the Kerr space $B=a^2/(4M^2)$). We then use this identity and the fact that $\Re(1-\sigma)=1+\g(\T,\T)$ to prove the key bound on the coordinate norm of the gradient
\begin{equation}\label{ro12}
|D^1y|\leq\widetilde{C}\text{ in }\mathbf{N}\cap\mathbf{E},
\end{equation}       
with a uniform constant $\widetilde{C}$ (see Proposition \ref{uniformy}). This bound, together with $z^2\leq B$, shows that the function $1-\sigma=(y+iz)^{-1}$ cannot vanish in a neighborhood of the closure of $\mathbf{N}\cap\mathbf{E}$, as  long as $\Ss=0$ in $\mathbf{N}\cap\mathbf{E}$ and $\mathbf{N}\cap\mathbf{E}$ is connected. This observation is important in section \ref{lastsection}, as part of the bootstrap argument, to show that $1-\sigma\neq 0$ in $\Sigma_0\cap\E$. Finally, in Lemma \ref{coefficients} we work in a canonical complex null tetrad and compute the Hessian $\D^2y$ in terms of the functions $y$ and $z$, and the connection coefficient $\ze$.
         
In section \ref{lastsection} we use a bootstrap argument to complete the proof of the Main Theorem. Our main goal is to show that $1-\sigma\neq 0$ and $\Ss=0$ in $\Sigma_0\cap\E$. We start by showing that $y=y_{S_0}$ is constant on the bifurcate sphere $S_0$, and use \eqref{ro11} to show that $y^2_{S_0}-y_{S_0}+B=0$; using \eqref{Main-Cond2} it follows that $B\in[0,1/4)$ and $y_{S_0}\in(1/2,1]$. We use then the wave equation
\begin{equation*}
\D^\al\D_\al y=\frac{2y-1}{4M^2(y^2+z^2)},
\end{equation*}
which is a consequence of $\Ss=0$, and the fact that $y_{S_0}>1/2$, to show that $y$ must increase in a small neighborhood  $\O_\ep\cap \E$. We can then start our bootstrap argument: for $R>y_{S_0}$ let $\mathcal{U}_R$ denote the unique connected component of the set $\{x\in\Sigma_0\cap\E:\sigma(x)\neq 1\text{ and }y(x)<R\}$ whose closure in $\Sigma_0$ contains $S_0$. We need to show, by induction over $R$, that $\Ss=0$ in $\mathcal{U}_R$ for any $R>y_{S_0}$; assuming this, it would follow from \eqref{ro12} that $\sigma\neq 1$ in $\Sigma_0\cap\E$ and $\cup_{R>y_{s_0}}\mathcal{U}_{R}=\Sigma_0\cap\E$, which would complete the proof of the Main Theorem. The key inductive step in proving that $\Ss=0$ in $\mathcal{U}_R$ is to show that if $x_0$ is a point on the boundary of $\mathcal{U}_R$ in $\Sigma_0\cap\E$, and if $\Ss=0$ in $\mathcal{U}_R$, then $\Ss=0$ in a neighborhood of $x_0$ (see Proposition \ref{vanishingS3}). For this we use a second Carleman inequality, Lemma \ref{Carl2}, with a weight that depends on the function $y$. To prove this second Carleman estimate we use the general Carleman estimate Proposition \ref{Cargen} and the remarkable pseudo-convexity properties of the Hessian of the function $y$ computed in Lemma \ref{coefficients}.

We would like to thank  P. Chrusciel, M. Dafermos, J. Isenberg, M. Mars and R. Wald for helpful conversations connected to our work. We would also like to thank the referees for very helpful comments, particularly on section \ref{generalCarleman}.

\section{Geometric preliminaries}\label{preliminaries}
\subsection{Optical functions}\label{subs:optical}
We    define  two optical functions $\um, \up$  in  a neighborhood of the bifurcate sphere  $S_0$, included in 
 the neighborhood $\O$ of hypothesis {\bf SBS}. Choose a smooth 
 future-past directed  null pair $(\Lp, \Lm)$
along $S_0$ (i.e. $\Lp$ is future oriented while $\Lm$ is past oriented), 
\begin{equation}\label{normalization}
\g(\Lm,\Lm)=\g(\Lp,\Lp)=0,\,\,\g(\Lp,T_0)=-1,\,\,\g(\Lp, \Lm)=1.\\
\end{equation}
We extend $\Lp$ (resp. $\Lm$)  along the null geodesic generators of $\HH^+$
 (resp. $\HH^-$) by parallel transport, i.e. $\D_\Lp\Lp=0$ (resp.  $\D_\Lm\Lm=0$). 
 We define the function $\um$ (resp. $\up$) along $\HH^+$ (resp. $\HH^-$) 
 by setting $\um=\up=0$ on the bifurcate sphere $S_0$ and solving $\Lp(\um)=1$ 
 (resp.  $\Lm(\up)=1$).  Let $S_{\um}$ (resp.  $S_{\um}$)  be the level
 surfaces of $\um$ (resp. $\up$)   along $\HH^+$ (resp. $\HH^-$). We define 
 $\Lm$ at every point of  $\HH^+$ (resp. $\Lp$ at every point of $\HH^-$)
 as the unique, past  directed (resp. future directed),  null vector-field orthogonal to the surface $S_{\um}$ (resp. $S_\up$) passing through that point and such that $\g(\Lp, \Lm)=1$.
 We now define the null hypersurface $\HH_{\um}$ to be the congruence
 of   null geodesics  initiating on $S_{\um}\subset\HH^+$ in the direction of
  $\Lm$.
 Similarly we define $\HH_{\up}$ to be the congruence
 of   null geodesics  initiating on $S_{\up}\subset\HH^-$ in the direction
 of  $\Lp$.
 Both congruences are well defined in a sufficiently small neighborhood $\O$  of $S_0$ in $\M$.  The null hypersurfaces $\HH_{\um}$ (resp.  $\HH_{\up}$) are the level sets
 of  a function $\um$  (resp $\up$)  vanishing on $\HH^-$  (resp. $\HH^+$). 
 Moreover we can arrange that both $\um, \up$ are positive in the domain
 of outer communication $\E$. By construction they are both null optical functions,
 i.e.
 \bea
 g^{\mu\nu}\pr_\mu u_+ \pr_\nu u_+= g^{\mu\nu}\pr_\mu u_- \pr_\nu u_-=      0.
 \eea
 We define
 \bea
 \Om&=&g^{\mu\nu}\pr_\mu u_+ \pr_\nu u_-.
 \eea
 In view of our construction we have,
  \bea
  \up|_{\HH^+}=\um|_{\HH^-}=0,\qquad 
  \Om|_{\HH^+\cup\HH^-}=1.
  \eea
  Let
  \bea
  \Lp=g^{\mu\nu}\pr_\mu \up\pr_\nu,\qquad  \Lm=g^{\mu\nu}\pr_\mu \um\pr_\nu.
  \eea
  We have,
  \beaa
  \g(\Lp,\Lp)=\g(\Lm, \Lm)=0,\quad \g(\Lp, \Lm)=\Om.
  \eeaa  
  Define the sets,  
  \beaa
\O_\ep=\{x\in \O:  |\um|< \ep, |\up|< \ep\}.
\eeaa
For   sufficiently small $\ep_0>0$ we have,
\bea
\Om>\frac 1 2\quad \text{in} \,\,\O_{\ep_0},\qquad 
\overline{\O_{\ep_0}}\subset\O.
\eea
 We  also have, for $  \ep\le \ep_0$, $\O_\ep\cap\overline{\E}=\{0\le\um<\ep, 
 0\le  \up<\ep\}  $.   If $\phi$ is a smooth function
 in $\O_\ep$, vanishing on $\HH^+\cap \O_\ep$, one can show
 that  there exists a smooth function $\phi'$ defined
 on $\O_\ep$ such that, 
\begin{equation}\label{boundedgeom2.2}
\phi=u_+\cdot\phi'\text{ on }\mathbf{O}_\ep.
\end{equation}
Similarly, if $\phi $  is a smooth function
  in $\O_\ep$, vanishing on $\HH^-\cap \O_\ep$,   then there exists  another  smooth function $\phi'$  defined on
  $\O_\ep$  such that,
\begin{equation}\label{boundedgeom2.3}
\phi=u_-\cdot\phi'\text{ on }\mathbf{O}_\ep.
\end{equation}

\subsection{Quantitative bounds}\label{coordinates}
Using the hypothesis \eqref{nontang} we may assume that for every $0< \ep<\ep_0$ there is a sufficiently large
constant $\widetilde{A}_\ep$ such that,
\bea
|\g(\T, T_0)|> \widetilde{A}_\ep^{-1},\qquad \forall x\in( \Si_0\cap \E)\setminus \O_{\ep}.\label{nontangq}
\eea
In view of the normalization \eqref{normalization} we may assume (after possibly decreasing the value of $\eps_0$) that, for some constant $A_0$,
\begin{equation}\label{boundedgeom3.1}
u_+/u_-+u_-/u_+\leq A_0\text{ on }\O_{\ep_0}\cap\E\cap\Sigma_0.
\end{equation}

We construct a system of coordinates 
which cover  a neighborhood of the space-like hypersurface $\Si_0$. For any $R\in(0,1]$ let $B_R=\{x\in\mathbb{R}^4:|x|<R\}$ denote the open ball of radius $R$ in $\mathbb{R}^4$. In view of the asymptotic flatness assumption $\mathbf{AF}$, there is a constant $A_0\in[\eps_0^{-1},\infty)$ such that \eqref{boundedgeom3.1} holds and, in addition, for any $x_0\in\Sigma_0\cap \overline{\mathbf{E}}$ there is an open set $B_1(x_0)\subseteq\mathbf{M}$ containing $x_0$ and a smooth coordinate chart $\Phi^{x_0}:B_1\to B_1(x_0)$, $\Phi^{x_0}(0)=x_0$, with the property that
\begin{equation}\label{boundgeom}
\begin{split}
&\sup_{x_0\in\Sigma_0\cap\overline{\mathbf{E}}}\,\sup_{x\in B_1(x_0)}\,\,\sum_{j=0}^6\,\,\sum_{\al_1,\ldots,\al_j,\be,\ga=1}^4\,\,\big( |\partial_{\al_1}\ldots\partial_{\al_j}\g_{\be\ga}(x)|+|\partial_{\al_1}\ldots\partial_{\al_j}\g^{\be\ga}(x)|\big)\leq A_0;\\
&\sup_{x_0\in\Sigma_0\cap\overline{\mathbf{E}}}\,\sup_{x\in B_1(x_0)}\,\,\sum_{j=0}^6\,\,\sum_{\al_1,\ldots,\al_j,\be=1}^4\,\,|\partial_{\al_1}\ldots\partial_{\al_j}\T^{\be}(x)|\leq A_0.
\end{split}
\end{equation}
We may assume that $B_1(x_0)\subseteq \mathbf{O}_{\eps_0}$ if $x_0\in S_0$. We define $\widetilde{\M}$ to be the union of the balls
$B_1(x_0)$ over all points $x_0\in \Si_0\cap \overline{\E}$. We can arrange such that $\widetilde{\M}$ is simply connected.

Since $S_0$ is compact, we may assume (after possibly increasing the value of $A_0$) that
\begin{equation}\label{boundedgeom2.1}
\sup_{x_0\in S_0}\,\sup_{x\in B_1(x_0)}\big[\sum_{j=0}^6\sum_{\al_1,\ldots,\al_j=1}^4|\partial_{\al_1}\ldots\partial_{\al_j}u_{\pm}(x)|+\big(\sum_{\al=1}^4|\partial_{\al}u_{\pm}(x)|\big)^{-1}\big]\leq A_0.
\end{equation}
Finally, we may also assume, in view of \eqref{Main-Cond2},  that
there is a point $x_0\in S_0$ such that,
\begin{equation}\label{boundedgeom4.1}
\Re\big((1-\si(x_0))^{-1}\big)>\frac 12 +A_0^{-1}.
\end{equation}
To summarize, we fixed constants $\eps_0$ and $A_0\geq\eps_0^{-1}$ such that \eqref{boundedgeom3.1}-\eqref{boundedgeom4.1} hold.

\section{Unique continuation and Carleman inequalities}\label{generalCarleman}
\subsection{General considerations}
As explained in  section 1 
our proof of the Main Theorem is based on  a global, unique continuation
strategy applied to  equation \eqref{wave-SS}.  We say
that a linear differential operator $L$, in a domain $\Omega\subset\RRR^d$,   satisfies the unique continuation property with respect  to a smooth, oriented, hypersurface $\Si\subset\Omega$,  if  any smooth solution  of $L\phi=0$ which vanishes on one side of $\Si$  must in fact vanish in a small neighborhood of $\Si$. Such a property depends,
of course, on the interplay between the properties of the operator $L$ and the hypersurface $\Sigma$. A classical result of H\"{o}rmander, see for example Chapter 28 in \cite{Ho}, provides sufficient conditions for a scalar linear equation which guarantee that the unique continuation property holds. In the particular case of
the scalar wave equation, 
$
\square_\g\phi=0,
$ 
and a non-characteristic surface $\Sigma$, defined by the equation $h=0$, $\nabla h\neq 0$, H\"{o}rmander's pseudo-convexity condition takes  the simple form,
\begin{equation}\label{HoCond}
\D^2h(X,X)<0\qquad\text{ if }\qquad \g(X,X)=\g(X,\D h)=0
\end{equation}
at all points on the surface $\Sigma$, where we assume that $\phi$ is known to vanish on the side of $\Sigma$ corresponding to $h<0$. 

In our situation, we plan  to apply the general philosophy of unique continuation to  the   covariant wave equation (see  Theorem \ref{wave}),
\begin{equation}\label{cartoon1}
\square_\g\Ss=\AA\ast\SS+\BB\ast\D\Ss,
\end{equation}
verified by the  Mars-Simon tensor $\Ss$, see Definition \ref{MStensor}. We prove in section \ref{horizon}, using the main technical assumption \eqref{Main-Cond1}, that $\Ss$ vanishes on the horizon $\mathcal{H}^+\cup\mathcal{H}^-$ and we would like to prove, by unique continuation, that $\Ss$ vanishes in the entire domain of outer communication.   In implementing such a strategy one encounters the following
difficulties:
\begin{enumerate}
\item   Equation  \eqref{cartoon1} is tensorial, rather than scalar.
\item The  horizon $\mathcal{H}^+\cup\mathcal{H}^-$ is characteristic and non
smooth in a neighborhood of the bifurcate sphere.
\item Though one can show that an appropriate variant of H\"{o}rmander's pseudo-convexity condition holds true  along  the horizon, in a neighborhood of the bifurcate
sphere,  we have no guarantee that such condition continue to be true  slightly away from
the horizon,  within the ergosphere  region of the stationary space-time  where $\T$
is space-like. 
\end{enumerate}
  Problem  (1) is not very serious; we can effectively reduce   \eqref{cartoon1} to a system  of scalar equations, diagonal with respect to the principal symbol.  Problem
   (2) can be dealt with   by an adaptation of  H\"ormander's
    pseudo-convexity condition.  We note however that such an adaptation is necessary since,
    given our simple vanishing condition of $\Ss$ along the horizon, we cannot 
    directly  apply  H\"ormander's result in   \cite{Ho}.
    Problem (3) is by far the most serious. Indeed, even in the   case
when  $\g$ is  a Kerr metric  \eqref{metric-Kerr},  one can show that there exist 
null geodesics  trapped within the ergosphere region    $m+\sqrt{m^2-a^2}\le r\le  m+\sqrt{m^2-a^2\cos^2\theta}$.  Indeed    
   surfaces  of the form $r\Delta=m(r^2-a^2)^{1/2}$,   which intersect the ergosphere
    for $a$ sufficiently close to $m$,  are known to contain such    null geodesics,
    see \cite{Ch}.  One can  show that the  presence of  trapped  null geodesics invalidates H\"ormander's pseudo-convexity condition. Thus, even in the case of the scalar wave equation  $\square_\g\phi=0$ in  such a Kerr metric, one cannot guarantee,  by a classical unique continuation argument (in the absence of additional conditions)  that  $\phi$ vanishes beyond a small neighborhood of the horizon. 
   
In order to overcome this difficulty  we exploit the geometric nature of our problem and make use of the invariance of $\Ss$ with respect to  $\T$, Thus  the tensor $\Ss$ satisfies, in addition to \eqref{cartoon1}, the identity
\begin{equation}\label{cartoon2}
\mathcal{L}_\T\Ss=0.
\end{equation}
Observe that \eqref{cartoon2} can, in principle, transform  \eqref{cartoon1}
into a much simpler elliptic problem, in  any domain which lies strictly outside the
ergosphere (where $\T$ is strictly  time-like). Unfortunately this   possible strategy  is not available to us since,
as we have remarked above,   we cannot hope  to  extend the vanishing  of $\Ss$,
by a simple  analogue of H\"ormander's pseudo-convexity condition,  
beyond the   first trapped null geodesics. 

Our solution is to extend H\"{o}rmander's classical  pseudo-convexity condition \eqref{HoCond} to one which takes into account both equations
\eqref{cartoon1} and \eqref{cartoon2}.  These considerations lead  to the following
 qualitative, $\T$-conditional, pseudo-convexity condition, 
\begin{equation}\label{HoCond2}
\begin{split}
&\T(h)=0;\\
&\D^2h(X,X)<0\qquad\text{ if }\qquad \g(X,X)=\g(X,\D h)=\g(\T,X)=0.
\end{split}
\end{equation}
 In a first approximation one can  show that this  condition can be verified  in  all Kerr spaces  $a\in[0,m)$, for the simple   function $h=r$ (see \cite{Ion-K1}), where $r$ is one  of the    Boyer--Lindquist coordinates. Thus \eqref{HoCond2}   is a good substitute for  the more general condition \eqref{HoCond}. The fact that the two geometric identities \eqref{cartoon1} and \eqref{cartoon2} cooperate exactly in the right way, via   \eqref{HoCond2},   thus  allowing us  to compensate for  both the failure of 
 condition \eqref{HoCond}   as well as  the failure of the vector field $\T$ to be time-like in the ergoregion, seems to us to be a very remarkable property of the Kerr spaces. In the next subsection we give a quantitative version    of the condition and derive a Carleman estimate of sufficient generality to cover    all 
our needs.

\subsection{A Carleman estimate of sufficient generality}
Unique continuation properties are often proved using Carleman inequalities. In this subsection we prove a sufficiently general Carleman inequality, Proposition \ref{Cargen}, under a quantitative conditional pseudo-convexity assumption. This general  Carleman inequality is used in section \ref{sectionS=01} to show that $\Ss$ vanishes in a small neighborhood of the bifurcate sphere $S_0$ in $\overline{\mathbf{E}}$, and then in section \ref{lastsection} to prove that $\Ss$ vanishes in the entire exterior domain. The two applications are genuinely different, since, in particular, the horizon is a bifurcate surface which is not smooth and the weights needed in this case have to be ``singular'' in an appropriate sense. In order to be able to cover both applications and prove unique continuation in a quantitative sense, which is important especially in section \ref{lastsection}, we work with a more  technical notion of conditional pseudo-convexity than  \eqref{HoCond2}, see Definition \ref{psconvex} below.      

Assume, as in the previous section, that $x_0\in\Sigma_0\cap\overline{\mathbf{E}}$ and $\Phi^{x_0}:B_1\to B_1(x_0)$ is the corresponding coordinate chart. For  simplicity  of notation, let $B_r=B_r(x_0)$. For any smooth function $\phi:B\to\mathbb{C}$, where $B\subseteq B_1$ is an open set, and $j=0,1,\ldots$ let
\begin{equation*}
|D^j\phi(x)|=\sum_{\al_1,\ldots,\al_j=1}^4|\partial_{\al_1}\ldots\partial_{\al_j}\phi(x)|.
\end{equation*}
Assume that $V=V^\alpha\partial_\alpha$ is a vector-field on $B_1$ with the property that
\begin{equation}\label{po1}
\sup_{x\in B_1}\sum_{j=0}^4\sum_{\beta=1}^4|D^jV^\beta|\leq A_0.
\end{equation}
In our applications, $V=0$ or $V=\mathbf{T}$.

\begin{definition}\label{psconvex}
A family of weights $h_\eps:B_{\eps^{10}}\to\mathbb{R}_+$, $\eps\in(0,\eps_1)$, $\eps_1\leq A_0^{-1}$, will be called $V$-conditional pseudo-convex if for any $\eps\in(0,\eps_1)$ \begin{equation}\label{po5}
\begin{split}
h_\eps(x_0)=\eps,\quad\sup_{x\in B_{\eps^{10}}}\sum_{j=1}^4\eps^j|D^jh_\eps(x)|\leq\eps/\eps_1,\quad |V(h_\eps)(x_0)|\leq\eps^{10},
\end{split}
\end{equation}
\begin{equation}\label{po3.2}
\D^\alpha h_\eps(x_0)\D^\be h_\eps(x_0)(\D_\al h_\eps\D_\be h_\eps-\eps\D_\al\D_\be h_\eps)(x_0)\geq\eps_1^2,
\end{equation}
and there is $\mu\in[-\eps_1^{-1},\eps_1^{-1}]$ such that for all vectors $X=X^\alpha\partial_\alpha\in\mathbf{T}_{x_0}(\mathbf{M})$
\begin{equation}\label{po3}
\begin{split}
&\eps_1^2[(X^1)^2+(X^2)^2+(X^3)^2+(X^4)^2]\\
&\leq X^\al X^\be(\mu\g_{\al\be}-\D_\al\D_\be h_\eps)(x_0)+\eps^{-2}(|X^\al V_\al(x_0)|^2+|X^\al\D_\al h_\eps(x_0)|^2).
\end{split}
\end{equation}
A function $e_\eps:B_{\eps^{10}}\to\mathbb{R}$ will be called a negligible perturbation if
\begin{equation}\label{smallweight}
\sup_{x\in B_{\eps^{10}}}|D^je_\eps(x)|\leq\eps^{10}\qquad\text{ for  }j=0,\ldots,4.
\end{equation}
\end{definition}

\begin{remark} One can see that the technical conditions \eqref{po5}, \eqref{po3.2}, and \eqref{po3} are related to the qualitative condition \eqref{HoCond2}, at least when $h_\eps=h+\eps$ for some smooth function $h$. The assumption $|V(h_\eps)(x_0)|\leq\eps^{10}$ is a quantitative version of $V(h)=0$. The assumption \eqref{po3.2} is a quantitative version of the non-characteristic condition $\D^\alpha h\D_\al h\neq 0$. The assumption \eqref{po3} is a quantitative version of the inequality in the second line of \eqref{HoCond2}, in view of the large factor $\eps^{-2}$ on the terms $|X^\al V_\al(x_0)|^2$ and $|X^\al\D_\al h_\eps(x_0)|^2$, and the freedom to choose $\mu$ in a large range. 

It is important that the Carleman estimates we prove are stable under small perturbations of the weight, in order to be able to use them to prove unique continuation. We quantify this stability in \eqref{smallweight}.  
\end{remark}

We observe that if $\{h_\eps\}_{\eps\in(0,\eps_1)}$ is a $V$-conditional pseudo-convex family, and $e_\eps$ is a negligible perturbation for any $\eps\in(0,\eps_1]$, then
\begin{equation*}
h_\eps+e_\eps\in[\eps/2,2\eps]\text{ in }B_{\eps^{10}}.
\end{equation*}
The pseudo-convexity conditions of Definition \ref{psconvex} are probably not as general as possible, but are suitable for our applications both in section \ref{sectionS=01}, with ``singular'' weights $h_\eps$, and section \ref{lastsection}, with ``smooth'' weights $h_\ep$. We  also  note that it is important to our goal to prove a global result (see  section \ref{lastsection}), to be able to track quantitatively the size of the support of the functions for which Carleman estimates can be applied; in our notation, this size depends only on the parameter $\eps_1$ in Definition \ref{psconvex}.

\begin{proposition}\label{Cargen}
Assume $x_0,V$ are as above, $\eps_1\leq A_0^{-1}$, $\{h_\eps\}_{\eps\in(0,\eps_1)}$ is a $V$-conditional pseudo-convex family, and $e_\eps$ is a negligible perturbation for any $\eps\in(0,\eps_1]$. Then there is $\eps\in (0,\eps_1)$ sufficiently small and $\widetilde{C}_\eps$ sufficiently large such that for any $\lambda\geq\widetilde{C}_\eps$ and any $\phi\in C^\infty_0(B_{\eps^{10}})$
\begin{equation}\label{Car1gen}
\lambda \|e^{-\lambda f_\eps}\phi\|_{L^2}+\|e^{-\lambda f_\eps}|D^1\phi|\,\|_{L^2}\leq \widetilde{C}_\eps\lambda^{-1/2}\|e^{-\lambda f_\eps}\,\square_{\g}\phi\|_{L^2}+\eps^{-6}\|e^{-\lambda f_\eps}V(\phi)\|_{L^2},
\end{equation}
where $f_\ep=\ln (h_\eps+e_\eps)$.
\end{proposition}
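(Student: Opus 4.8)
The plan is to establish \eqref{Car1gen} by the classical conjugation method: write $\psi=e^{-\lambda f_\eps}\phi$, so that proving the estimate is equivalent to a lower bound for the conjugated operator $P_\lambda=e^{-\lambda f_\eps}\square_\g e^{\lambda f_\eps}$ acting on $\psi\in C^\infty_0(B_{\eps^{10}})$. First I would record the explicit form of $P_\lambda$, splitting it into its formally self-adjoint and skew-adjoint parts, $P_\lambda=P_\lambda^{sym}+P_\lambda^{skew}+(\text{lower order})$, where the principal terms are $P_\lambda^{sym}\psi=\square_\g\psi+\lambda^2(\D^\alpha f_\eps\D_\alpha f_\eps)\psi$ and $P_\lambda^{skew}\psi=2\lambda\D^\alpha f_\eps\D_\alpha\psi+\lambda(\square_\g f_\eps)\psi$. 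The heart of the matter is the commutator/integration-by-parts identity $\|P_\lambda\psi\|_{L^2}^2\geq \|P_\lambda^{sym}\psi\|_{L^2}^2+\|P_\lambda^{skew}\psi\|_{L^2}^2+\langle[P_\lambda^{sym},P_\lambda^{skew}]\psi,\psi\rangle$, and the positivity of the commutator term (modulo acceptable errors) is exactly what the pseudo-convexity conditions of Definition \ref{psconvex} are designed to guarantee. Because $f_\eps=\ln(h_\eps+e_\eps)$, we have $\D f_\eps=(h_\eps+e_\eps)^{-1}\D(h_\eps+e_\eps)$ and $\D^2 f_\eps=(h_\eps+e_\eps)^{-1}\D^2(h_\eps+e_\eps)-(h_\eps+e_\eps)^{-2}\D(h_\eps+e_\eps)\otimes\D(h_\eps+e_\eps)$; this logarithmic choice is what converts \eqref{po3.2} into the non-degeneracy of the gradient term and \eqref{po3} into the Hessian positivity.

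Next I would carry out the commutator computation. Expanding $[\square_\g+\lambda^2 q,\,2\lambda\D^\alpha f_\eps\D_\alpha + \lambda(\square_\g f_\eps)]$ with $q=\D^\alpha f_\eps\D_\alpha f_\eps$, the leading term in $\lambda$ (order $\lambda^3$ after pairing) produces, for each frequency mode, the quadratic form $4\lambda^3\big(2\D^\alpha f_\eps\D^\beta f_\eps\,\xi_\alpha\xi_\beta - q\,\g^{\alpha\beta}\xi_\alpha\xi_\beta\big)$ on the "space" variables together with a Hessian contribution $-2\lambda^3 \D^\alpha f_\eps\D^\beta f_\eps\,\D_\alpha\D_\beta f_\eps$ times $|\psi|^2$ — these are precisely the two objects controlled by \eqref{po3.2} and the Hessian bound. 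To see the positivity one restricts attention to $\xi$ with $\g^{\alpha\beta}\xi_\alpha\xi_\beta$ small and $\D^\alpha f_\eps \xi_\alpha$ small (the "bad" cone); on the complement of that cone the principal symbol of $P_\lambda^{sym}$ or the gradient term of $P_\lambda^{skew}$ already dominates. On the bad cone, \eqref{po3} — with its Lagrange multiplier $\mu$ absorbing the $\mu\g_{\alpha\beta}$ freedom and the huge $\eps^{-2}$ weights multiplying the forbidden directions $V_\alpha\xi^\alpha$ and $\D_\alpha f_\eps\,\xi^\alpha$ — gives coercivity $\gtrsim\eps_1^2|\xi|^2$ modulo the $V(\phi)$ term, which is exactly why that term appears on the right of \eqref{Car1gen}. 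The terms of order $\lambda^2$ and $\lambda$ in the commutator, and all contributions of the negligible perturbation $e_\eps$ (controlled by \eqref{smallweight}) and of the coefficient bounds \eqref{boundgeom}, \eqref{po1}, are error terms absorbable by choosing first $\eps$ small (depending on $\eps_1$) and then $\lambda\geq\widetilde{C}_\eps$ large; the gain of one power of $\lambda$ is what yields the $\lambda^{-1/2}$ on the right-hand side after Cauchy--Schwarz ($\lambda\|\psi\|^2+\|D\psi\|^2\lesssim\lambda^{-1}\|P_\lambda\psi\|^2+\eps^{-12}\|e^{-\lambda f_\eps}V\phi\|^2$, then take square roots).

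Finally I would convert the weighted estimate on $\psi=e^{-\lambda f_\eps}\phi$ back to $\phi$: since $\D\psi=e^{-\lambda f_\eps}(\D\phi-\lambda\D f_\eps\,\phi)$ and $|\D f_\eps|\lesssim\eps^{-1}$ is bounded on $B_{\eps^{10}}$, the term $\|e^{-\lambda f_\eps}|D^1\phi|\|_{L^2}$ is recovered from $\|D\psi\|_{L^2}+\lambda\|\psi\|_{L^2}$ at the cost of the already-present $\lambda\|e^{-\lambda f_\eps}\phi\|_{L^2}$ on the left; the $V(\phi)$ term transforms the same way using $|V(h_\eps)(x_0)|\leq\eps^{10}$ together with the smallness of the ball, so that $e^{-\lambda f_\eps}V\psi$ and $e^{-\lambda f_\eps}V\phi$ differ by acceptable amounts. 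The main obstacle, and the technically delicate point, is the commutator positivity on the bad cone: one must handle the interplay between the purely "pointwise at $x_0$" nature of conditions \eqref{po3.2}--\eqref{po3} and the fact that the integration by parts produces quadratic forms evaluated throughout $B_{\eps^{10}}$ — this is why the weight's higher derivatives are constrained by \eqref{po5} with the scaling $\eps^j|D^j h_\eps|\lesssim\eps/\eps_1$, ensuring the pointwise inequalities at $x_0$ persist, up to $O(\eps/\eps_1)$ errors, on the whole tiny ball. Keeping track of all these errors uniformly in $\eps_1$ (so that the admissible support size depends only on $\eps_1$, as needed for the global argument in section \ref{lastsection}) is the bookkeeping crux of the proof.
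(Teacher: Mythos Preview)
Your overall strategy is correct and would work: conjugate, split, and extract positivity from the pseudo-convexity hypotheses. However, your route differs from the paper's in a meaningful way. You frame the argument via the square expansion $\|P_\lambda\psi\|^2\geq\|P^{sym}\psi\|^2+\|P^{skew}\psi\|^2+\langle[P^{sym},P^{skew}]\psi,\psi\rangle$ together with a microlocal ``bad cone'' analysis; this is essentially the H\"ormander/Fefferman--Phong approach the authors explicitly set aside (they call it ``optimal'' but opt for something more elementary). The paper instead uses a direct multiplier method: they pair $L_\eps\psi=\square_\g\psi+2\lambda W(\psi)+\lambda^2 G\psi$ (where $W=\D^\alpha f_\eps\D_\alpha$, $G=\D^\alpha f_\eps\D_\alpha f_\eps$) against the specific multiplier $2\lambda W(\psi)-2\lambda w\psi$ with $w=\mu_1-\tfrac12\square_\g f_\eps$, and reduce everything by integration-by-parts identities for the energy-momentum tensor to two \emph{pointwise} inequalities on $B_{\eps^{10}}$ --- one for the Hessian form (your commutator positivity) and one for the gradient term. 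The Lagrange multiplier $\mu$ from \eqref{po3} enters as $\mu_1=\mu\eps^{-1}$ in the zeroth-order part of the multiplier, not through any cone decomposition; because \eqref{po3} is assumed for \emph{all} vectors $X$ (not just characteristic ones), no microlocal splitting is needed. What your approach buys is conceptual alignment with the general Carleman literature; what the paper's buys is that the whole proof stays at the level of explicit pointwise algebra, with the passage from ``at $x_0$'' to ``on $B_{\eps^{10}}$'' handled by elementary $C^1$ bounds from \eqref{po5} rather than any symbol calculus.
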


\begin{proof}[Proof of Proposition \ref{Cargen}] As mentioned earlier, many Carleman estimates such as \eqref{Car1gen} are known,  for the particular case  when $V=0$, in more general settings. The  optimal proof,  see  chapter  28  of  \cite{Ho}, is based on 
the Fefferman--Phong inequality. Here we provide a self-contained, elementary, proof
which, though not optimal,  it is perfectly adequate  to our needs.

We will use the notation $\widetilde{C}$ to denote various constants in $[1,\infty)$ that may depend only on the constant $\eps_1$. We will use the notation $\widetilde{C}_\eps$ to denote various constants in $[1,\infty)$ that may depend only on $\eps$. We emphasize that these constants do not depend on the (very large) parameter $\lambda$ or the function $\phi$ in \eqref{Car1}. The value of $\eps$ will be fixed at the end of the proof and depends only on $\eps_1$. We divide the proof into several steps.  

{\bf Step 1.} Clearly, we may assume that $\phi$ is real-valued. Let $\psi=e^{-\lambda f_\eps}\phi\in C^\infty_0(B_{\eps^{10}})$. In terms of $\psi$, inequality \eqref{Car1gen}  takes the form,
\begin{equation}\label{Car3}
\lambda \|\psi \|_{L^2}+\|e^{-\lambda f_\eps}|D^1(e^{\lambda f_\eps}\psi)|\,\|_{L^2}\leq \widetilde{C}_\eps\lambda^{-1/2}\|e^{-\lambda f_\eps}\,\square_{\g}(e^{\lambda f_\eps}\psi)\|_{L^2}+\eps^{-6}\|e^{-\lambda f_\eps}V(e^{\lambda f_\eps}\psi)\|_{L^2}.
\end{equation}
We reduce the proof of \eqref{Car3}  by a sequence of steps. We claim first that for \eqref{Car3} to hold true,  it suffices to prove that there exist $\eps\ll 1$ and $\widetilde{C}_\eps\gg 1$ such that
\begin{equation}\label{Car4}
\lambda \|\psi \|_{L^2}+\|\,|D^1\psi|\,\|_{L^2}\leq \widetilde{C}_\eps\lambda^{-1/2}\|e^{-\lambda f_\eps}\,\square_{\g}(e^{\lambda f_\eps}\psi)\|_{L^2}+8\eps^{-4}\|V(\psi)\|_{L^2},
\end{equation}
for any $\lambda\geq\widetilde{C}_\eps$ and any $\psi \in C^\infty_0(B_{\eps^{10}})$. Indeed,  using \eqref{po5} and \eqref{smallweight} (thus $|V(h_\eps+e_\eps)(x)|\leq \widetilde{C}\eps^8$ for $x\in B_{\eps^{10}}$), the observation $h_\eps+e_\eps\in[\eps/2,2\eps]$ in $B_{\eps^{10}}$, and the definition $f_\eps=\ln (h_\eps+e_\eps)$, we have
\begin{equation*}
\begin{split}
&e^{-\lambda f_\eps}|D^1(e^{\lambda f_\eps}\psi)|\le |D^1\psi|+\widetilde{C}\eps^{-1}\lambda |\psi|;\\
&|e^{-\lambda f_\eps}V(e^{\lambda f_\eps}\psi)-V(\psi)|\leq\widetilde{C}\eps^7\lambda|\psi|.
\end{split}
\end{equation*}
Thus,  assuming  \eqref{Car4},  we  deduce,
\begin{equation*}
\begin{split}
&\lambda \|\psi \|_{L^2}+\|e^{-\lambda f_\eps}|\,D^1(e^{\lambda f_\eps}\psi)|\,\|_{L^2}\leq \lambda \|\psi \|_{L^2}+\||D^1\psi|\,\|_{L^2}+\widetilde{C}\eps^{-1}\la\|\psi\|_{L^2}\\
&\leq(1+\widetilde{C}\eps^{-1})  (\widetilde{C}_\eps\lambda^{-1/2}\|e^{-\lambda f_\eps}\,\square_{\g}(e^{\lambda f_\eps}\psi)\|_{L^2}+8\eps^{-4}\|V(\psi)\|_{L^2})\\
&\leq (1+\widetilde{C}\eps^{-1})  [\widetilde{C}_\eps\lambda^{-1/2}\|e^{-\lambda f_\eps}\,\square_{\g}(e^{\lambda f_\eps}\psi)\|_{L^2}+8\eps^{-4}\|e^{-\lambda f_\eps}V(e^{\lambda f_\eps}\psi)\|_{L^2}+8\widetilde{C}\eps^3\lambda\|\psi\|_{L^2}]\\
&\leq  \widetilde{C}_\eps\lambda^{-1/2}\|e^{-\lambda f_\eps}\,\square_{\g}(e^{\lambda f_\eps}\psi)\|_{L^2}+\widetilde{C}\eps^{-5}\|e^{-\lambda f_\eps}V(e^{\lambda f_\eps}\psi)\|_{L^2}+\widetilde{C}\eps^2\lambda\|\psi\|_{L^2},
\end{split}
\end{equation*}
and the inequality \eqref{Car3} follows for $\eps\ll\widetilde{C}^{-1}$.

{\bf Step 2.} We write 
\begin{equation}\label{Car5}
\begin{split}
e^{-\lambda f_\eps}\,\square_{\g}(e^{\lambda f_\eps}\psi)&=\square_{\g}\psi+2\lambda\D^\al(f_\eps)\D_\al\psi+\lambda^2\D_\al(f_\eps)\D^\al(f_\eps)\cdot \psi+\lambda \square_{\g}(f_\eps)\cdot \psi,\\
&=L_\ep \psi+\lambda \square_{\g}(f_\eps)\cdot \psi,
\end{split}
\end{equation}
with $L_\ep := \square_{\g}+2\lambda\D^\al(f_\eps)\D_\al+\lambda^2\D_\al(f_\eps)\D^\al(f_\eps)$,
and show that  \eqref{Car4} follows from,
\begin{equation}\label{Car6}
\begin{split}
\lambda \|\psi \|_{L^2}+\|\,|D^1\psi|\,\|_{L^2} & \leq \widetilde{C}_\eps\lambda^{-1/2}\|L_\ep\psi\|_{L^2}+4\eps^{-4}\|V(\psi)\|_{L^2}
\end{split}
\end{equation}
for any $\lambda\geq\widetilde{C}_\eps$ and any $\psi \in C^\infty_0(B_{\eps^{10}})$. Indeed,
 \beaa
  \|e^{-\lambda f_\eps}\,\square_{\g}(e^{\lambda f_\eps}\psi)\|_{L^2}\ge \|L_\ep\psi\|_{L^2}-\la \|\square_{\g}(f_\eps)\psi\|_{L^2}
 \eeaa
Observe that, according to \eqref{po5},  we have
 $|\square_{\g}(f_\eps)|\leq\widetilde{C}_\eps$ on $B_{\eps^{10}}$.
 Thus, if \eqref{Car6} holds,
 \begin{equation*}
 \begin{split}
 \lambda \|\psi \|_{L^2}+\|\,|D^1\psi|\,\|_{L^2}&\leq\widetilde{C}_\eps\lambda^{-1/2}\big(    \|e^{-\lambda f_\eps}\,\square_{\g}(e^{\lambda f_\eps}\psi)\|_{L^2}+
 \la \|\square_{\g}(f_\eps)\psi\|_{L^2}\big)+4\eps^{-4}\|V(\psi)\|_{L^2}\\
 &\leq\widetilde{C}_\eps\lambda^{-1/2}   \|e^{-\lambda f_\eps}\,\square_{\g}(e^{\lambda f_\eps}\psi)\|_{L^2}+ \widetilde{C}_\eps^2\lambda^{1/2} \|\psi\|_{L^2}+4\eps^{-4}\|V(\psi)\|_{L^2}
 \end{split}
 \end{equation*}
 or,
 \beaa
 ( \lambda   - \widetilde{C}_\eps^2\lambda^{1/2})  \|\psi \|_{L^2}+\|\,|D^1\psi|\,\|_{L^2}\leq \widetilde{C}_\eps\lambda^{-1/2}    \|e^{-\lambda f_\eps}\,\square_{\g}(e^{\lambda f_\eps}\psi)\|_{L^2}+4\eps^{-4}\|V(\psi)\|_{L^2}
 \eeaa
 from which we easily derive \eqref{Car4},    by  redefining the constant  $ \widetilde{C}_\eps$ and taking $\la$ sufficiently large relative to  $ \widetilde{C}_\eps$.

{\bf Step 3.} We  write $L_\ep$  in the form, 
 \bea
 L_\ep &=&\square_\g+2\lambda W  +\la^2 G\nn\\
W&=&\D^\al(f_\eps)\D_\al,\qquad G=\D_\al(f_\eps)\D^\al(f_\eps). \label{Car8}
 \eea
We observe that   inequality \eqref{Car6}  follows  as a consequence of  the following statement: there exist  $\eps\ll 1$, $\mu_1\in[-\eps^{-3/2},\eps^{-3/2}]$, and $\widetilde{C}_\eps\gg1$ such that
\begin{equation}\label{Car7}
\begin{split}
2\lambda&\eps^{-8}\|V(\psi)\|_{L^2}^2+\int_{B_{\eps^{10}}}L_\ep \psi\cdot(2\lambda W(\psi)-2\lambda w\psi)\,d\mu\\
&\geq\widetilde{C}_\eps^{-1}\|\lambda W(\psi)-\lambda w\psi\|^2_{L^2}+\lambda^3\|\psi\|_{L^2}^2+\lambda \|\,|D^1\psi|\,\|^2_{L^2},
\end{split}
\end{equation}
for any $\lambda\geq\widetilde{C}_\eps$ and any $\psi \in C^\infty_0(B_{\eps^{10}})$, where
\begin{equation}\label{wdefin}
w=\mu_1-(1/2)\square_\g f_\eps.
\end{equation}
The reason for choosing  $w$ of this form will become clear in Step 6. Assuming that \eqref{Car7} holds true and  denoting by $\mbox{RHS}$ the right-hand side of  that inequality, we have
\begin{equation*}
\begin{split}
 \mbox{RHS}&\leq\int_{B_{\eps^{10}}}\widetilde{C}_\eps^{1/2} L_\ep \psi  \cdot \widetilde{C}_\eps^{-1/2}(2\lambda W(\psi)-2\lambda w\psi)\,d\mu+2\lambda\eps^{-8}\|V(\psi)\|_{L^2}^2\\
 &\le\widetilde{C}_\eps^{-1}\|\lambda W(\psi)-\lambda w\psi\|^2_{L^2}+\widetilde{C}_\eps \|L_\eps\psi\|^2_{L^2}+2\lambda\eps^{-8}\|V(\psi)\|_{L^2}^2.
\end{split}
\end{equation*}
Hence
\begin{equation*}
\lambda^3\|\psi\|_{L^2}^2+\lambda \|\,|D^1\psi|\,\|^2_{L^2}\leq\widetilde{C}_\eps\|L_\eps\psi\|^2_{L^2}+2\lambda\eps^{-8}\|V(\psi)\|_{L^2}^2
\end{equation*}
from which \eqref{Car6} follows easily.

{\bf Step 4.} We claim now that inequality  \eqref{Car7} is a consequence of the inequality
\begin{equation}\label{Car9}
\begin{split}
2\lambda&\eps^{-8}\|V(\psi)\|_{L^2}^2+\int_{B_{\eps^{10}}}\big(\square_{\g}\psi+\lambda^2G\psi\big)\cdot (2\lambda W(\psi)-2\lambda w\psi)\,d\mu+2\lambda^2\|W(\psi)\|_{L^2}^2\\
&\geq 2\lambda^3\|\psi\|_{L^2}^2+2\lambda \|\,|D^1\psi|\,\|^2_{L^2}.
\end{split}
\end{equation}
To prove that \eqref{Car9} implies \eqref{Car7} we write
\beaa
L_\ep \psi=\square_\g\psi+\lambda^2G\cdot \psi +(\la W(\psi)- \la w\psi)+(\lambda W(\psi)+\lambda w\psi),
\eeaa
Thus, assuming \eqref{Car9},
\begin{equation*}
\begin{split}
&2\lambda\eps^{-8}\|V(\psi)\|_{L^2}^2+\int_{B_{\eps^{10}}}L_\ep \psi\cdot(2\lambda W(\psi)-2\lambda w\psi)\,d\mu\\
&=2\lambda\eps^{-8}\|V(\psi)\|_{L^2}^2+\int_{B_{\eps^{10}}}\big(\square_{\g}\psi+\lambda^2G\psi\big)\cdot (2\lambda W(\psi)-2\lambda w\psi)\,d\mu\\
&+2\|\lambda W(\psi)-\lambda w\psi\|^2_{L^2}+2\lambda^2(\|W(\psi)\|_{L^2}^2-\|w\psi\|_{L^2}^2)\\
&\geq 2\lambda^3\|\psi\|_{L^2}^2+2\lambda \|\,|D^1\psi|\,\|^2_{L^2}+2\|\lambda W(\psi)-\lambda w\psi\|^2_{L^2}-2\lambda^2\|w\psi\|_{L^2}^2\\
&\geq 2\|\lambda W(\psi)-\lambda w\psi\|^2_{L^2}+\lambda^3\|\psi\|_{L^2}^2+2\lambda \|\,|D^1\psi|\,\|^2_{L^2},
\end{split}
\end{equation*}
if $\widetilde{C}_\eps$ is sufficiently large and $\lambda\geq \widetilde{C}_\eps$, which gives \eqref{Car7}. In the last inequality we use the bound $|w|\leq\widetilde{C}\eps^{-2}$ (see \eqref{wdefin}) thus $2\lambda^3\|\psi\|_{L^2}^2-2\lambda^2\|w\psi\|_{L^2}^2\geq \lambda^3\|\psi\|_{L^2}^2$ for $\lambda$ sufficiently large.

{\bf Step 5.} Let $Q_{\al\be}$ denote the enery-momentum tensor of $\square_{\g}$, i.e.
\begin{equation*}
Q_{\al\be}=\D_\al\psi\D_\be\psi-\frac{1}{2}g_{\al\be}(\D^\mu\psi\D_\mu \psi).
\end{equation*}
Direct computations show that
\begin{equation}\label{iden1}
\begin{split}
\square_{\g}\psi\cdot(2W(\psi)-2 w\psi)&=\D^\al(2W^\be Q_{\al\be}-2w\psi\cdot\D_\al\psi+\D_\al w\cdot \psi^2)\\
&-2\D^\al W^\be\cdot Q_{\al\be}+2w\D^\al\psi\cdot \D_\al\psi-\square_{\g}w\cdot \psi^2,
\end{split}
\end{equation}
and
\begin{equation}\label{iden2}
G\psi\cdot(2W(\psi)-2 w\psi)=\D^\al(\psi^2G\cdot W_\al)-\psi^2(2wG+W(G)+G\cdot \D^\al
W_\al).
\end{equation}
Since $\psi\in C^\infty_0(B_{\eps^{10}})$ we integrate by parts to conclude that\begin{equation*}
\begin{split}
&\int_{B_{\eps^{10}}}\big(\square_{\g}\psi+\lambda^2G\cdot \psi\big)\cdot(2W(\psi)-2 w\psi)\,d\mu=\int_{B_{\eps^{10}}}2w\D^\al\psi\cdot \D_\al\psi-2\D^\al W^\be\cdot Q_{\al\be}\,d\mu\\
&+\lambda ^2\int_{B_{\eps^{10}}}\psi^2(-2wG-W(G)-G\cdot \D^\al W_\al-\lambda^{-2}\square_\g w)d\mu.
\end{split}
\end{equation*}
Thus, after dividing by $\lambda$, for \eqref{Car9} it suffices to prove that the pointwise bounds
\begin{equation}\label{Car20}
|D^1\psi|^2\leq \eps^{-8}|V(\psi)|^2+\lambda|W(\psi)|^2+(w\D^\al\psi\cdot \D_\al\psi-\D^\al W^\be\cdot Q_{\al\be}),
\end{equation}
and
\begin{equation}\label{Car21}
2\leq-2wG-W(G)-G\cdot \D^\al W_\al-\lambda^{-2}\square_\g w,
\end{equation}
hold on $B_{\eps^{10}}$. 

{\bf Step 6.} Recall that $w=\mu_1-(1/2)\square_\g f_\eps$, $W^\al=\D^\al(f_\eps)$ and $G=\D_\al(f_\eps)\D^\al(f_\eps)$. Observe that
\begin{equation*}
w\D^\al\psi\cdot \D_\al\psi-\D^\al W^\be\cdot Q_{\al\be}=(\D^\al\psi\cdot \D^\be\psi)[(w+(1/2)\square_\g f_\eps)\g_{\al\be}-\D_\al\D_\be f_\eps]
\end{equation*}
and
\begin{equation*}
-2wG-W(G)-G\cdot \D^\al W_\al=-G(2w+\square_\g f_\eps)-2\D^\al f_\eps\D^\be f_\eps \cdot \D_\al \D_\be f_\eps.
\end{equation*}
Thus \eqref{Car20} and \eqref{Car21} are equivalent to the pointwise inequalities
\begin{equation}\label{Car22}
|D^1\psi|^2\leq\eps^{-8}|V(\psi)|^2+\lambda|\D_\alpha f_\eps\cdot\D^\alpha\psi|^2+ (\D^\al\psi\cdot \D^\be\psi)(\mu_1\g_{\al\be}-\D_\al\D_\be f_\eps),
\end{equation}
and
\begin{equation}\label{Car23}
1\leq-\mu_1 G-\D^\al f_\eps\D^\be f_\eps \cdot \D_\al \D_\be f_\eps+(1/4)\lambda^{-2}\square_\g^2(f_\eps)
\end{equation}
on $B_{\eps^{10}}$, for some $\eps\ll 1$ and $\lambda$ sufficiently large.

Let $\widetilde{h}_\eps=h_\eps+e_\eps$ and $\widetilde{H}_\eps=\D^\alpha \widetilde{h}_\eps\D_\alpha \widetilde{h}_\eps$. We use now the definition $f_\eps=\ln \widetilde{h}_\eps$. Since $\widetilde{h}_\eps\in[\eps/2,2\eps]$, for \eqref{Car22} and \eqref{Car23} it suffices to prove that there are constants $\eps\ll 1$ and $\mu_1\in[-\eps^{-3/2},\eps^{-3/2}]$ such that the pointwise bounds
\begin{equation}\label{Car24}
|D^1\psi|^2\leq\eps^{-8}|V(\psi)|^2+\eps^{-8}|\D_\alpha \widetilde{h}_\eps\cdot\D^\alpha\psi|^2+ (\D^\al\psi\cdot \D^\be\psi)(\mu_1\g_{\al\be}-\widetilde{h}_\eps^{-1}\D_\al\D_\be \widetilde{h}_\eps),
\end{equation}
and
\begin{equation}\label{Car25}
2\leq \widetilde{h}_\eps^{-4}\widetilde{H}_\eps^2-\widetilde{h}_\eps^{-3}\D^\al \widetilde{h}_\eps\D^\be \widetilde{h}_\eps\D_\al\D_\be \widetilde{h}_\eps-\widetilde{h}_\eps^{-2}\mu_1 \widetilde{H}_\eps
\end{equation} 
hold on $B_{\eps^{10}}$ for any $\psi\in C^\infty_0(B_{\eps^{10}})$. Indeed, the bound \eqref{Car22} follows from \eqref{Car24} if $\lambda\geq 2\eps^{-7}$. The bound \eqref{Car23} follows from \eqref{Car25} if $|\lambda^{-2}\square_\g^2(f_\eps)|\leq 1$, which holds true if $\lambda\geq \widetilde{C}\eps^{-2}$.
\medskip

{\bf Step  7.} We prove now that the bound \eqref{Car25} holds for any $\mu_1\in[-\eps^{-3/2},\eps^{-3/2}]$. We start from the assumption \eqref{po3.2}
\begin{equation*}
\D^\alpha h_\eps(x_0)\D^\be h_\eps(x_0)(\D_\al h_\eps\D_\be h_\eps-\eps\D_\al\D_\be h_\eps)(x_0)\geq\eps_1^2.
\end{equation*}
For $x\in B_{\eps^{10}}$ let $$K(x)=\D^\alpha h_\eps(x)\D^\be h_\eps(x)(\D_\al h_\eps\D_\be h_\eps-h_\eps\cdot\D_\al\D_\be h_\eps)(x).$$ It follows from the second  bound in \eqref{po5} that $|D^1K(x)|\leq\widetilde{C}\eps^{-1}$, thus, since $\eps=h_\eps(x_0)$, $K(x)\geq \eps_1^2/2$ for any $x\in B_{\eps^{10}}$ if $\eps$ is sufficiently small. 

Let $$\widetilde{K}(x)=\D^\alpha \widetilde{h}_\eps(x)\D^\be \widetilde{h}_\eps(x)(\D_\al \widetilde{h}_\eps\D_\be \widetilde{h}_\eps-\widetilde{h}_\eps\cdot\D_\al\D_\be \widetilde{h}_\eps)(x).$$ It follows from the assumption \eqref{smallweight} on $e_\eps$ and the assumption \eqref{po5} that $|\widetilde{K}(x)-K(x)|\leq \widetilde{C}\eps$, thus $\widetilde{K}(x)\geq \eps_1^2/4$ on $B_{\eps^{10}}$, provided that $\eps$ is sufficiently small.  By multiplying with $\widetilde{h}_\eps^{-4}$ we have
\begin{equation*}
\widetilde{h}_\eps^{-4}\eps_1^2/4\leq \widetilde{h}_\eps^{-4}\widetilde{K}(x)= \widetilde{h}_\eps^{-4}\widetilde{H}_\eps^2-\widetilde{h}_\eps^{-3}\D^\alpha \widetilde{h}_\eps\D^\be \widetilde{h}_\eps\cdot\D_\al\D_\be \widetilde{h}_\eps
\end{equation*}
on $B_{\eps^{10}}$.
The bound  \eqref{Car25} follows for $\eps$ small enough since $\widetilde{h}_\eps(x)\in[\eps/2,2\eps]$ on  $B_{\eps^{10}}$ and $|\widetilde{h}_\eps^{-2}\mu_1 \widetilde{H}_\eps|\leq\widetilde{C}|\mu_1|\eps^{-2}\leq \widetilde{C}\eps^{-7/2}$.

{\bf{Step 8.}} We prove now the bound \eqref{Car24}. We start from the assumption \eqref{po3}
\begin{equation}\label{po49}
\begin{split}
&\eps_1^2[(X^1)^2+(X^2)^2+(X^3)^2+(X^4)^2]\\
&\leq X^\al X^\be(\mu\g_{\al\be}-\D_\al\D_\be h_\eps)(x_0)+\eps^{-2}(|X^\al V_\al(x_0)|^2+|X^\al\D_\al h_\eps(x_0)|^2),
\end{split}
\end{equation}
for some $\mu\in[-\eps_1^{-1},\eps_1^{-1}]$ and all vectors $X=X^\alpha\partial_\alpha\in\mathbf{T}_{x_0}(\mathbf{M})$. Let
\begin{equation*}
K_{\al\be}=\mu\eps^{-1}h_\eps\g_{\al\be}-\D_\al\D_\be h_\eps+\eps^{-2}V_\al V_\be+\eps^{-2}\D_\al h_\eps \D_\be h_\eps.
\end{equation*}
We work in the local frame $\partial_1,\partial_2,\partial_3,\partial_4$. In view of \eqref{po5},
\begin{equation*}
|D^1K_{\al\be}(x)|\leq \widetilde{C}\eps^{-3}
\end{equation*} 
for any $\al,\be=1,2,3,4$ and $x\in B_{\eps^{10}}$. It follows from \eqref{po49} and $\eps^{-1}h_\eps(x_0)=1$ that
\begin{equation}\label{po50}
\sum_{\al,\be=1}^4X^\al X^\be K_{\al\be}(x)\geq (\eps_1^2/2)[(X^1)^2+(X^2)^2+(X^3)^2+(X^4)^2]
\end{equation}
for any $x\in B_{\eps^{10}}$ and $(X^1,X^2,X^3,X^4)\in\mathbb{R}^4$, provided that $\eps$ is sufficiently small. Let
\begin{equation*}
\widetilde{K}_{\al\be}=\mu\eps^{-1}\widetilde{h}_\eps\g_{\al\be}-\D_\al\D_\be \widetilde{h}_\eps+\eps^{-2}V_\al V_\be+\eps^{-2}\D_\al \widetilde{h}_\eps \D_\be \widetilde{h}_\eps,
\end{equation*}
and observe that, in view of \eqref{smallweight} and \eqref{po5}, $|\widetilde{K}_{\al\be}(x)-K_{\al\be}(x)|\leq \widetilde{C}\eps^5$ for any $\al,\be=1,2,3,4$ and $x\in B_{\eps^{10}}$. Thus, using \eqref{po50}, if $\eps$ is sufficiently small then
\begin{equation*}
\sum_{\al,\be=1}^4X^\al X^\be \widetilde{K}_{\al\be}(x)\geq (\eps_1^2/4)[(X^1)^2+(X^2)^2+(X^3)^2+(X^4)^2]
\end{equation*}
for any $x\in B_{\eps^{10}}$ and $(X^1,X^2,X^3,X^4)\in\mathbb{R}^4$. We multiply this by $\widetilde{h}_\eps^{-1}\in[\eps^{-1}/2,2\eps^{-1}]$ and use the definition of $\widetilde{K}_{\al\be}$ to conclude that
\begin{equation*}
\begin{split}
\sum_{\al,\be=1}^4X^\al X^\be(\mu\eps^{-1}\g_{\al\be}-\widetilde{h}_\eps^{-1}\D_\al\D_\be \widetilde{h}_\eps)+2\eps^{-3}\big|\sum_{\al=1}^4X^\alpha V_\al\big|^2+2\eps^{-3}\big|\sum_{\al=1}^4X^\alpha \D_\al h_\eps\big|^2\\
\geq \widetilde{h}_\eps^{-1}(\eps_1^2/4)[(X^1)^2+(X^2)^2+(X^3)^2+(X^4)^2].
\end{split}
\end{equation*}
The bound \eqref{Car24} follows for $\eps$ sufficiently small, with $\mu_1=\mu\eps^{-1}\in[-(\eps\eps_1)^{-1},(\eps\eps_1)^{-1}]$. This completes the proof of the proposition.
\end{proof}

\section{The Mars-Simon tensor $\Ss$}\label{MarsSimon}

\subsection{Preliminaries}
Assume $(\mathbf{N},\mathbf{g})$ is a smooth vacuum Einstein spacetime of dimension $4$. Given an antisymmetric  2-form, real or complex valued,   $G_{\a\b}=-G_{\b\a}$ we define its Hodge dual,
\beaa
\dual G_{\a\b}=\frac 12 \in_{\a\b}^{\quad \mu\nu} G_{\mu\nu}.
\eeaa
Observe that $ \dual(\dual G)=-G$. This follows easily from the identity,
\begin{equation*}
\in_{\al\be\rh\si}\in^{\mu\nu\rh\si}=-2\delta_\al^\mu\,\wedge\de_\be^\nu=-2 (\delta_\al^\mu\,\de_\be^\nu-\delta_\al^\nu\,\delta_\be^\mu),
\end{equation*}
Given $2$ such forms $F,G$ we have the identity
\bea
F_{\mu\si}{G_{\nu}}^{\sigma}-(\dual F)_{\nu\si}{(\dual G)_{\mu}}^{\sigma}&=&\frac{1}{2}\g_{\mu\nu}F_{\al\be}G^{\al\be}
\label{b8.2}
\eea
which follows easily from the identity
\begin{equation*}
\begin{split}
\in^{\al_2\al_3\al_4\al_1}&\in_{\be_2\be_3\be_4\al_1}=-
\delta_{\be_2}^{\al_2}\wedge\delta_{\be_3}^{\al_3}\wedge\delta_{\beta_4}^{\al_4}\\
&=\delta_{\be_4}^{\al_2}\delta_{\be_3}^{\al_3}\delta_{\beta_2}^{\al_4}+\delta_{\be_2}^{\al_2}\delta_{\be_4}^{\al_3}\delta_{\beta_3}^{\al_4}+\delta_{\be_3}^{\al_2}\delta_{\be_2}^{\al_3}\delta_{\beta_4}^{\al_4}-\delta_{\be_2}^{\al_2}\delta_{\be_3}^{\al_3}\delta_{\beta_4}^{\al_4}-\delta_{\be_3}^{\al_2}\delta_{\be_4}^{\al_3}\delta_{\beta_2}^{\al_4}-\delta_{\be_4}^{\al_2}\delta_{\be_2}^{\al_3}\delta_{\beta_3}^{\al_4}.
\end{split}
\end{equation*}
An antisymmetric 2-form $\FF$ is called self-dual
if,
\beaa
\dual \FF=-i\FF.
\eeaa
It follows easily form \eqref{b8.2} that if $\FF, \GG$ are two self-dual $2$-forms then
\bea
\FF_{\mu\si}{\GG_{\nu}}^{\sigma} +\FF_{\nu\si}{\GG_{\mu}}^{\sigma}=\frac{1}{2}\g_{\mu\nu}\FF_{\al\be}\GG^{\al\be}.
\label{b8.2'} 
\eea
We also have, for any self-dual $\FF$,
\bea
\FF_{\mu\si}( \Re \FF)_{\nu}^{\,\,\,\,\si}=\FF_{\nu\si}( \Re \FF)_{\mu}^{\,\,\,\,\si}\label{ReF-FF}
\eea
where $\Re\FF$ denotes the real part of $\FF$.

A tensor $W\in\mathbf{T}_4^0(\mathbf{N})$ will be called partially antisymmetric if
\begin{equation}
W_{\al\be\mu\nu}=-W_{\be\al\mu\nu}=-W_{\al\be\nu\mu}. \label{eq:Weyl1}
\end{equation}
Given such a tensor-field we define its Hodge dual
\beaa
{\dual W}_{\al\be\ga\de}&=&\frac{1}{2}{\in_{\ga\de}}^{\rh\si}W_{\al\be\rh\si}.
\eeaa  
As before, ${}^\ast(\dual W)=-W$ for any partially antisymmetric tensor $W$. A complex partially antisymmetric tensor $\UU$ of rank 4 is called self-dual if $\dual\UU=(-i)\UU$. The following
extension of 
identity \eqref{b8.2'}  holds for such tensors, 
 \begin{equation} \label{le:useful}
 \FF_{\mu}^{\,\,\,\si}\UU_{\a\b\nu\sigma} +\FF_{\nu}^{\,\,\,\si} \UU_{\a\b\mu\sigma}=
 \frac{1}{2}\g_{\mu\nu}\FF^{\ga\de}\UU_{\a\b\ga\de}.
 \end{equation}

A partially antisymmetric tensor of rank 4 is called a Weyl field if
\begin{equation}\label{Weyl12}
\begin{cases}
&W_{\al\be\mu\nu}=-W_{\be\al\mu\nu}=-W_{\al\be\nu\mu}=W_{\mu\nu\al\be};\\
&W_{\a[\b\mu\nu]}=W_{\a\b\mu\nu}+W_{\a\mu\nu\be}+W_{\a\nu\be\mu}=0;\\
&\g^{\be\nu}W_{\al\be\mu\nu}=0.
\end{cases}
\end{equation}
It is well-known that if $W$ is a Weyl field then ${\dual W}$ is also a Weyl field. In particular
\begin{equation}\label{Weylsym}
{\dual W}_{\al\be\mu\nu}={\dual W}_{\mu\nu\al\be}=\frac{1}{2}\in_{\al\be}^{\quad\rh\si}W_{\mu\nu\rh\si}.
\end{equation}
The Riemann curvature tensor $R$ of an Einstein vacuum spacetime provides
an example of a Weyl field. Moreover  $R$ verifies the
Bianchi identities,
\beaa
\D_{[\si} R_{\ga\de]\a\b}=0
\eeaa
In this paper  we will have to consider Weyl fields $W$ which verify equations of the form
\bea
\D^{\a}\, W_{\a\b\ga\de}=J_{\b\ga\de}\label{eq:Bianchi1}
\eea
for some Weyl current $J\in \mathbf{T}_3^0(\mathbf{N})$. It follows from \eqref{eq:Bianchi1} that
\begin{equation}\label{eq:Bianchi2}
\D^{\a}\dual W_{\a\b\ga\de}=\dual J_{\b\ga\de}=\frac{1}{2}{\in_{\ga\de}}^{\rho\si}J_{\be\rho\si}.
\end{equation}
The following proposition follows immediately from definitions and \eqref{Weylsym}. 

\begin{proposition}\label{prop:Bianchi-dual}
If $W$ is a Weyl field and \eqref{eq:Bianchi1} is satisfied then
\bea
\D_{[\si}W_{\ga\de]\a\b}=\in_{\mu\si\ga\de}{{\dual J}^\mu}_{\a\b}.
\eea
\end{proposition}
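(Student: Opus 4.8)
The plan is to reduce the claimed identity to the already-established relations \eqref{eq:Bianchi1}, \eqref{eq:Bianchi2}, and the symmetry \eqref{Weylsym}, by taking the Hodge dual of the divergence equation in the \emph{first} pair of indices rather than the last. The left-hand side $\D_{[\si}W_{\ga\de]\a\b}$ is the fully antisymmetrized covariant derivative of $W$ in the three indices $\si,\ga,\de$, and its natural partner under Hodge duality is exactly the divergence $\D^\mu W_{\mu\a\b\,\cdot}$ contracted into the volume form.

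First I would write the antisymmetrization over three indices in $\RRR^{1+3}$ as a single Hodge-dual contraction: for any partially antisymmetric tensor there is the elementary identity $\D_{[\si}W_{\ga\de]\a\b}=\frac{1}{6}\,\in_{\si\ga\de}{}^{\mu}\,\in_{\mu}{}^{\rho\la\tau}\,\D_{\rho}W_{\la\tau\a\b}/2$ — more cleanly, one has $\in_{\mu\si\ga\de}\,\D_{[\si}W_{\ga\de]\a\b}$ related to $\D^\la \,{}^\ast W$-type terms by the contraction formula for two $\in$'s quoted after \eqref{b8.2}. Concretely, the cleanest route is: contract the desired identity with $\in^{\mu\si\ga\de}$, use the $\in\in$ expansion (the six-term $\delta$ identity already displayed in the excerpt) on the right-hand side to collapse $\in_{\mu\si\ga\de}\in^{\nu}{}_{\a\b}{}^{\,\mu}$-type products, and thereby check that the proposed formula is equivalent, after raising/lowering, to the statement that $\in^{\mu\si\ga\de}\D_\si W_{\ga\de\a\b}$ equals a fixed multiple of $\D^\mu\,{}^\ast W_{\cdot\,\a\b}$ paired appropriately — which is precisely \eqref{eq:Bianchi2} rewritten using \eqref{Weylsym}, i.e. $\D^\a\,{}^\ast W_{\a\be\ga\de}=\frac12\in_{\ga\de}{}^{\rho\si}J_{\be\rho\si}$.

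The key steps, in order, are: (i) use the Bianchi-type algebraic symmetry $W_{\al\be\mu\nu}=W_{\mu\nu\al\be}$ together with \eqref{Weylsym} to move the Hodge dual from the $(\ga\de)$-pair to the $(\si\ga\de)$-antisymmetrization; (ii) invoke \eqref{eq:Bianchi2}, which gives $\D^\a\,{}^\ast W_{\a\be\ga\de}=\,{}^\ast J_{\be\ga\de}=\frac12\in_{\ga\de}{}^{\rho\si}J_{\be\rho\si}$; (iii) contract $\in_{\mu\si\ga\de}$ with this and apply the $\in\in$ contraction identity to recognize the result as $\D_{[\si}W_{\ga\de]\a\b}$; (iv) carefully track the combinatorial constant (the $1/2$ in the dual, the $1/6$ or $1/3!$ in the antisymmetrization, and the $-2$ in $\in\in$) so that the final coefficient is exactly $1$, matching the stated $\D_{[\si}W_{\ga\de]\a\b}=\in_{\mu\si\ga\de}\,{}^\ast J^{\mu}{}_{\a\b}$.

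I expect the main obstacle to be purely bookkeeping: getting every numerical factor and every index placement right in the $\in$–$\in$ contractions, including the sign from the Lorentzian signature (which produces the $-2$, rather than $+2$, in $\in_{\al\be\rho\si}\in^{\mu\nu\rho\si}=-2\delta^\mu_{[\al}\delta^\nu_{\be]}$). There is no analytic or geometric difficulty — the statement is an identity valid pointwise for any Weyl field satisfying \eqref{eq:Bianchi1} — so once the contraction conventions are pinned down the proof is a short computation, which is presumably why the authors say it "follows immediately from definitions and \eqref{Weylsym}."
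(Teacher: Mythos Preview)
Your plan is correct and is exactly the computation the paper has in mind when it says the result ``follows immediately from definitions and \eqref{Weylsym}'': dualize the three-index antisymmetrization $\D_{[\si}W_{\ga\de]\a\b}$ via contraction with $\in^{\mu\si\ga\de}$, use \eqref{Weylsym} to recognize the result as $\D_\si\,{}^\ast W^{\si\mu}{}_{\a\b}$, and invoke \eqref{eq:Bianchi2}. One bookkeeping caution: in this paper the bracket $[\,\cdot\,]$ denotes the \emph{unnormalized} cyclic sum (see the line defining $W_{\a[\b\mu\nu]}$ in \eqref{Weyl12}), not the $1/3!$-weighted antisymmetrization, so when you track constants in step (iv) the relevant factors are $3$ from the cyclic sum, $-6$ from $\in^{\mu\si\ga\de}\in_{\nu\si\ga\de}=-6\de^\mu_\nu$, and $2$ from the definition of $\dual W$, which combine to give the coefficient $1$ with no leftover $1/6$.
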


\subsection{Killing vector-fields and the Ernst potential}

We assume now that $\mathbf{T}$ is a Killing vector-field  on $\mathbf{N}$, i.e.
\bea
\D_\al\T_\be+\D_\be\T_\al=0\label{xi-killing}
\eea
We define the 2-form,
\beaa
F_{\al\be}=\D_\al\T_\be
\eeaa
and recall that $F$ verifies the Ricci identity
\bea
\D_\mu F_{\al\be}=\T^\nu R_{\nu\mu\al\be},\label{eq:Ricci}
\eea
with $R$ the curvature tensor of the spacetime.
In view of the  first Bianchi identity for $R$ we infer that,
\bea
\D_{[\mu}F_{\a\b]}=\D_\mu F_{\al\be}+\D_\a F_{\be\mu}+\D_\b F_{\mu\a}=0.\label{eq:Max1}
\eea
Also, since  we are in an Einstein vacuum spacetime,
\bea
\D^\b F_{\a\b}=0.\label{eq:Max2}
\eea
 
We now  define the complex valued 2-form,
\begin{equation}\label{s1}
\FF_{\al\be}=F_{\al\be}+i {\dual F}_{\al\be}.
\end{equation}
Clearly, $\FF$ is self-dual solution of the Maxwell equations,  i.e. $ \FF\dual=(-i) \FF$ 
and
\bea
\D_{[\mu} \FF_{\al\be]}=0\label{eq:Max1'},\quad \D^\b \FF_{\a\b}=0.\label{eq:Max2'}
\eea
We define also the Ernst $1$-form associated to the Killing vector-field  $\T$,
\bea
\si_\mu&=&2\T^\a\FF_{\a\mu}=\D_\mu(-\T^\al \T_\al)- i\in_{\mu\b\ga\de}\T^\b \D^\ga\T^\de.
\eea
It is easy to check (see, for example, \cite[section 3]{Ma2}) that 
\begin{equation}\label{Ernst1}
\begin{cases}
&\D_\mu\si_\nu-\D_\nu\si_\mu=0;\\
&\D^\mu\si_\mu=-\FF^2;\\
&\si_\mu\si^\mu=g(\T,\T) \FF^2.
\end{cases}
\end{equation}
Since $d(\si_\mu dx^\mu)=0$ and the set $\widetilde{\mathbf{M}}$ is simply connected we infer that there exists a function $\si:\widetilde{\mathbf{M}}\to\mathbb{C}$, called the Ernst potential, such that $ \si_\mu =\D_\mu\si$, $\si\to 1$ at infinity along $\Sigma_0$, and $\Re\si=-\T^\al\T_\al$.

\subsection{The Mars-Simon tensor}
In the rest of this section we assume that $\mathbf{N}\subseteq\widetilde{\mathbf{M}}$ is an open set with the property that
\begin{equation}\label{s0}
1-\sigma\neq 0\text{ in }\mathbf{N}.
\end{equation}
We define 
 the complex-valued self-dual Weyl tensor
\begin{equation}\label{s2}
\RR_{\al\be\mu\nu}=R_{\al\be\mu\nu}+\frac{i}{2}{\in_{\mu\nu}}^{\rh\si}R_{\al\be\rh\si}=R_{\al\be\mu\nu}+i{\dual R}_{\al\be\mu\nu}.
\end{equation}
We define the tensor $\II\in\mathbf{T}_4^0(\mathbf{N})$,
\begin{equation}\label{s4}
\II_{\al\be\mu\nu}=(\g_{\al\mu}\g_{\be\nu}-\g_{\al\nu}\g_{\be\mu}+i\in_{\al\be\mu\nu})/4.
\end{equation}
Clearly,
\bea
\II_{\al\be\mu\nu}=-\II_{\b\a\mu\nu}=-\II_{\al\be\nu\mu}=\II_{\mu\nu\al\be}.\label{eq:dualIII}
\eea
On the other hand,
\begin{equation}\label{s55}
\II_{\a[\b\ga\de]}=\II_{\a\b\ga\de}+\II_{\a\ga\de\b}+\II_{\a\de\b\ga}=\frac{3i}{4}\in_{\a\b\ga\de}.
\end{equation}
Using the definition \eqref{s4} we derive
\begin{equation}
\dual\II_{\a\b\mu\nu}=\frac{1}{2}{\in_{\mu\nu}}^{\rh\si}\II_{\al\be\rh\si}=(-i)\II_{\al\be \mu\nu}.\label{eq:dualI}
\end{equation}
Thus $\II$ is a self-dual partially antisymmetric tensor. We can therefore apply \eqref{le:useful} and \eqref{eq:dualIII} to derive
 \bea
 \FF_{\mu}^{\,\,\,\si}\II_{\nu\sigma \a\b} +\FF_{\nu}^{\,\,\,\si} \II_{\mu\sigma\a\b}=
 \frac{1}{2}\g_{\mu\nu}\FF^{\ga\de}\II_{\ga\de\a\b}.\label{b8.2''} 
\eea
We observe also that
\bea
\FF^{\mu\nu}\II_{\a\b\mu\nu}=\FF_{\a\b}.\label{b34}
\eea
Following \cite{Ma1}, we define the tensor-field  $\QQ\in\mathbf{T}_4^0(\mathbf{N})$,
\begin{equation}\label{s5}
\QQ_{\al\be\mu\nu}=(1-\sigma)^{-1}\big(\FF_{\al\be}\FF_{\mu\nu}-\frac{1}{3}\FF^2\II_{\al\be\mu\nu}\big).
\end{equation}

We show now that $\QQ$ is a self-dual Weyl field on $\mathbf{N}$.

\begin{proposition}\label{tensorQ}
The tensor-field $\QQ$ is a self-dual Weyl field, i.e.
\begin{equation*}
\begin{cases}
&\QQ_{\al\be\mu\nu}=-\QQ_{\be\al\mu\nu}=-\QQ_{\al\be\nu\mu}=\QQ_{\mu\nu\al\be};\\
&\QQ_{\a\b\mu\nu}+\QQ_{\a\mu\nu\be}+\QQ_{\a\nu\be\mu}=0;\\
&\g^{\be\nu}\QQ_{\al\be\mu\nu}=0,
\end{cases}
\end{equation*}
and
\begin{equation*}
\frac{1}{2}\in_{\mu\nu\rh\si}{\QQ_{\al\be}}^{\rh\si}=(-i)\QQ_{\al\be\mu\nu}.
\end{equation*}
\end{proposition}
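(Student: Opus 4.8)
The plan is to verify the four defining properties of a self-dual Weyl field for $\QQ$ by direct computation, exploiting the algebraic identities already collected for self-dual $2$-forms and for the auxiliary tensor $\II$. Write $\QQ_{\al\be\mu\nu}=(1-\si)^{-1}P_{\al\be\mu\nu}$ with $P_{\al\be\mu\nu}=\FF_{\al\be}\FF_{\mu\nu}-\frac13\FF^2\II_{\al\be\mu\nu}$; since $(1-\si)^{-1}$ is a nonvanishing scalar on $\mathbf{N}$, all tensorial identities reduce to identities for $P$.

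First I would dispatch the symmetries in the first line of \eqref{Weyl12}. Antisymmetry of $P$ in $[\al\be]$ and in $[\mu\nu]$ is immediate from the antisymmetry of $\FF$ and of $\II$ (see \eqref{eq:dualIII}), and the pair-exchange symmetry $P_{\al\be\mu\nu}=P_{\mu\nu\al\be}$ follows because $\FF_{\al\be}\FF_{\mu\nu}$ is visibly symmetric under $(\al\be)\leftrightarrow(\mu\nu)$ and $\II_{\al\be\mu\nu}=\II_{\mu\nu\al\be}$. Next, self-duality: I would compute $\frac12\in_{\mu\nu\rh\si}{P_{\al\be}}^{\rh\si}$ and use that $\FF$ is self-dual ($\dual\FF=-i\FF$) to get $\frac12\in_{\mu\nu}^{\ \ \rh\si}\FF_{\rh\si}=(-i)\FF_{\mu\nu}$, and that $\II$ is self-dual by \eqref{eq:dualI}, so the combination $P$ is self-dual with eigenvalue $-i$; this is the identity in the last display of the proposition. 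The trace-free condition $\g^{\be\nu}\QQ_{\al\be\mu\nu}=0$ is the step where I expect the first real computation: contracting, $\g^{\be\nu}\FF_{\al\be}\FF_{\mu\nu}=\FF_{\al\be}{\FF_\mu}^{\be}=-\FF_{\al}^{\ \b}\FF_{\mu\b}$, which by \eqref{b8.2'} (applied with $\GG=\FF$, after lowering/raising and using antisymmetry) equals $-\frac12\g_{\al\mu}\FF_{\ga\de}\FF^{\ga\de}=-\frac12\g_{\al\mu}\FF^2$; and $\g^{\be\nu}\II_{\al\be\mu\nu}$ should come out to a multiple of $\g_{\al\mu}$ directly from the definition \eqref{s4}. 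One then checks the two multiples of $\g_{\al\mu}\FF^2$ cancel after the factor $-\frac13$, giving vanishing trace.

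The main obstacle will be the first (algebraic) Bianchi identity, the middle line of \eqref{Weyl12}: $P_{\a\b\mu\nu}+P_{\a\mu\nu\be}+P_{\a\nu\be\mu}=0$. The $\II$ part contributes $-\frac13\FF^2$ times $\II_{\a[\b\mu\nu]}$, which by \eqref{s55} equals $-\frac13\FF^2\cdot\frac{3i}{4}\in_{\a\b\mu\nu}=-\frac{i}{4}\FF^2\in_{\a\b\mu\nu}$. The $\FF\otimes\FF$ part contributes $\FF_{\a\b}\FF_{\mu\nu}+\FF_{\a\mu}\FF_{\nu\be}+\FF_{\a\nu}\FF_{\be\mu}$, and the claim is that this cyclic sum equals $+\frac{i}{4}\FF^2\in_{\a\b\mu\nu}$, so that the two pieces cancel. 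This last identity is the genuinely nontrivial point: it is a pointwise algebraic identity for self-dual $2$-forms in a $4$-dimensional Lorentzian space, and the clean way to see it is to note that a self-dual $2$-form $\FF$ has, at any point, (complex) rank two and can be written in a null frame as $\FF\propto$ a single ``component'' times a standard self-dual bivector; alternatively, one polarizes the known identity for $\II$ or uses that both sides are totally antisymmetric in $[\b\mu\nu]$ (the left side because the cyclic sum over three antisymmetric slots of a pair-symmetric object lands in $\Lambda^3$) and hence both are proportional to $\in_{\a\b\mu\nu}$, after which one fixes the constant by a single contraction (e.g. contract with $\FF^{\mu\nu}$ and use \eqref{b34} together with $\FF_{\a}^{\ \b}\FF_{\b\mu}=-\frac12\g_{\a\mu}\FF^2$). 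I would carry the proof out in exactly this order: the three symmetries, then self-duality, then the trace, then reduce the algebraic Bianchi identity to the one bivector identity $\FF_{\a[\b}\FF_{\mu\nu]}=\frac{i}{6}\FF^2\in_{\a\b\mu\nu}$ and prove the latter by the antisymmetry-plus-one-contraction argument. Since $\QQ=(1-\si)^{-1}P$ and $(1-\si)^{-1}\neq0$ on $\mathbf{N}$, all of these transfer verbatim to $\QQ$, completing the proof.
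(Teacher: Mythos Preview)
Your proposal is correct and follows essentially the same route as the paper: the symmetries and self-duality are immediate from those of $\FF$ and $\II$, the trace vanishes by \eqref{b8.2'} together with a direct evaluation of $\g^{\be\nu}\II_{\al\be\mu\nu}$, and the algebraic Bianchi identity is reduced to the single bivector identity for $\FF_{\a\b}\FF_{\mu\nu}+\FF_{\a\mu}\FF_{\nu\be}+\FF_{\a\nu}\FF_{\be\mu}$, which the paper proves exactly as you suggest (observing it is a $4$-form in dimension $4$, hence a multiple of $\in_{\a\b\mu\nu}$, and fixing the constant by contracting with $\in^{\a\b\mu\nu}$). Two small numerical slips to clean up when you write it out: from \eqref{b8.2'} one gets $\FF_{\al\si}{\FF_\mu}^{\si}=\tfrac14\g_{\al\mu}\FF^2$ (not $-\tfrac12$), which still cancels against $-\tfrac13\FF^2\cdot\tfrac34\g_{\al\mu}$; and the antisymmetrized form of the bivector identity reads $\FF_{\a[\b}\FF_{\mu\nu]}=\tfrac{i}{12}\FF^2\in_{\a\b\mu\nu}$ rather than $\tfrac{i}{6}$.
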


\begin{proof}[Proof of Proposition \ref{tensorQ}] The identities 
\beaa
\QQ_{\al\be\mu\nu}=-\QQ_{\be\al\mu\nu}=-\QQ_{\al\be\nu\mu}=\QQ_{\mu\nu\al\be}
\eeaa
follow immediately from the definition. To prove
\beaa
\QQ_{\al[\be\mu\nu]}=\QQ_{\al\be\mu\nu}+\QQ_{\al\mu\nu\be}+\QQ_{\al\nu\be\mu}=0
\eeaa
it suffices to check, in view of  the identity  \eqref{s55},
\begin{equation}\label{b38}
\FF_{\al\be}\FF_{\mu\nu}+\FF_{\al\mu}\FF_{\nu\be}+\FF_{\al\nu}\FF_{\be\mu}=\frac{i}{4}\in_{\al\be\mu\nu}\cdot \FF^2.
\end{equation}
Since $\FF$ is a $2$-form, the left-hand side of \eqref{b38} is a $4$-form on $\mathbf{N}$ (which has dimension $4$). Thus, for \eqref{b38} it suffices to check
\beaa
\in^{\al\be\mu\nu}\big(\FF_{\al\be}\FF_{\mu\nu}+\FF_{\al\mu}\FF_{\nu\be}+\FF_{\al\nu}\FF_{\be\mu}\big)=-6i \FF^2.
\eeaa
This follows since the left-hand side of the above equation is equal to 
$6\FF_{\al\be}{\dual\FF}^{\al\be}=-6i\FF^2$.

We compute
\begin{equation*}
\begin{split}
\g^{\be\nu}\QQ_{\al\be\mu\nu}&=(1-\sigma)^{-1}\Big(\FF_{\al\be}{\FF_{\mu}}^\be-\frac{1}{3}\FF^2\cdot \g^{\be\nu}\II_{\al\be\mu\nu}\Big)=0. 
\end{split}
\end{equation*}
Also
\begin{equation*}
\frac{1}{2}\in_{\mu\nu\rh\si}{\QQ_{\al\be}}^{\rh\si}=(1-\sigma)^{-1}\big(\FF_{\al\be}{\dual\FF}_{\mu\nu}-\frac{1}{3}\FF^2{\dual\II}_{\al\be\mu\nu}\big)=(-i)\QQ_{\al\be\mu\nu}.
\end{equation*}
This completes the proof of the proposition.
\end{proof}

We define now the Mars-Simon tensor.
\begin{definition}\label{MStensor}
We define the self-dual Weyl field $\SS$,
\begin{equation}\label{s6}
\SS=\RR+6\QQ.
\end{equation}
\end{definition}

\begin{remark}\label{Swelldef}
Since $\Re\sigma=-\T^\al\T_\al\leq 0$ on $S_0$, it follows from the definition of the constant $A_0$ in section \ref{preliminaries} that $\Re(1-\sigma)\geq 1/2$ in a neighborhood $\O_{\ep_2}\subseteq\widetilde{\mathbf{M}}$ of $S_0$, for some $\ep_2\leq \ep_0$ that depends only on $A_0$. In particular, the tensor $\Ss$ is well defined in $\O_{\ep_2}$.
\end{remark}

\subsection{A covariant wave equation for $\SS$}
Our main goal now is to show that $\SS$ verifies a 
covariant wave equation. We first 
  calculate  its spacetime divergence
 $\D^\a\SS_{\a\b\mu\nu}$. Clearly, it suffices to calculate
 $\D^\a\QQ_{\a\b\mu\nu}$.
Recalling the definition of 
 the $1$-form
$
\si_\al=2\T^\nu\FF_{\nu\al}
$
and using the definition \eqref{s5} we compute
\begin{equation}\label{s10}
\begin{split}
\D_\rh\mathcal{Q}_{\al\be\mu\nu}&=(1-\sigma)^{-1}\D_\rh\mathcal{F}_{\al\be}\cdot \mathcal{F}_{\mu\nu}+(1-\sigma)^{-1}\mathcal{F}_{\al\be}\cdot \D_\rh\mathcal{F}_{\mu\nu}\\
&-\frac{1}{3}(1-\sigma)^{-1}\D_\rho\FF^2\cdot\II_{\al\be\mu\nu}+(1-\sigma)^{-2}\si_\rho\big(\FF_{\al\be}\FF_{\mu\nu}-\frac{1}{3}\FF^2\II_{\al\be\mu\nu}\big).
\end{split}
\end{equation}
Using \eqref{eq:Ricci}, \eqref{b34}, and  $\RR=\mathcal{S}-6\mathcal{Q}$, we have
\begin{equation}\label{f1}
\begin{split}
\D_\rho\FF_{\ga\de}&=\T^\nu\RR_{\nu\rho \ga\de}=\T^\nu \SS_{\nu\rho \ga\de}-6\cdot \T^\nu\QQ_{\nu\rho \ga\de}\\
&=-3(1-\sigma)^{-1}\si_\rho\FF_{\ga\de}+2(1-\sigma)^{-1}\mathcal{F}^2\cdot\T^\nu\II_{\nu\rho \ga\de}+\T^\nu \SS_{\nu\rho \ga\de}.
\end{split}
\end{equation}
Thus,
\begin{equation}\label{s11.1}
\begin{split}
(1-\sigma)^{-1}\mathcal{F}_{\al\be}\cdot \D_\rh\mathcal{F}_{\mu\nu}
&=-3(1-\sigma)^{-2}\cdot \si_\rho\mathcal{F}_{\al\be}\mathcal{F}_{\mu\nu}\\
&+2(1-\sigma)^{-2}\FF^2\FF_{\al\be}\T^\la\II_{\la\rh\mu\nu}+\JJ_1(\SS)_{\rho\a\b\mu\nu},
\end{split}
\end{equation}
where
\beaa
 \JJ_1(\SS)_{\rho\a\b\mu\nu}=(1-\sigma)^{-1}\cdot\FF_{\a\b}\T^\la S_{\la\rho\mu\nu}.
\eeaa
Observe that, in view of \eqref{f1} and \eqref{b34}
\bea
\D_\rho\,\mathcal{F}^2&=&2\D_\rho\mathcal{F}_{\ga\de}\cdot \mathcal{F}^{\ga\de}=-4(1-\sigma)^{-1}\FF^2\si_\rho+2\T^\nu \SS_{\nu\rho\ga\de}\FF^{\ga\de}.\label{Ernst-important}
\eea
Thus
\begin{equation}\label{s11.3}
\begin{split}
-\frac{1}{3}(1-\sigma)^{-1}\D_\rho\FF^2\cdot \mathcal{I}_{\al\be\mu\nu}=\frac{4}{3}(1-\sigma)^{-2}\FF^2\cdot \si_\rho\mathcal{I}_{\al\be\mu\nu}+\JJ_2(\SS)_{\rh\al\be\mu\nu},
\end{split}
\end{equation}
where,
\beaa
\JJ_2(\SS)_{\rh\al\be\mu\nu}&=&-\frac{2}{3}(1-\sigma)^{-1}\cdot \T^\la\SS_{\la\rho\ga\de}\FF^{\ga\de}\II_{\a\b\mu\nu}.
\eeaa
We combine \eqref{s10}, \eqref{s11.1}, and \eqref{s11.3} to write
\bea
\D_\rh\mathcal{Q}_{\al\be\mu\nu}&=&(1-\sigma)^{-1}\D_\rh\mathcal{F}_{\al\be}\cdot \mathcal{F}_{\mu\nu}-2(1-\sigma)^{-2}\si_\rho\mathcal{F}_{\al\be}\mathcal{F}_{\mu\nu}\nn\\
&+&2(1-\sigma)^{-2}\FF^2\FF_{\al\be}\T^\la\cdot \II_{\la\rh\mu\nu}+(1-\sigma)^{-2}\FF^2\si_\rho\mathcal{I}_{\al\be\mu\nu}\label{s12}\\
&+&\JJ_1(\SS)_{\rh\al\be\mu\nu}+\JJ_2(\SS)_{\rh\al\be\mu\nu}\nn.
\eea

We  are now ready to compute the divergence
 $\D^\si\mathcal{Q}_{\be\si \mu\nu}$. Using \eqref{s12} and 
 the Maxwell equations \eqref{eq:Max1'} we derive
\beaa
\D^\b\mathcal{Q}_{\a\b \mu\nu}&=&\JJ''(\SS)_{\a\mu\nu}-2(1-\sigma)^{-2}\si_\rho{\mathcal{F}_{\a}}^\rho\mathcal{F}_{\mu\nu}\\
&+&2(1-\sigma)^{-2}\FF^2{\FF_\a}^\rh\T^\la\II_{\la\rh\mu\nu}+(1-\sigma)^{-2}\FF^2\si^\b\mathcal{I}_{\a\b \mu\nu};\\
\JJ''(\SS)_{\a\mu\nu}&=&\g^{\rho\b}(\JJ_1(\SS)_{\rh\al\be\mu\nu}+\JJ_2(\SS)_{\rh\al\be\mu\nu}).\\
\eeaa
Using \eqref{b8.2'} and the definition of $\si_\rho$ we derive,
\begin{equation}\label{s26}
\begin{split}
-2(1-\sigma)^{-2}\cdot \si_\rho{\mathcal{F}_{\a}}^\rho\mathcal{F}_{\mu\nu}&=-2(1-\sigma)^{-2}\cdot 2\T^\la\FF_{\la\rho}{\mathcal{F}_{\a}}^\rho\mathcal{F}_{\mu\nu}\\
&=-2(1-\sigma)^{-2}\cdot 2\T^\la\cdot \frac{1}{4}\g_{\la\a}\FF^2\cdot \mathcal{F}_{\mu\nu}\\
&=-(1-\sigma)^{-2}\FF^2\T_\a\FF_{\mu\nu}.
\end{split}
\end{equation}
Using  \eqref{b8.2''}, \eqref{b34}, and the definitions,
\begin{equation}\label{s27}
\begin{split}
2(1-\sigma)^{-2}\FF^2\cdot&{\FF_\a}^\rh\T^\la\II_{\la\rh\mu\nu}+(1-\sigma)^{-2}\FF^2\cdot \si^\be\mathcal{I}_{\a\be \mu\nu}\\
&=2(1-\sigma)^{-2}\FF^2({\FF_\a}^\rh \T^\la\II_{\la\rh\mu \nu}+\T^\la{\FF_\la}^\rh \II_{\a\rh\mu\nu})\\
&=2(1-\sigma)^{-2}\FF^2\T^\la({\FF_\a}^\rh \II_{\la\rh\mu \nu}+{\FF_\la}^\rh \II_{\a\rh\mu\nu})\\
&=2(1-\sigma)^{-2}\FF^2\T^\la\cdot \frac{1}{2}\g_{\la\a}\FF^{\rh\si}\II_{\rh\si\mu\nu}\\
&=(1-\sigma)^{-2}\FF^2\T_\a\FF_{\mu\nu}.
\end{split}
\end{equation}
Thus, using  \eqref{s26}, and \eqref{s27}, 
we derive,
\beaa
\D^\si\mathcal{Q}_{\a\si \mu\nu}=\JJ''(\SS)_{\a\mu\nu},
\eeaa
with
\begin{equation*}
\begin{split}
\JJ''(\SS)_{\a\mu\nu}=(1-\sigma)^{-1}\T^\la\SS_{\la\rho\ga\de}\big(\FF_{\a}^{\,\,\,\rho}\de^\ga_\mu\de^\de_\nu-(2/3)\FF^{\ga\de}\II_{\a\,\,\mu\nu}^{\,\,\,\rho}\big).
\end{split}
\end{equation*}
Since, according to the  Bianchi identities, and the Einstein equations,  
we have $\D^\si\RR_{\be\si \mu\nu}=0$ we deduce the following.

\begin{theorem} 
\label{thm-MS-eqts}The Mars-Simon tensor $\SS$ verifies,
\bea
\D^\si\SS_{\si\a \mu\nu}=\JJ(\SS)_{\a\mu\nu}.\label{Div-QQ}
\eea
where,
\beaa
\JJ(\SS)_{\a\mu\nu}=-6(1-\sigma)^{-1}\T^\la\SS_{\la\rho\ga\de}\big(\FF_{\a}^{\,\,\,\rho}\de^\ga_\mu\de^\de_\nu-(2/3)\FF^{\ga\de}\II_{\a\,\,\mu\nu}^{\,\,\,\rho}\big).
\eeaa
\end{theorem}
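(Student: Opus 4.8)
The plan is to exploit the decomposition $\SS=\RR+6\QQ$ and treat the two summands separately. For $\RR$, the self-dual Weyl tensor built from the Riemann curvature, the twice-contracted second Bianchi identity together with the vacuum condition $\mathrm{Ric}(\g)=0$ gives $\D^\si R_{\si\a\mu\nu}=0$; by \eqref{eq:Bianchi2} (applied with vanishing Weyl current) the dual $\dual R$ is then also divergence free, so $\D^\si\RR_{\si\a\mu\nu}=0$. Hence the entire right-hand side of \eqref{Div-QQ} must originate from $\D^\si\QQ_{\si\a\mu\nu}$, and the theorem is reduced to establishing $\D^\si\QQ_{\a\si\mu\nu}=\JJ''(\SS)_{\a\mu\nu}$ with $\JJ''$ as above; passing to the stated index order then just uses $\QQ_{\a\si\mu\nu}=-\QQ_{\si\a\mu\nu}$ together with $\JJ=-6\JJ''$.

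For the divergence of $\QQ$ I would differentiate the defining expression $\QQ_{\al\be\mu\nu}=(1-\sigma)^{-1}\big(\FF_{\al\be}\FF_{\mu\nu}-\tfrac13\FF^2\II_{\al\be\mu\nu}\big)$ by the Leibniz rule, producing a term from $\D_\rho(1-\sigma)^{-1}=(1-\sigma)^{-2}\si_\rho$ (using $\si_\mu=\D_\mu\si$), two terms from $\D_\rho\FF$, and one from $\D_\rho\FF^2=2\FF^{\ga\de}\D_\rho\FF_{\ga\de}$. The crucial input is the Ricci identity $\D_\rho\FF_{\ga\de}=\T^\nu\RR_{\nu\rho\ga\de}$, into which one substitutes $\RR=\SS-6\QQ$: the $\QQ$-part $\T^\nu\QQ_{\nu\rho\ga\de}$ is rewritten explicitly via the definition of $\QQ$ and $\si_\rho=2\T^\nu\FF_{\nu\rho}$, while the $\SS$-part is carried along untouched and becomes the genuine source (this is the origin of $\JJ_1$ and $\JJ_2$). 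Collecting everything yields \eqref{s12} for $\D_\rho\QQ_{\al\be\mu\nu}$. Contracting on the first index, the Maxwell equations $\D^\b\FF_{\a\b}=0$ and $\D_{[\rho}\FF_{\al\be]}=0$ annihilate the term $(1-\sigma)^{-1}\D_\rho\FF_{\al\be}\cdot\FF_{\mu\nu}$ up to its $\SS$-part, and the remaining terms, all quadratic in $\FF$, are collapsed with the self-dual algebra \eqref{b8.2'}, \eqref{b8.2''} and $\FF^{\mu\nu}\II_{\a\b\mu\nu}=\FF_{\a\b}$. The point to be checked by direct computation is that $-2(1-\sigma)^{-2}\si_\rho{\FF_\a}^\rho\FF_{\mu\nu}$ reduces to $-(1-\sigma)^{-2}\FF^2\T_\a\FF_{\mu\nu}$ (equations \eqref{s26}), whereas the pair $2(1-\sigma)^{-2}\FF^2{\FF_\a}^\rho\T^\la\II_{\la\rho\mu\nu}+(1-\sigma)^{-2}\FF^2\si^\b\II_{\a\b\mu\nu}$ reduces to $+(1-\sigma)^{-2}\FF^2\T_\a\FF_{\mu\nu}$ (equations \eqref{s27}), so that these cancel exactly, leaving only $\JJ''(\SS)_{\a\mu\nu}$.

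The difficulty here is bookkeeping rather than conceptual: one must track carefully which portions of $\T^\nu\RR_{\nu\rho\ga\de}$ are algebraically expressible through $\QQ$ — and hence enter the cancellations above — and which are the $\SS$-proportional remainders, and one must invoke the hypothesis $\dim\mathbf N=4$ to justify the algebraic identity \eqref{b38} (equivalently, that the full antisymmetrization $\FF_{\al[\be}\FF_{\mu\nu]}$ is a multiple of $\in_{\al\be\mu\nu}\FF^2$), which is also what makes $\QQ$ a genuine Weyl field. Once $\D^\si\QQ_{\a\si\mu\nu}=\JJ''(\SS)_{\a\mu\nu}$ is established, combining $6$ times this identity with $\D^\si\RR_{\si\a\mu\nu}=0$ and relabelling the first two indices produces exactly \eqref{Div-QQ} with the asserted expression for $\JJ(\SS)$.
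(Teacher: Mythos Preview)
Your proposal is correct and follows essentially the same route as the paper: differentiate $\QQ$ via Leibniz, feed in the Ricci identity with the substitution $\RR=\SS-6\QQ$, contract and use the Maxwell equations, and then verify the self-dual algebraic cancellations \eqref{s26}--\eqref{s27} so that only the $\SS$-proportional source survives. One small imprecision: after contraction the term $(1-\sigma)^{-1}\D^\be\FF_{\al\be}\cdot\FF_{\mu\nu}$ vanishes outright by $\D^\be\FF_{\al\be}=0$, not merely ``up to its $\SS$-part''; the $\SS$-remainders $\JJ_1,\JJ_2$ come only from expanding $\D_\rho\FF_{\mu\nu}$ and $\D_\rho\FF^2$.
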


As a consequence of the theorem we  deduce from Proposition \ref{prop:Bianchi-dual}
 and the self-duality of $\SS$ and $\JJ$,
\bea
\D_{[\si} \SS_{\mu\nu]\a\b}=-i \in_{\rho\si\mu\nu}\JJ^\rho_{\,\,\,\,\a\b}(\SS).\label{Bianchi-SS}
\eea
In the following calculations the precise form of $\JJ(\SS)$ is not important, 
we only need to keep track of the fact that it  is a multiple of $\SS$.
\begin{definition}\label{DEF2}
We denote by $\mathcal{M}(\SS)$ any $k$-tensor  with the property that there is a 
smooth tensor-field  $\mathcal{A}$  such that 
\begin{equation}\label{s9}
\mathcal{M}(\SS)_{\alpha_1\ldots\alpha_k}=\SS_{\be_1\ldots \be_4}{\mathcal{A}^{\be_1\ldots\be_4}}_{\alpha_1\ldots\alpha_k}.
\end{equation}
Similarly we denote by  $\mathcal{M}(\SS,\D \SS)$  any $k$-tensor with the property that there 
exist smooth tensor-fields $\AA$ and $\BB$ such that
\begin{equation}\label{s9.2}
\mathcal{M}(\SS, \D \SS)_{\alpha_1\ldots\alpha_k}=\SS_{\be_1\ldots \be_4}{\AA^{\be_1\ldots\be_4}}_{\alpha_1\ldots\alpha_k}+\D_{\be_5}\SS_{\be_1\ldots \be_4}{\BB^{\be_1\ldots\be_5}}_{\alpha_1\ldots\alpha_k}.
\end{equation}
\end{definition}
We state the main result of this section.

\begin{theorem}\label{wave}
We have
\begin{equation}\label{todo}
\square_\g \SS= \mathcal{M}(\SS,\D\mathcal{S}).
\end{equation}
\end{theorem}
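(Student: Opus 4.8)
The plan is to obtain \eqref{todo} by differentiating once more the divergence identity of Theorem \ref{thm-MS-eqts} and commuting covariant derivatives, in the spirit of the classical derivation of $\square_\g R=R\ast R$ for the curvature of an Einstein vacuum spacetime. Write $J_{\a\mu\nu}:=\JJ(\SS)_{\a\mu\nu}$. By the explicit formula in Theorem \ref{thm-MS-eqts}, $J$ is a contraction of $\SS$ with the smooth tensor-fields $\T$, $\FF$, $\II$ and the smooth function $(1-\sigma)^{-1}$ (recall $1-\sigma\neq0$ on $\mathbf{N}$), so $J=\mathcal{M}(\SS)$ in the sense of Definition \ref{DEF2}; consequently $\D J=\mathcal{M}(\SS,\D\SS)$, since differentiating a contraction of $\SS$ against a smooth tensor produces a contraction of $\SS$ or of $\D\SS$ against a smooth tensor.

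First I would rewrite the second Bianchi identity \eqref{Bianchi-SS} for $\SS$. Since the volume form $\in$ is parallel and $J=\mathcal{M}(\SS)$, expanding the antisymmetrization in \eqref{Bianchi-SS} gives an identity of the schematic form
\[
\D_\si\SS_{\mu\nu\a\b}+\D_\mu\SS_{\nu\si\a\b}+\D_\nu\SS_{\si\mu\a\b}=\mathcal{M}(\SS)_{\si\mu\nu\a\b},
\]
the right-hand side absorbing the smooth factor $\in$ times $J$ (the precise combinatorial constant is irrelevant). Solving for the first term and applying $\D^\si$ yields
\[
\square_\g\SS_{\mu\nu\a\b}=-\D^\si\D_\mu\SS_{\nu\si\a\b}-\D^\si\D_\nu\SS_{\si\mu\a\b}+\D^\si\mathcal{M}(\SS)_{\si\mu\nu\a\b},
\]
and the last term is of type $\mathcal{M}(\SS,\D\SS)$ by the same remark as above.

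It remains to treat the first two terms by commuting derivatives. For the first,
\[
\D^\si\D_\mu\SS_{\nu\si\a\b}=\D_\mu\D^\si\SS_{\nu\si\a\b}+[\D^\si,\D_\mu]\SS_{\nu\si\a\b}.
\]
By the antisymmetry of $\SS$ in its first pair and Theorem \ref{thm-MS-eqts}, $\D^\si\SS_{\nu\si\a\b}=-\D^\si\SS_{\si\nu\a\b}=-J_{\nu\a\b}=\mathcal{M}(\SS)$, hence $\D_\mu\D^\si\SS_{\nu\si\a\b}=\mathcal{M}(\SS,\D\SS)$. The commutator $[\D^\si,\D_\mu]\SS_{\nu\si\a\b}$ is a finite sum of contractions of the Riemann tensor with $\SS$; the term in which the Riemann tensor is contracted against the repeated index $\si$ produces a Ricci factor, which vanishes because of the Einstein vacuum condition $\mbox{Ric}(\g)=0$, while each remaining term is a contraction of the smooth tensor $R$ against $\SS$, hence of type $\mathcal{M}(\SS)$. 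The second term $\D^\si\D_\nu\SS_{\si\mu\a\b}$ is handled identically, now using $\D^\si\SS_{\si\mu\a\b}=J_{\mu\a\b}=\mathcal{M}(\SS)$. Collecting all contributions gives $\square_\g\SS=\mathcal{M}(\SS,\D\SS)$, which is \eqref{todo}. I do not expect a serious obstacle: given Theorem \ref{thm-MS-eqts} and \eqref{Bianchi-SS}, the argument is essentially bookkeeping, the only delicate point being to check that the contraction over the repeated index in the commutator stays of type $\mathcal{M}(\SS)$ — which is precisely where (a consequence of) the vacuum equations enters, although the smoothness of $R$ alone would already render these terms harmless. The real work has already been carried out in establishing the divergence identity \eqref{Div-QQ}.
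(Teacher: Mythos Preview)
Your proposal is correct and follows essentially the same route as the paper: apply $\D^\si$ to the Bianchi-type identity \eqref{Bianchi-SS}, commute covariant derivatives, and use the divergence equation \eqref{Div-QQ} to recognize the remaining divergence terms as $\mathcal{M}(\SS,\D\SS)$. You supply somewhat more detail than the paper (which dispatches the argument in two lines), in particular your observation that the commutator terms are contractions of the Riemann tensor with $\SS$ and hence automatically of type $\mathcal{M}(\SS)$, and your remark that the vacuum condition is not even essential at this step since $R$ is in any case smooth.
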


\begin{proof}[Proof of Theorem \ref{wave}]
The result follows easily from the equations \eqref{Div-QQ} and
\eqref{Bianchi-SS}
\beaa
&\D^\a\SS_{\a\b\ga\de}=\JJ(\SS)_{\b\ga\de}\\
&\D_{[\si} \SS_{\a\b]\ga\de}=-i \in_{\rho\si\a\b}\JJ^\rho_{\,\,\,\,\ga\de}(\SS).
\eeaa
Indeed,
differentiating once more the second equation we derive,
\beaa
\D^\si (\D_\si\SS_{\a\b\ga\de}+\D_\a\SS_{\b\si\ga\de}+\D_\b\SS_{\si\a\ga\de})=
-i \in_{\rho\si\a\b}\D^\si\JJ^\rho_{\,\,\,\,\ga\de}(\SS).
\eeaa
Thus, after commuting covariant derivatives and using the first equation
 we derive,
 \beaa
 \square_\g \SS_{\a\b\ga\de}&=&\MM(\SS, \D\SS)_{\a\b\ga\de}
 \eeaa
 as desired.
\end{proof}

\section{Vanishing of $\SS$ on the horizon}\label{horizon}
In this  section we prove that the Mars--Simon tensor $\SS$ vanishes on $\HH^+\cup\HH^-$.

\begin{proposition}\label{shorizon}
The Mars--Simon tensor $\SS$ vanishes along
the horizon $\HH^+\cup\HH^-$.
\end{proposition}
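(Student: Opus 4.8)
The plan is to decompose $\SS=\RR+6\QQ$ into null components relative to a complex null tetrad adapted to the bifurcate horizon, observe that all of these components but one vanish on $\HH^+\cup\HH^-$ near $S_0$ because the horizon geometry is so degenerate, kill the surviving scalar component on $S_0$ using the technical assumption \eqref{Main-Cond1}, and then propagate $\SS=0$ off $S_0$ along the null generators by means of \eqref{Div-QQ} and \eqref{Bianchi-SS}. Everything takes place in the neighborhood $\O_{\ep_2}$ of $S_0$ where $\SS$ is well defined, cf. Remark \ref{Swelldef}.

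I would first fix the null pair of subsection \ref{subs:optical}: $\Lp$ tangent to and geodesic along the generators of $\HH^+$, $\Lm$ the analogous field on $\HH^-$, completed to a complex null tetrad $(\Lp,\Lm,m,\overline{m})$ with $m,\overline{m}$ spanning the tangent spaces of the surfaces $S_{\um}\subset\HH^+$, $S_{\up}\subset\HH^-$ and of $S_0$. Since $\HH^\pm$ are non-expanding, the Raychaudhuri equation together with $\mbox{Ric}(\g)=0$ forces the full null second fundamental forms of $\HH^+$ and of $\HH^-$ to vanish identically near $S_0$, in particular on $S_0$. Feeding this into the null propagation and Codazzi equations in vacuum, the components of $\RR$ of positive boost weight, i.e. $\RR_{\Lp A\Lp B}$ and $\RR_{\Lp A\Lp\Lm}$, vanish on $\HH^+$, and symmetrically $\RR_{\Lm A\Lm B}$, $\RR_{\Lm A\Lm\Lp}$ vanish on $\HH^-$. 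Using the Ricci identity \eqref{eq:Ricci} for the Killing form, the tangency of $\T$ to $\HH^+$, and the vanishing just established, the component $\FF_{\Lp m}$ satisfies a homogeneous linear transport equation along the generators of $\HH^+$; its value on $S_0$ is a multiple of $\g(m,\D_\Lp\T)$, which vanishes because $\mathcal{L}_\T\Lp$ is proportional to $\Lp$ along $\HH^+$ (any isometry permutes the generators) and the second fundamental form vanishes on $S_0$. Hence $\FF_{\Lp m}\equiv 0$ on $\HH^+$, and by self-duality $\FF_{\Lp\overline{m}}\equiv 0$ as well; symmetrically $\FF_{\Lm\overline{m}}\equiv 0$ on $\HH^-$.

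Since $\QQ$ is quadratic in $\FF$ and the relevant components of $\II$ vanish identically, $\FF_{\Lp m}\equiv 0$ on $\HH^+$ makes all positive boost weight components of $\QQ$, hence of $\SS$, vanish on $\HH^+$, and symmetrically the negative boost weight components of $\SS$ vanish on $\HH^-$. Along $S_0$ all of these vanish, so $\SS|_{S_0}$ has only the component $\rho(\SS)=\SS_{\Lp m\overline{m}\Lm}$, and by self-duality $\SS|_{S_0}$ is then a scalar multiple of $\QQ|_{S_0}$; thus $\SS|_{S_0}=0$ if and only if $\rho(\SS)$ vanishes on $S_0$. To prove the latter I would use \eqref{Ernst-important}, rewritten as
\begin{equation*}
2\T^\nu\SS_{\nu\rho\ga\de}\FF^{\ga\de}=\D_\rho\FF^2+4(1-\si)^{-1}\FF^2\si_\rho=(1-\si)^4\,\D_\rho\big(\FF^2(1-\si)^{-4}\big).
\end{equation*}
On $S_0$ the right-hand side vanishes for every $\rho$: for $\rho$ tangent to $S_0$ this is exactly \eqref{Main-Cond1}, which says that $\FF^2(1-\si)^{-4}\equiv -1/(4M^2)$ is constant on $S_0$, while for $\rho\in\{\Lp,\Lm\}$ one has $\si_\rho=0$ (on $S_0$ the field $\FF$ is Coulomb-type, only $\FF_{m\overline{m}}=-\FF_{\Lp\Lm}$ surviving, and $\T$ is tangent there) and $\D_\rho\FF^2=0$ (by \eqref{eq:Ricci}, using that $\FF$ is Coulomb-type and that the positive and negative boost weight components of $\RR$ vanish on $\HH^\pm$). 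Since $\SS|_{S_0}$ is a multiple of $\QQ|_{S_0}$, one computes that $\SS_{\nu\rho\ga\de}\FF^{\ga\de}$ is a nonzero constant times $\rho(\SS)\FF_{\nu\rho}$ on $S_0$, so the displayed identity reduces to $\rho(\SS)\,\T^\nu\FF_{\nu\rho}=\frac{1}{2}\rho(\SS)\,\si_\rho=0$ on $S_0$. But $\si_\rho=\D_\rho\si$ is not identically zero on $S_0$: at a point where $\T\neq 0$ its tangential part is a nonzero multiple of $\FF_{m\overline{m}}$, which does not vanish there since $\si\neq 1$ on $S_0$ (because $\Re\si=-\g(\T,\T)\le 0$). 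Hence $\rho(\SS)=0$ on a dense subset of $S_0$, so $\SS|_{S_0}=0$.

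On $\HH^+$ the positive boost weight components of $\SS$ already vanish, and equations \eqref{Div-QQ}, \eqref{Bianchi-SS}, restricted to $\HH^+$, give a coupled linear system of transport equations along $\Lp$ for the remaining components, with coefficients bounded in terms of $A_0$ and source $\JJ(\SS)$ linear in $\SS$; since this data vanishes on $S_0$, an energy estimate on the leaves $S_{\um}$ together with Gronwall's inequality forces $\SS\equiv 0$ on $\HH^+$ near $S_0$, and the same argument on $\HH^-$ completes the proof. I expect the main obstacle to be the middle step, showing $\rho(\SS)=0$ on $S_0$: it is the only place where the nontrivial identity \eqref{Main-Cond1} enters, and it presupposes both that $\FF$ is Coulomb-type on $S_0$ and that several components of $\RR$ vanish on all of $\HH^+$ and $\HH^-$, not merely on $S_0$; a secondary technical difficulty is that the horizon is not smooth across $S_0$, so the frame constructions and transport arguments must be performed on $\HH^+$ and $\HH^-$ separately and matched along $S_0$.
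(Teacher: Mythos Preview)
Your proposal is correct and follows the same architecture as the paper: kill the high-boost-weight pieces of $\RR$ and $\FF$ on $\HH^\pm$ from the non-expanding condition, deduce the corresponding vanishing for $\SS$, use \eqref{Main-Cond1} via \eqref{Ernst-important} to force $\rho(\SS)=0$ on $S_0$, and then transport the remaining components along the generators using \eqref{Div-QQ}.

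Two places where the paper's execution is more direct than yours. First, $\alpha(\FF)=0$ on $\HH^+$ comes in one line: since $\g(\T,l)=0$ on $\HH^+$ and $e_a$ is tangent, $F(e_a,l)=e_a(\g(\T,l))-\g(\T,\D_{e_a}l)=-\chi_{ab}\T^b=0$; no transport equation or initial-data argument on $S_0$ is needed. Second, for $\rho(\SS)=0$ on $S_0$ the paper uses only the \emph{tangential} direction of the identity $\T^\la\SS_{\la\rho\ga\de}\FF^{\ga\de}=0$ (obtained by differentiating \eqref{Main-Cond1} along $S_0$), and then expands it in a horizontal frame with $e_1\parallel\T$, $a=2$, to read off $\rho(\FF)\rho(\SS)=0$ pointwise where $\T\neq 0$. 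Your separate treatment of the normal directions $\rho\in\{\Lp,\Lm\}$ is therefore unnecessary. For the final propagation step the paper works sequentially, verifying that the specific source components $\JJ(\SS)_{434}$, $\JJ(\SS)_{4a3}$, $\JJ(\SS)_{a3b}$ vanish identically on $\HH^+$ given what has already been established; your Gronwall argument on the coupled linear system is an equally valid alternative.
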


The rest of the section is concerned with the proof of Proposition \ref{shorizon}. Recall, see Remark \ref{Swelldef}, that the tensor $\Ss$ is well defined on $S_0$. We will use the notation in the appendix. Assume $\NN$ is a null hypersurface (in our case $\NN=\HH^+$ or $\NN=\HH^-$) and let $l\in\mathbf{T}(\NN)$ denote a null vector-field orthogonal to $\NN$. The Lie bracket $[X,Y]$  of any two 
vector-fields  $X,Y$  tangent to $\NN$  is again 
tangent to $\NN$ and therefore
\beaa
&&\g(\D_X l, Y)-\g(\D_Yl,X)=-\g(l, [X,Y])=0\text { and }\g(\D_l l, X)=-\g(l, \D_l X)=0.
\eeaa
In particular we infer that along $\NN$ $\,^{(h)}\xi$ vanishes
identically and $\,^{(h)}\chi$ is symmetric.  
\begin{definition}
Given a null hypersurface $\NN$ and $l$ a  fixed non-vanishing
null vector-field on it we define $\chi(X, Y)=\g(\D_Xl, Y)$, $X,Y\in \mathbf{T}(\NN)$,  the null second fundamental form of $\NN$.  We denote by $\trch$ the
 trace\footnote{The trace is well defined since $\chi(X,l)=\ga(X,l)=0$ for 
 all $X\in \mathbf{T}(\NN)$.} of $\chi$
with respect  to the induced  metric and by $\chih$
the traceless part of $\chi$, i.e. $\chih=\chi-\frac 1 2 
\ga\trch$, with $\ga$ the degenerate metric on $\NN$ induced by $g$.
\end{definition}

In view of the definitions  \eqref{fo10}, writing 
$m=(e_1+i e_2)/\sqrt{2}$, with $e_1, e_2$ an arbitrary horizontal orthonormal frame,  we 
deduce that,
\beaa
\th&=&(\chi_{11}+\chi_{22})/2=\trch/2\\
\vartheta&=&(\chi_{11}-\chi_{22})/2+i\chi_{12}
\eeaa

We now restrict  our considerations to that
of  a non-expanding null hypersurface. In other words we 
assume that $\th=\trch/2$ vanishes identically along $\NN$.  In view of the null structure  equation \eqref{fo22} and the vanishing of $\xi={}^{(h)}\xi(m)$, we deduce
that $|\va|^2=0$ along $\NN$  therefore $\va\equiv 0$. Therefore the full null second fundamental form of
$\NN$ vanishes identically. We now consider  the null structure equation \eqref{fo21}. Since $\xi, \th,
\va$ vanish we deduce that $\Psi_{(2)}(R)$ 
must vanish along $\NN$.  Similarly we  deduce that $\Psi_{(1)}(R) $ vanishes  along $\NN$ from 
equation \eqref{fo26}. 
 Finally, we consider the Bianchi equations with zero source $J$. From
  \eqref{Bi5} we   deduce that $D\Psi_{(0)}$ vanishes
  identically along $\NN$.  Observe also
that   $\Psi_{(0)}(R)$ is invariant under general  changes  of the null pair $(l, \lb)$ which keep  $l$ orthogonal to $\NN$. Indeed $\Psi_{(0)}(R)$ is
always invariant under the scale transformations
$l'=fl, \lb'=f^{-1}\lb$. On the other hand if we keep
$l$ fixed and perform the general transformations 
$\lb'=\lb+Al +Bm+\overline{Bm}$ we 
easily find that $\Psi_{(0)}'(R)$ differs from  $\Psi_{(0)}(R)$ by a linear combination of 
 $\Psi_{(2)}(R)$ and  $\Psi_{(1)}(R)$.

We have thus proved the
following.
\begin{proposition}\label{NE-hor}
Let  $(l, \lb)$ be a null pair in an open set $\N$ with
$l$ orthogonal to a  non-expanding   null hypersurface in $\NN\subset \N$. 
Then $\,^{(h)}\xi$ and $\,^{(h)}\chi$ vanish
identically on $\NN$. Moreover the
curvature components  $\Psi_{(2)}(R)$  and
 $\Psi_{(1)}(R)$ (or equivalently,
 $\a(R), \b(R)$)  vanish along $\NN$ and
 the invariant  $\Psi_{(0)}(R)$ (or equivalently $ \rho(R+i\dual R)$) is constant along the null generators.
\end{proposition}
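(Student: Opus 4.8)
The plan is to run the chain of reductions outlined in the discussion above: extract first the vanishing of the entire null second fundamental form of $\NN$, then the vanishing of the anomalous curvature components, from the null structure equations, and finally read off the transport equation for the surviving invariant from the source-free Bianchi identities. I would begin with the purely kinematic identities: since $l$ is orthogonal to the null hypersurface $\NN$, and the Lie bracket of two vector fields tangent to $\NN$ is again tangent to $\NN$, we have $\g(l,[X,Y])=0$ for all $X,Y\in\mathbf{T}(\NN)$; expanding $[X,Y]=\D_XY-\D_YX$ and using $\g(l,X)=\g(l,Y)=0$ gives $\g(\D_Xl,Y)=\g(\D_Yl,X)$, so $\chi$ is symmetric, while $\g(\D_ll,X)=-\g(l,\D_lX)=0$ forces $\,^{(h)}\xi\equiv 0$ on $\NN$. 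In the complex null frame this says $\xi\equiv 0$ and the shear $\va$ is symmetric. The non-expanding hypothesis is precisely $\th=\trch/2\equiv 0$ on $\NN$, hence also $D\th=0$ along the generators; inserting $\xi=0$ and $\th=0$ into the Raychaudhuri-type null structure equation \eqref{fo22} leaves only the term $|\va|^2$, which must therefore vanish, and since $\va$ is a complex scalar this gives $\va\equiv 0$. Thus $\,^{(h)}\chi$ vanishes identically on $\NN$.

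With $\xi=\th=\va=0$ in hand, I would return to the null structure equation \eqref{fo21}: every term except $\Psi_{(2)}(R)$ drops out, forcing $\Psi_{(2)}(R)=0$ on $\NN$, and the analogous substitution in \eqref{fo26} gives $\Psi_{(1)}(R)=0$; equivalently $\a(R)$ and $\b(R)$ vanish along $\NN$. The source-free Bianchi equation \eqref{Bi5} then collapses, after discarding the now-vanishing components, to $D\Psi_{(0)}(R)=0$, i.e.\ $\Psi_{(0)}(R)$ is constant along the null generators of $\NN$. To see that this quantity is genuinely attached to $\NN$ and not an artifact of the chosen null pair, I would check that $\Psi_{(0)}(R)$ is manifestly unchanged under the rescalings $l'=fl$, $\lb'=f^{-1}\lb$, while under the remaining freedom $\lb'=\lb+Al+Bm+\overline{Bm}$ with $l$ held fixed a short computation shows that $\Psi_{(0)}'(R)-\Psi_{(0)}(R)$ is a linear combination of $\Psi_{(2)}(R)$ and $\Psi_{(1)}(R)$, both of which we have just shown vanish on $\NN$.

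The individual steps are short, so the only real point of care is that the appendix equations be applied in the correct direction: all the information here propagates by transport along the generator $l$, so one must verify that the quantities actually controlled by \eqref{fo22}, \eqref{fo21}, \eqref{fo26} and \eqref{Bi5} along $\NN$ are the $l$-derivatives $D\th$, $D\va$, $D\Psi_{(0)}$ together with the relevant undifferentiated curvature terms, and that the quadratic term in the Raychaudhuri equation is sign-definite so that $|\va|^2=0$ really does force $\va=0$. Everything else is bookkeeping with the appendix identities.
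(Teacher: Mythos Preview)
Your proposal is correct and follows essentially the same route as the paper: the kinematic identities giving $\,^{(h)}\xi=0$ and $\chi$ symmetric, then \eqref{fo22} to kill $\va$, then \eqref{fo21} and \eqref{fo26} to kill $\Psi_{(2)}$ and $\Psi_{(1)}$, and finally \eqref{Bi5} for the transport of $\Psi_{(0)}$, together with the invariance check under $l'=fl$, $\lb'=f^{-1}\lb$ and $\lb'=\lb+Al+Bm+\overline{Bm}$. One small wording slip: the symmetry of $\chi$ in the complex frame does not say ``$\va$ is symmetric'' (which is a scalar) but rather that $\th$ is real; this is irrelevant here since $\th=0$ by hypothesis, so nothing in the argument is affected.
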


We apply this proposition to the surfaces $\HH^+$ and $\HH^-$ to establish the following facts. Recall that $\RR=R+i\dual R$. 

\begin{enumerate}
\item The null second fundamental form $\chi$,
 respectively $\chib$,
vanishes identically along $\HH^+$, respectively  $\HH^-$. 
\item The null curvature  components $\a=\a(\RR)$ and $\b=\b(\RR)$ (respectively $\aa(\RR)$, $\bb(\RR)$), vanish identically along $\HH^+$  (respectively $\HH^-$).
\item The null curvature component $\rho(\RR)$ is invariant and constant
along the  null generators of both  $\HH^+$ and  $\HH^-$. 
\item All null curvature components, except $\rho(\RR)$, 
vanish along the bifurcate sphere  $S_0$. We also have $\chi=\chib=0$ on $S_0$.
\end{enumerate}
Consider an adapted null frame $e_1, e_2,  e_3=\lb, e_4=l$ in $\mathbf{O}$ with $l$ tangent to the null generators of $\HH^+$
 and $\lb$ tangent to the null generators of $\HH^-$. Thus,
\begin{equation*}
\begin{split}
&\g(l,l)=\g(\lb, \lb)=0, \quad \g(l,\lb)=-1,\quad\g(l, e_a)=\g(\lb, e_a)=0,\\
&\g(e_a, e_b)=\de_{ab}, \quad a,b=1,2,\quad\in_{12}=\in(e_1,e_2,e_3,e_4)=1.
\end{split}
\end{equation*}
We introduce the notation,
\bea
\a_a(\FF)=\FF(e_a, l), \quad \aa_a(\FF)=\FF(e_a,\lb),\quad
\rho(\FF)=\FF(\lb, l).\label{null-dec-F}
\eea
Observe that the null components  $\a_a(\FF), \aa_a(\FF), \rho(\FF)$ completely determine the antisymmetric, self-dual tensor
$\FF$. Indeed,
$
-i\FF_{34}=(\dual\FF)_{34}=\frac 1 2 \in_{34 ab}\FF^{ab}=\frac 12 \in_{ab} \FF^{ab}
$. 
Hence,
\bea
\FF_{ab}=-i\in_{ab}\rho(\FF).\label{alexadd1}
\eea
We claim  that $\a(\FF)$ vanishes on $\HH^+$ while $\aa(\FF)$
vanishes on $\HH^-$,
\bea
\a(\FF)=0\quad \mbox{on}\,\, \HH^+, \qquad \aa(\FF)=0\quad \mbox{on}\,\, \HH^-.\label{avanish}
\eea
Indeed since $g(\T,\, l)=0$ on $\HH^+$ (see the assumption {\bf{SBF}}) and the null second fundamental
form $\chi$ vanishes identically on $\HH^+$,
\beaa
F(e_a, l)=-g(\T, \D_{e_a}l)=-\chi(\T, e_a)=0\text{ on }\HH^+.
\eeaa 
 On the other hand, 
  \beaa
  \dual F_{a4}=\frac 12 \in_{a4\mu\nu}F^{\mu\nu}=\in_{a4b3}F^{b3}=-\in_{a4b3}F_{b4}=0.
  \eeaa
   Hence $ \a_a(\FF)=\FF_{a4}=F_{a4}+i\dual F_{a4}=0$ on $\HH^+$.
The proof of vanishing of $\aa(\FF)$ on $\HH^{-}$ is similar. 
We infer that both $\a(\FF)$ and $\aa(\FF)$ have to vanish
along the bifurcate sphere $S_0$.
We also observe,
\beaa
\FF^2&=&\FF_{\mu\nu}\FF^{\mu\nu}=
2\FF^{34}\FF_{34}+\FF^{ab}\FF_{ab}
=-4\FF_{34}^2=-4\rho(\FF)^2\text{ on }S_0.
\eeaa
Since $\FF^2$ does not vanish on $S_0$ we infer that $\rho(\FF)$ cannot vanish on $S_0$.

Consider now the Mars-Simon tensor \eqref{s6}. To show
that the Weyl tensor  $\SS$ vanishes along the $\HH^+\cup\HH^-$ it suffices to
show that all its  null  components (see Appendix)
$\a(\SS),\b(\SS),\rho(\SS), \aa(\SS),\bb(\SS)$,
relative to an arbitrary, adapted, null frame $(e_1, e_2, \lb, l)$, vanish along $\HH^+\cup\HH^-$. We first show that
\bea
\a(\SS)=\b(\SS)=0\quad \mbox{on} \,\, \HH^+,\qquad 
\aa(\SS)=\bb(\SS)=0 \quad \mbox{on} \,\, \HH^-.
\eea
Indeed, 
\beaa
\II(l,e_a,l,e_b)&=&0,\qquad
\II(e_a,l,\lb, l)=0,\quad
\II(l,\lb,l,\lb)=-1/4.
\eeaa
Therefore along $\HH^{+}$, where  $\a(\FF),  \, \a(\RR),\, \b(\RR)$
vanish,
\beaa
\a(\SS)_{ab}=\b(\SS)_a=0,
\eeaa
using the formula $\SS=\RR+6\QQ$. Similarly we infer that  $\aa(\SS)=\bb(\SS)=0$ along $\HH^-$.

We show now that $\rho(\SS)$ vanishes on $S_0$. This is where we need the main technical assumption \eqref{Main-Cond1} along  $S_0$,
\beaa
(1-\si)^4=-4M^2\FF^2.
\eeaa
Differentiating it along $S_0$ we find,
\beaa
0=\D_a (\FF^2(1-\sigma)^{-4})=(1-\sigma)^{-4}(\D_a\FF^2+4(1-\sigma)^{-1}\sigma_a).
\eeaa
On the other hand, recalling formula \eqref{Ernst-important}
\beaa
D_\a\FF^2+4(1-\sigma)^{-1}\FF^2\sigma_\a=2\T^\la \SS_{\la\a\ga\de} \FF^{\ga\de},
\eeaa
We deduce that
\bea
\T^\la \SS_{\la a\ga\de} \FF^{\ga\de}=0\text{ on }S_0.\label{vanishing1}
\eea
Recall  $\T$ is tangent on $S_0$  and can only vanish
at a discrete set of points (see assumption {\bf{SBF}} in subsection \ref{maintheorem}). Therefore, at a point where $g(\T,\T)^{1/2} $,  does not vanish
we  can   introduce an orthonormal frame $e_1, e_2$  with  $\T=\g(\T,\T)^{1/2}e_1$. 

We now expand the left  hand side of  \eqref{vanishing1} using \eqref{alexadd1} while setting the index $a=2$,
\beaa
0&=&\T^\la \SS_{\la 2 \ga\de} \FF^{\ga\de}=2\T^\la \SS_{\la 2 34} \FF^{34}+\T^\la \SS_{\la 2 cd} \FF^{cd}
=-2\T^\la \SS_{\la 2 34} \FF_{34}-i \T^\la \SS_{\la 2  cd}\in^{cd}\rho(\FF)\\
&=&-\g(\T,\T)^{1/2}\rho(\FF)\big( 2\SS_{1 2 34}+i\SS_{12 cd}\in ^{cd})
=4i\g(\T,\T)^{1/2}\rho(\FF)\rho(\SS).
\eeaa
The last equality follows from ( see  \eqref{dualWW-form})
  \beaa
  \SS_{1234}=-i\rho(\SS),\quad
  \SS_{12 cd}=-\in_{cd}\rho(\SS).
  \eeaa
  Therefore, at all points of $S_0$ where $\T$ does not vanish
  we infer that $\rho(\SS)=0$ (since $\rho(\FF)$ cannot vanish on $S_0$, due to \eqref{Main-Cond1} and \eqref{Main-Cond2}). Since the set of such points
  is  dense in $S_0$ we conclude that $\rho(\SS)$ vanishes
  identically on the bifurcate sphere $S_0$. 
   We have thus proved the following.
   \begin{proposition}
   The components $\a(\SS)$, $ \b(\SS)$  vanish along $\HH^+$
   while $\aa(\SS)$, $ \bb(\SS)$  vanish along $\HH^-$.  In addition, if \eqref{Main-Cond1} holds then $\rho(\SS)$ also vanishes  on $S_0$.
   \end{proposition}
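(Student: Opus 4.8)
The plan is to argue componentwise in an adapted null frame, using the explicit algebraic form $\SS=\RR+6\QQ$, with $\QQ_{\al\be\mu\nu}=(1-\sigma)^{-1}\big(\FF_{\al\be}\FF_{\mu\nu}-\frac{1}{3}\FF^2\II_{\al\be\mu\nu}\big)$. Fix an adapted null frame $(e_1,e_2,\lb,l)$ in $\mathbf{O}$ with $l$ tangent to the null generators of $\HH^+$ and $\lb$ tangent to those of $\HH^-$, as in the discussion above.

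For the first assertion, I would first invoke Proposition \ref{NE-hor}: since $\HH^+$ is a null, non-expanding hypersurface to which $\T$ is tangent, its null second fundamental form $\chi$ vanishes identically on $\HH^+$ and the curvature components $\a(\RR)$, $\b(\RR)$ vanish there (and symmetrically $\chib$, $\aa(\RR)$, $\bb(\RR)$ vanish on $\HH^-$). Next I would show $\a(\FF)=0$ on $\HH^+$: from $\g(\T,l)=0$ on $\HH^+$ (assumption {\bf SBS}) and $\chi\equiv 0$ one has $F(e_a,l)=-\g(\T,\D_{e_a}l)=-\chi(\T,e_a)=0$, while $\dual F_{a4}=\in_{a4b3}F^{b3}$ vanishes since $F_{b4}=0$, so $\a_a(\FF)=F_{a4}+i\dual F_{a4}=0$; the same computation gives $\aa(\FF)=0$ on $\HH^-$. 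Finally I would feed this into $\QQ$: along $\HH^+$ the relevant components are $\QQ_{a4b4}=(1-\sigma)^{-1}\big(\FF_{a4}\FF_{b4}-\frac{1}{3}\FF^2\II_{a4b4}\big)$ and $\QQ_{a434}=(1-\sigma)^{-1}\big(\FF_{a4}\FF_{34}-\frac{1}{3}\FF^2\II_{a434}\big)$, and both vanish because $\FF_{a4}=0$ there and $\II(l,e_a,l,e_b)=\II(e_a,l,\lb,l)=0$; combined with $\a(\RR)=\b(\RR)=0$ this gives $\a(\SS)=\b(\SS)=0$ on $\HH^+$. The argument on $\HH^-$ is identical with $l$ and $\lb$ interchanged.

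For the second assertion I would restrict to $S_0$, where all null components of $\FF$ except $\rho(\FF)$ vanish and $\FF^2=-4\rho(\FF)^2$. Since $\Re(1-\sigma)=1+\g(\T,\T)\ge 1$ on $S_0$ we have $1-\sigma\neq 0$ there, so \eqref{Main-Cond1} forces $\FF^2\neq 0$, hence $\rho(\FF)\neq 0$ on $S_0$. Differentiating \eqref{Main-Cond1} along a tangential direction $e_a$ yields $\D_a\FF^2+4(1-\sigma)^{-1}\FF^2\sigma_a=0$ on $S_0$; subtracting this from the identity \eqref{Ernst-important}, $\D_\rho\FF^2=-4(1-\sigma)^{-1}\FF^2\sigma_\rho+2\T^\nu\SS_{\nu\rho\ga\de}\FF^{\ga\de}$, gives $\T^\la\SS_{\la a\ga\de}\FF^{\ga\de}=0$ on $S_0$. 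At a point of $S_0$ where $\T\neq 0$ — such points are dense in $S_0$, since a nontrivial Killing field on $S_0$ vanishes only at finitely many points — I would choose a horizontal orthonormal frame with $\T=\g(\T,\T)^{1/2}e_1$, set $a=2$, and expand using $\FF_{ab}=-i\in_{ab}\rho(\FF)$, $\SS_{1234}=-i\rho(\SS)$, $\SS_{12cd}=-\in_{cd}\rho(\SS)$; the expression $\T^\la\SS_{\la 2\ga\de}\FF^{\ga\de}$ then reduces to a nonzero multiple of $\g(\T,\T)^{1/2}\rho(\FF)\rho(\SS)$, so $\rho(\SS)=0$ at that point, and $\rho(\SS)\equiv 0$ on $S_0$ by density.

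The frame bookkeeping and the contractions with $\II$ are routine. The only genuinely delicate step is the second assertion, which relies on having the Ernst-type identity \eqref{Ernst-important} relating $\D\FF^2$ to $\SS$ available — this is the payoff of the divergence computation in Section \ref{MarsSimon}. One should also be careful that \eqref{Main-Cond1} enters only to guarantee $\rho(\FF)\neq 0$ and to cancel the $\D\FF^2$ term, so that $\rho(\SS)=0$ on $S_0$ is extracted purely algebraically from $\T^\la\SS_{\la a\ga\de}\FF^{\ga\de}=0$ together with the already-established vanishing on $S_0$ of all components of $\FF$ other than $\rho(\FF)$.
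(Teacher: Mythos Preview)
Your proposal is correct and follows essentially the same route as the paper: invoke Proposition~\ref{NE-hor} for the vanishing of $\a(\RR),\b(\RR)$ and of $\chi$ on $\HH^+$, deduce $\a(\FF)=0$ from $\g(\T,l)=0$ and $\chi=0$, use the algebraic vanishing of the relevant $\II$-components to kill $\a(\QQ),\b(\QQ)$, and for $\rho(\SS)$ on $S_0$ differentiate \eqref{Main-Cond1} tangentially, compare with \eqref{Ernst-important}, and extract $\rho(\SS)=0$ from $\T^\la\SS_{\la a\ga\de}\FF^{\ga\de}=0$ in a frame aligned with $\T$, concluding by density. Your observation that $\Re(1-\sigma)\geq 1$ on $S_0$ already forces $\FF^2\neq 0$ via \eqref{Main-Cond1} (without separately invoking \eqref{Main-Cond2}) is a slight sharpening of the paper's phrasing.
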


   To show that  $\rho(\SS), \bb(\SS),
   \aa(\SS)$  vanish on $\HH^+$ we need to use the Bianchi
   equations  (see Theorem \ref{thm-MS-eqts}),  
\bea
\D^\si\SS_{\si\a \mu\nu}&=&\JJ(\SS)_{\a\mu\nu}=-6(1-\sigma)^{-1}\T^\la\SS_{\la\rho\ga\de}\bigg(\FF_{\a}^{\,\,\,\rho}\de^\ga_\mu\de^\de_\nu-\frac{2}{3}\FF^{\ga\de}\II_{\a\,\,\mu\nu}^{\,\,\,\rho}\bigg)\label{BianSS2}.
\eea
Assume, without loss of generality, that the null generating
vector-field $l$ is geodesic along $\HH^+$, i.e.
 $\D_l l=0$.   Since both $\b(\SS)=\a(\SS)=0$  along $\HH$
     we deduce\footnote{Alternatively
   we can use  the null  Bianchi identities of the Appendix.}
 directly  that $\rho(\SS)$ must verify the   equation,
 \bea
 \nab_l\rho(\SS)=- \JJ(\SS)_{434}.
 \eea
 To deduce that $\rho(\SS)$ vanishes identically on $\HH^+$
 it only remains to verify that 
 that $\JJ(\SS)_{434}$ vanishes on  $\HH^+$. Clearly
 \beaa
 \JJ(\SS)_{434}=-6(1-\sigma)^{-1}\T^\la\SS_{\la \rho\ga\de}\big(\FF_{4}^{\,\,\,\rho}\de^\ga_3\de^\de_4-\frac{2}{3}\FF^{\ga\de}\II_{4\,\,34}^{\,\,\,\rho}\big).
 \eeaa
Observe that the only choice of the index $\rho$ for which the expression inside brackets does not vanish is $\rho=4$. Thus
 \begin{equation*}
\begin{split}
  \JJ(\SS)_{434}&=-6(1-\sigma)^{-1}\T^\la\SS_{\la 4\ga\de}\big(\FF_{34}\de^\ga_3\de^\de_4+\frac{1}{6}\FF^{\ga\de}\big)\\
&=-6(1-\sigma)^{-1}\T^\la\SS_{\la 434}\FF_{34}-(1-\sigma)^{-1}\T^\la\SS_{\la 4\ga\de}\FF^{\ga\de}.
\end{split}
 \end{equation*}
 Since $\a(\SS), \b(\SS)$ vanish the only pair of indices $\ga\de$ for which
 $\T^\la\SS_{\la 4\ga\de}$ does not vanish is when either of the two indices is a $3$
 and the other is $a\in\{1,2\}$. Since $\al(\FF)=0$, it follows that $\JJ(\SS)_{434}$ vanishes identically  as stated. Thus 
$\rho(\SS)$ is constant along generators and vanishes on $S_0$.
We conclude that $\rho(\SS)$ vanishes identically on $\HH^+$.

To show that $\bb(\SS)$  also vanishes we derive
a transport equation for it along the generators of
$\HH^+$. In view of the vanishing of $\a(\SS), \b(\SS), \rho(\SS)$
we can directly deduce\footnote{We  also refer the reader to
 the Appendix for the definition of the horizontal covariant derivative $\nab_l$.}  (see also Appendix)  it from \eqref{BianSS2},
\beaa
\nab_L\bb(\SS)_a=\JJ(\SS)_{4a3}
\eeaa
Thus, since $\bb(\SS)$ vanishes on $S_0$,
to deduce that it vanishes everywhere on $\HH^+$
we only need to verify that  $\JJ(\SS)_{4a3}$ vanishes
identically on $\HH^+$. Now,
\begin{equation*}
\begin{split}
 \JJ(\SS)_{4a3}&=-6(1-\sigma)^{-1}\T^\la\SS_{\la\rho\ga\de}\big(\FF_{4}^{\,\,\,\rho}\de^\ga_a\de^\de_3-\frac{2}{3}\FF^{\ga\de}\II_{4\,\,a3}^{\,\,\,\rho}\big)\\
&=-6(1-\sigma)^{-1}\T^\la\SS_{\la4 a3}\FF_{43}+8(1-\sigma)^{-1}\T^\la\SS_{\la b34}\FF_{43}\II_{4\,\,a3}^{\,\,\,b}\\
&+4(1-\sigma)^{-1}\T^\la\SS_{\la bcd}\FF^{cd}\II_{4\,\,a3}^{\,\,\,b}+8(1-\sigma)^{-1}\T^\la\SS_{\la b4c}\FF^{4c}\II_{4\,\,a3}^{\,\,\,b}.
\end{split}
\end{equation*}
Since $\al(\SS)$, $\be(\SS)$ and $\rho(\SS)$ vanish, it follows that $\SS_{b4a3}=\SS_{ab34}=\SS_{abcd}=\SS_{4bcd}=\SS_{4b4c}=0$, which gives $\JJ(\SS)_{4a3}=0$.

 To show that $\aa(\SS)$ also vanishes on $\HH^+$
 we derive another transport equation for it. Since all
 other  components of $\SS$ have already been shown to vanish
 we easily derive, from \eqref{BianSS2},
 \bea
 \nab_L\aa(\SS)_{ab}=-\JJ(\SS)_{a3b}.
 \eea
 Since $\aa(\SS)$ vanishes on $S_0$ 
 it only remains to check that $\JJ(\SS)_{a3b}$
 vanishes identically.
    This  can be  checked as before    taking 
    advantage of  the cancellations 
 of    all the other null components of $\SS$. 
 Therefore $\SS$ vanishes 
 along the entire event horizon.

\section{Vanishing of $\SS$ in a neighborhood of the bifurcate sphere}\label{sectionS=01}

Let $\O_\ep=\{x\in \O:|u_-|< \ep, |\up|< \ep\}$ as in section \ref{preliminaries}. In this section we show that the tensor $\Ss$ vanishes in a neighborhood of the bifurcate sphere $S_0$ in $\mathbf{E}$.

\begin{proposition}\label{Lemmaa1}
There is $r_1=r_1(A_0)>0$ such that 
\begin{equation*}
\Ss\equiv 0\text{ in }\mathbf{O}_{r_1}\cap\mathbf{E}.
\end{equation*}
\end{proposition}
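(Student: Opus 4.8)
\emph{Strategy.} The plan is a quantitative unique continuation argument: propagate the vanishing of $\SS$ from the horizon $\HH^+\cup\HH^-$ (Proposition \ref{shorizon}) into $\E$, using the covariant wave equation \eqref{todo} and the general Carleman inequality of Proposition \ref{Cargen} (applied with $V=0$), with weights built from the optical functions $\up,\um$ of subsection \ref{subs:optical}. Recall $\SS$ is well defined near $S_0$ by Remark \ref{Swelldef}. The first reduction is to a scalar system: in a coordinate chart $\Phi^{x_0}$ of subsection \ref{coordinates}, Theorem \ref{wave} gives the componentwise bound $|\square_\g\SS_{\a\b\mu\nu}|\les\sum(|\SS|+|\D\SS|)$, the sum over all coordinate components. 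Hence it is enough to produce one $V=0$ pseudo-convex weight family, apply Proposition \ref{Cargen} to each component $\SS_I:=\SS_{\a\b\mu\nu}$, add the resulting inequalities, and absorb $\sum(|\SS_J|+|\D\SS_J|)$ into the left-hand side for $\lambda$ large.

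\emph{The Carleman weight.} The heart of the proof is the construction of this weight family, which I would package as a specialized Carleman estimate (Lemma \ref{Car}) obtained from Proposition \ref{Cargen}. For a base point $x_0\in\O_{\ep_0}\cap\E$ at distance $\sim\eps$ from both $\HH^+$ and $\HH^-$ (so $\up(x_0),\um(x_0)\sim\eps$), I would build a $V=0$ conditional pseudo-convex family $\{h_\eps\}$ modeled on $\up\um$ near $x_0$, normalized so that $h_\eps(x_0)=\eps$; such weights are necessarily ``singular'' in the sense that they degenerate as $x_0\to S_0$, reflecting that $\HH^+\cup\HH^-$ is characteristic and meets only transversally along $S_0$. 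To verify Definition \ref{psconvex} one uses that $\D\up,\D\um$ are null with $\g(\D\up,\D\um)=\Om\in(1/2,1]$ on $\O_{\ep_0}$, together with the Hessian identity
\begin{equation*}
\D^2(\up\um)=\um\,\D^2\up+\up\,\D^2\um+\D\up\otimes\D\um+\D\um\otimes\D\up .
\end{equation*}
The crucial positivity for the conditional convexity \eqref{po3} comes from the cross term $\D\up\otimes\D\um+\D\um\otimes\D\up$: on the ``bad cone'' $\g(X,X)=\g(X,\D h_\eps)=0$ the relation $\um\,\g(X,\D\up)=-\up\,\g(X,\D\um)$ makes it a nonnegative multiple of $\g(X,\D\um)^2$, which vanishes only when $\g(X,\D\up)=\g(X,\D\um)=\g(X,X)=0$, hence only when $X=0$, the horizontal $2$-plane being spacelike. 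The directions in which $\g(X,\D\up)$ or $\g(X,\D\um)$ is small, and the error terms $\um\,\D^2\up+\up\,\D^2\um$ (of size $O(\eps)$ after normalization), are absorbed by the $\eps^{-2}|X^\a\D_\a h_\eps|^2$ term of \eqref{po3}, the freedom to pick $\mu\sim\eps_1^{-1}$ in the $\mu\g_{\a\b}$ term, and the non-characteristic bound \eqref{po3.2}; the derivative conditions in \eqref{po5} follow from the quantitative bounds \eqref{boundedgeom2.1}.

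\emph{From the horizon to $\O_{r_1}\cap\E$.} With Lemma \ref{Car} available, I would apply it to $\phi=\theta\,\SS_I$ for cutoffs $\theta$ adapted to the level sets of $h_\eps$; splitting $\square_\g(\theta\SS_I)=\theta\,\square_\g\SS_I+[\square_\g,\theta]\SS_I$, absorbing $\theta\,\square_\g\SS_I$ as in the first step, and estimating the commutator (supported where $\nabla\theta\neq0$) with the help of the weight, one gets a local unique continuation statement: if $\SS$ vanishes on one side of a level set of $h_\eps$ through $x_0$, then $\SS$ vanishes in a quantitatively controlled neighbourhood of $x_0$. To get started near $S_0$ I would use that $\SS$ vanishes on $\HH^+\cup\HH^-$, and hence, via \eqref{boundedgeom2.2}--\eqref{boundedgeom2.3} applied twice, that $\SS=\up\um\,\SS'$ for a smooth tensor $\SS'$ near $S_0$ in $\overline{\E}$. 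A covering/continuity argument over a family of base points sweeping inward from the horizon --- all constants depending only on $A_0$ through $\eps_1$, using the compactness of $S_0$ and the wedge structure $\O_\eps\cap\overline{\E}=\{0\le\um<\eps,\ 0\le\up<\eps\}$ --- then propagates the vanishing to $\O_{r_1}\cap\E$ for some $r_1=r_1(A_0)>0$.

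\emph{Main obstacle.} The principal difficulty is the weight construction of the second step: exhibiting a family $\{h_\eps\}$ that is pseudo-convex in the quantitative sense of Definition \ref{psconvex} near the bifurcate sphere, where the relevant ``hypersurface'' $\HH^+\cup\HH^-$ is at once \emph{characteristic} and \emph{non-smooth}, so that no weight modeled on a single smooth non-characteristic hypersurface can be used and the positivity must be extracted entirely from the interplay of the two null gradients $\D\up,\D\um$ and the spacelikeness of the horizontal distribution. Tracking the error terms $\um\,\D^2\up+\up\,\D^2\um$, the negligible perturbation $e_\eps$, and the commutator contributions uniformly in $x_0$, so that the final $r_1$ depends only on $A_0$, is the delicate bookkeeping; the reduction to scalars and the covering argument are comparatively routine once Lemma \ref{Car} is established.
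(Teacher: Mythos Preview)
Your weight analysis is essentially right and matches the paper's: the key positivity for \eqref{po3} comes from the cross term $\D\up\otimes\D\um+\D\um\otimes\D\up$ in the Hessian of a product of the two optical functions, and the remaining terms are $O(1)$ errors handled by the large constants in Definition~\ref{psconvex}. The reduction to coordinate components and the absorption of the right-hand side of \eqref{todo} for large $\lambda$ are also exactly as in the paper.

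There is, however, a real architectural gap in the way you organize the unique continuation. You place the Carleman base point $x_0$ in $\E$ at distance $\sim\eps$ from both $\HH^+$ and $\HH^-$, and then propose a ``sweeping'' argument. But Proposition~\ref{Cargen} applies to $\phi\in C_0^\infty(B_{\eps^{10}}(x_0))$, and $\eps^{10}\ll\eps$, so at the very first step your Carleman ball does not touch the horizon, the only place where $\SS$ is known to vanish. No commutator term lies in a region where $\SS=0$, so the inequality yields nothing, and the sweep never starts. The factorization $\SS=\up\um\,\SS'$ you correctly record does not help here either: it is information on the horizon, which is outside $B_{\eps^{10}}(x_0)$.

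The paper avoids this by taking $x_0\in S_0$ itself and using the \emph{shifted} weight $h_\eps=\eps^{-1}(\up+\eps)(\um+\eps)$, which is strictly positive (so $\ln h_\eps$ is defined) throughout $B_{\eps^{10}}(x_0)$ even though $\up,\um$ vanish on $S_0$; your unshifted $\up\um$ would not work at such $x_0$. The Hessian computation \eqref{car:D2f} still isolates the dominant cross term $(\D^2h_\eps)_{34}=\eps^{-1}\Om^2+O(1)$, and the verification of \eqref{po3} (Step~8 of Proposition~\ref{Cargen} specialized as in Lemma~\ref{Car}) goes through. The vanishing on $\HH^+\cup\HH^-$ is then brought in not via a level-set hypothesis but through a second limiting parameter: the cutoff $\widetilde\eta_{\delta,\eps}=\mathbf{1}_\E\cdot\eta(\up\um/\delta)\cdot(1-\eta(N^{x_0}/\eps^{20}))$ introduces an inner boundary layer $A_\delta=\{\up\um\in(\delta/2,\delta)\}$ on which the commutator is $O(\delta^{-1})$; the factorization $\SS=\up\um\,\SS'$ restores a factor $\sim\delta$, the measure of $A_\delta$ is $o(1)$, and one lets $\delta\to0$ \emph{before} $\lambda\to\infty$. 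One then compares $e^{-\lambda f_\eps}$ on $B_{\eps^{40}}(x_0)$ with its value on the outer shell $\{N^{x_0}\in(\eps^{20}/2,\eps^{20})\}$ to get $\SS=0$ on $B_{\eps^{40}}(x_0)\cap\E$ in a single shot, for every $x_0\in S_0$; compactness of $S_0$ finishes. No sweeping/induction is needed, and none would be available.

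In short: move your base point to $S_0$, shift the weight to $(\up+\eps)(\um+\eps)$, and replace the ``continuity over base points'' scheme by the single $\delta\to0$ inner-cutoff argument; the rest of what you wrote is correct.
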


The rest of this section is concerned with the proof of Proposition \ref{Lemmaa1}. Recall, see Remark \ref{Swelldef}, that the tensor $\Ss$ is well defined and smooth on $\mathbf{O}_{\eps_2}$ for some $\eps_2=\eps_2(A_0)\in(0,\eps_0)$.
Recall that we have
\beaa
\g(L_{\pm},L_\pm)=0,\quad \g(L_+,L_-)=\Om>\frac 1 2\text{ in }\O_{\ep_0}.
\eeaa
Moreover both $L_+, L_-$ are orthogonal to the $2$-surfaces $S_{\um,\up}=\HH_{\um}\cap\HH_{\up}$.
We  choose, locally at any point $p\in S_{\um,\up}$,
an  orthonormal frame $(L_a)_{a=1,2}$ tangent to
$S_{\um,\up}$. Thus, relative to the null frame $L_1, L_2, L_3=L_-,
L_4=L_+$ the metric $\g$ takes the form,
\begin{equation}\label{Car30}
\begin{cases}
&\g_{ab}=\de_{ab}, \quad \g_{a3}=\g_{a4}=0,\quad   a,b=1,2\\
&\g_{33}=\g_{44}=0,  \quad \g_{34}= \Omega.
\end{cases}
\end{equation}
Also, for the inverse metric,
\begin{equation}\label{Car52}
\begin{cases}
&\g^{ab}=\de^{ab}, \quad \g^{a3}=\g^{a4}=0,\quad   a,b=1,2\\
&  \g^{33}=g^{44}=0,  \quad \g^{34}= \Omega^{-1}.
\end{cases}
\end{equation}

We denote by $O(1)$ any quantity with absolute value
uniformly bounded by a positive constant which depends 
only on $A_0$ (in particular $L_\al(\Omega)=O(1)$, $\al=1,2,3,4$). In view of the definitions of $u_{\pm}$ and $L_{\pm}$ 
we have,
\begin{equation}\label{lubounds}
L_1(u_\pm)=L_2(u_\pm)=L_-(u_-)=L_+(u_+)=0,\quad L_-(u_+)=L_+(u_-)=\Om.
\end{equation}
For $\ep\in(0,\ep_0]$ we define the weight  function in $\O_{\ep^2}$,
\begin{equation}\label{Car2}
h_\ep=\eps^{-1}(u_++\ep)(u_-+\ep).
\end{equation}
Observe that,
\bea
  L_4(h_\ep)=\eps^{-1}(u_++\ep)\Om,
\quad L_3(h_\ep)=\eps^{-1}(u_-+\ep)\Om,\quad L_a(h_\ep)=0,\,\,a=1,2.\label{eq:Car1.1}
\eea
Also, using \eqref{lubounds} and \eqref{eq:Car1.1}
\bea\label{car:D2f}
\begin{cases}
&(\D^2h_\ep)_{33}=O(1),\quad(\D^2h_\ep)_{44}=O(1),\\
&(\D^2h_\ep)_{34}=\eps^{-1}\Omega^2+O(1),\quad(\D^2h_\ep)_{ab}=O(1),\qquad a,b=1,2,\\
&(\D^2h_\ep)_{3a}=O(1),\quad(\D^2h_\ep)_{4a}=O(1),\qquad a=1,2.
\end{cases}
\eea

Assume $x_0\in S_0$ is a fixed point and define, using the coordinate chart $\Phi^{x_0}:B_1\to B_1(x_0)$, $N^{x_0}:B_1(x_0)\to[0,\infty)$,
\begin{equation}\label{normlocal}
N^{x_0}(x)=|(\Phi^{x_0})^{-1}(x)|^2.
\end{equation}
We state now the main Carleman estimate needed in the proof of Proposition \ref{Lemmaa1}.

\begin{lemma}\label{Car}
There is $\eps\in (0,\eps_2)$ sufficiently small and $\widetilde{C}_\eps$ sufficiently large such that for any $x_0\in S_0$, any $\lambda\geq\widetilde{C}_\eps$, and any $\phi\in C^\infty_0(B_{\eps^{10}}(x_0))$
\begin{equation}\label{Car1}
\lambda \|e^{-\lambda f_\eps}\phi\|_{L^2}+\|e^{-\lambda f_\eps}|D^1\phi|\,\|_{L^2}\leq \widetilde{C}_\eps\lambda^{-1/2}\|e^{-\lambda f_\eps}\,\square_{\g}\phi\|_{L^2},
\end{equation}
where $f_\ep=\ln(h_\eps+\eps^{12}N^{x_0})$, see definitions \eqref{Car2} and \eqref{normlocal}.
\end{lemma}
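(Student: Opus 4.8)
The plan is to obtain Lemma \ref{Car} as a special case of the general Carleman estimate Proposition \ref{Cargen}, applied with the trivial vector-field $V=0$: then the term $\eps^{-6}\|e^{-\lambda f_\eps}V(\phi)\|_{L^2}$ in \eqref{Car1gen} disappears and one recovers exactly \eqref{Car1}. Since the weight here is $f_\ep=\ln(h_\ep+\eps^{12}N^{x_0})$, it suffices to verify, uniformly for $x_0\in S_0$, that $e_\ep:=\eps^{12}N^{x_0}$ is a negligible perturbation in the sense of \eqref{smallweight}, and that the family $\{h_\ep\}$ from \eqref{Car2} is a $0$-conditional pseudo-convex family at $x_0$ in the sense of Definition \ref{psconvex}, for a suitable $\eps_1=\eps_1(A_0)$; the resulting $\eps$ and $\widetilde C_\eps$ depend only on $\eps_1$, hence only on $A_0$, so they are uniform in $x_0$. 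Negligibility of $e_\ep$ is immediate: in the chart $\Phi^{x_0}$ the function $N^{x_0}$ is just $|y|^2$, so on $B_{\eps^{10}}$ one has $N^{x_0}\leq\eps^{20}$ and $|D^jN^{x_0}|\leq\widetilde C$, whence $|D^je_\ep|\leq\widetilde C\eps^{12}\leq\eps^{10}$ for $j=0,\dots,4$. Also $B_{\eps^{10}}(x_0)\subset\mathbf O_{\eps^2}$ for $\eps$ small (since $|u_\pm|\leq A_0\eps^{10}$ there and $u_\pm(x_0)=0$), so $h_\ep$ is defined on $B_{\eps^{10}}(x_0)$, and, as in the remark after Definition \ref{psconvex}, once \eqref{po5} holds one gets $h_\ep+e_\ep\in[\eps/2,2\eps]$ there.

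For the pseudo-convexity we work in the null frame $L_1,L_2,L_3=L_-,L_4=L_+$ with $\g$ as in \eqref{Car30}--\eqref{Car52}, evaluating everything at $x_0\in S_0$, where $u_+=u_-=0$. The normalization \eqref{po5} is clear: $h_\ep(x_0)=\eps^{-1}\cdot\eps\cdot\eps=\eps$; the bounds $\sum_{j=1}^4\eps^j|D^jh_\ep|\leq\eps/\eps_1$ follow from the product structure of $h_\ep$ together with \eqref{lubounds} and the uniform bounds \eqref{boundedgeom2.1} on $u_\pm$ (each $\eps^j|D^jh_\ep|$ is $\eps\cdot O(1)$, where $O(1)$ is bounded by a constant depending only on $A_0$); and $V(h_\ep)=0$ trivially. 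For \eqref{po3.2}, \eqref{eq:Car1.1} gives $\D h_\ep=L_-+L_+$ at $x_0$, so $\D_\al h_\ep\D^\al h_\ep=2\g(L_-,L_+)=2\Om$, while \eqref{car:D2f} gives $\D^2h_\ep(L_-+L_+,L_-+L_+)=2\eps^{-1}\Om^2+O(1)$; hence the left side of \eqref{po3.2} equals $(2\Om)^2-\eps\bigl(2\eps^{-1}\Om^2+O(1)\bigr)=2\Om^2+O(\eps)\geq\eps_1^2$ once $\eps,\eps_1$ are small, using $\Om>1/2$.

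The only substantial point is \eqref{po3} with $V=0$. Writing $X=\sum_iX^iL_i$, \eqref{eq:Car1.1} and \eqref{Car30} give $X^\al\D_\al h_\ep=\Om(X^3+X^4)$ and $\g(X,X)=(X^1)^2+(X^2)^2+2\Om X^3X^4$, while \eqref{car:D2f} gives $\D^2h_\ep(X,X)=2\eps^{-1}\Om^2\,X^3X^4+O(1)\,|X|^2$, where $|X|^2=\sum_i(X^i)^2$ is comparable, with constants depending only on $A_0$, to the coordinate norm in \eqref{po3}. Taking $\mu=\eps_1^{-1}$, the right-hand side of \eqref{po3} equals
\begin{equation*}
\eps_1^{-1}\bigl((X^1)^2+(X^2)^2\bigr)+\eps^{-2}\Om^2\bigl((X^3)^2+(X^4)^2\bigr)+\bigl(2\eps_1^{-1}\Om-2\eps^{-1}\Om^2+2\eps^{-2}\Om^2\bigr)X^3X^4-O(1)\,|X|^2 .
\end{equation*}
For $\eps$ small relative to $\eps_1$ the coefficient of $X^3X^4$ lies strictly between $0$ and $2\eps^{-2}\Om^2$, so by Cauchy--Schwarz in $(X^3,X^4)$ the last three displayed terms dominate $c\,\eps^{-1}\bigl((X^3)^2+(X^4)^2\bigr)$ for a constant $c>0$; the point is that precisely in the direction $X^3=-X^4$, where the gradient term $\eps^{-2}\Om^2(X^3+X^4)^2$ degenerates, the Hessian term $-2\eps^{-1}\Om^2X^3X^4=2\eps^{-1}\Om^2(X^3)^2$ compensates. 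Absorbing the $O(1)\,|X|^2$ error into the large coefficients $\eps_1^{-1}$ and $c\,\eps^{-1}$ — which forces first $\eps_1$ and then $\eps$ to be small, both depending only on $A_0$ — yields \eqref{po3} with $\eps_1$ replaced by a smaller constant depending only on $A_0$. After relabeling $\eps_1$, all hypotheses of Proposition \ref{Cargen} hold, and \eqref{Car1} follows.

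I expect the verification of \eqref{po3}, i.e.\ the balancing of the Hessian of $h_\ep$ against $|\D h_\ep|^2$ in the null directions $L_\pm$ — which is the precise reason the weight is taken in the product form $h_\ep=\eps^{-1}(u_++\eps)(u_-+\eps)$, so that $(\D^2h_\ep)_{34}$ carries the dominant term $\eps^{-1}\Om^2$ — to be the only non-routine step; the remaining checks are bookkeeping with the uniform bounds of Section \ref{preliminaries}.
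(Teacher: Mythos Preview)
Your proposal is correct and follows essentially the same route as the paper: apply Proposition \ref{Cargen} with $V=0$, check that $\eps^{12}N^{x_0}$ is a negligible perturbation, and verify the three pseudo-convexity conditions \eqref{po5}--\eqref{po3} for $h_\eps$ using the null frame $L_1,\dots,L_4$ and the Hessian table \eqref{car:D2f}. The paper's verification of \eqref{po3} is organized the same way, with the only cosmetic difference that it takes $\mu=\eps_1^{-1/2}$ rather than your $\mu=\eps_1^{-1}$; either choice works since $\mu$ need only be large compared to the $O(1)$ errors and small compared to $\eps^{-1}$. One small slip: $\D_\al h_\eps\D^\al h_\eps=2\Omega^{-1}$, not $2\Omega$, but since $\Omega(x_0)=1$ this does not affect your computation.
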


\begin{proof}[Proof of Lemma \ref{Car}] It is clear that $B^{\eps^{10}}(x_0)\subseteq \O_{\ep^2}$ for $\eps$ sufficiently small (depending only on the constant $A_0$), thus the weight $f_\eps$ is well defined in $B_{\eps^{10}}(x_0)$. We apply Proposition \ref{Cargen} with $V=0$. It is clear that $\eps^{12}N^{x_0}$ is a negligible perturbation, in the sense of \eqref{smallweight}, for $\eps$ sufficiently small. It remains to prove that there is $\eps_1=\eps_1(A_0)>0$ such that the family of weights $\{h_\eps\}_{\eps\in (0,\eps_1)}$ satisfies conditions \eqref{po5}, \eqref{po3.2} and \eqref{po3}.

Let $\widetilde{C}$ denote constants that may depend only on $A_0$. The definition \eqref{Car2} easily gives $h_\eps(x_0)=\eps$, $|D^1h_\eps|\leq\widetilde{C}$ on $B_{\eps^{10}}(x_0)$, and $|D^jh_\eps|\leq\widetilde{C}\eps^{-1}$ on $B_{\eps^{10}}(x_0)$ for $j=2,3,4$. Thus condition \eqref{po5} is satisfied provided $\eps_1\leq \widetilde{C}^{-1}$.

Using \eqref{Car52}, \eqref{eq:Car1.1}, \eqref{car:D2f}, and $\Omega(x_0)=1$ we compute in the frame $L_1,L_2,L_3,L_4$
\begin{equation*}
\D^\al h_\eps(x_0)\D^\be h_\eps(x_0)(\D_\al h_\eps\D_\be h_\eps-\eps\D_\al\D_\be h_\eps)(x_0)=2+\eps O(1)\geq 1
\end{equation*}
if $\eps_1$ is sufficiently small. Thus condition \eqref{po3.2} is satisfied provided $\eps_1\leq \widetilde{C}^{-1}$.

Assume now $Y=Y^\al L_\al$ is a vector in $\mathbf{T}_{x_0}(\mathbf{M})$. We fix $\mu=\eps_1^{-1/2}$ and compute, using \eqref{eq:Car1.1}, \eqref{car:D2f}, and $\Omega(x_0)=1$,
\begin{equation*}
\begin{split}
&Y^\al Y^\be(\mu\g_{\al\be}-\D_\al\D_\be h_\eps)(x_0)+\eps^{-2}|Y^\al\D_\al h_\eps|^2\\
&=\mu((Y^1)^2+(Y^2)^2+2Y^3Y^4)-2\eps^{-1}Y^3Y^4+\eps^{-2}(Y^3+Y^4)^2+O(1)\sum_{\al=1}^4(Y^\al)^2\\
&\geq (\mu/2)[(Y^1)^2+(Y^2)^2]+(\eps^{-1}/2)[(Y^3)^2+(Y^4)^2]\\
&\geq (Y^1)^2+(Y^2)^2+(Y^3)^2+(Y^4)^2
\end{split}
\end{equation*}
if $\eps_1$ is sufficiently small. We notice now that we can write $Y=X^\al\partial_\al$ in the coordinate frame $\partial_1,\partial_2,\partial_3,\partial_4$, and $|X^\alpha|\leq \widetilde{C}(|Y^1|+|Y^2|+|Y^3|+|Y^4|)$ for $\al=1,2,3,4$. Thus condition \eqref{po3} is satisfied provided $\eps_1\leq \widetilde{C}^{-1}$, which completes the proof of the lemma.
\end{proof}

We prove now Proposition \ref{Lemmaa1}.

\begin{proof}[Proof of Proposition \ref{Lemmaa1}] In view of Lemma \ref{Car}, there are constants $\eps=\eps(A_0)\in(0,\eps_0)$ and $\widetilde{C}_\eps\geq 1$ such that, for any $x_0\in S_0$, $\lambda\geq\widetilde{C}_\eps$ and any $\phi\in C^\infty_0(B_{\eps^{10}}(x_0))$
\begin{equation}\label{Bar1}
\lambda \|e^{-\lambda f_\eps}\phi\|_{L^2}+\|e^{-\lambda f_\eps}|D^1\phi|\,\|_{L^2}\leq \widetilde{C}_\eps\lambda^{-1/2}\|e^{-\lambda f_\eps}\,\square_{\g}\phi\|_{L^2},
\end{equation}
where
\begin{equation}\label{Bar2}
f_\eps=\ln(\eps^{-1}(u_++\eps)(u_-+\eps)+\eps^{12}N^{x_0}).
\end{equation}
The constant $\eps$ will remain fixed in this proof. For simplicity of notation, we replace the constants $\widetilde{C}_\eps$ in \eqref{Bar1} with $\widetilde{C}$; since $\eps$ is fixed, these constants may depend only on the constant $A_0$. We will show that $\Ss\equiv 0$ in $B_{\ep^{40}}(x_0)\cap\mathbf{E}$ for any $x_0\in S_0$. This suffices to prove the proposition.

We fix $x_0\in S_0$ and, for $(j_1,\ldots,j_4)\in\{1,2,3,4\}^4$, we define using the vector-fields $\partial_\al$ induced by the coordinate chart $\Phi^{x_0}$
\begin{equation}\label{va5}
\phi_{(j_1\ldots j_4)}=\mathcal{S}(\partial_{j_1},\ldots ,\partial_{j_4}).
\end{equation}
The functions $\phi_{(j_1\ldots j_4)}:B_{\ep^{10}}(x_0)\to\mathbb{C}$ are smooth. Let $\eta:\mathbb{R}\to[0,1]$ denote a smooth function supported
in $[1/2,\infty)$ and equal to $1$ in $[3/4,\infty)$. For $\delta \in(0,1]$ we define,
\begin{equation}\label{pr2}
\begin{split}
\phi^{\delta,\eps}_{(j_1\ldots j_4)}&=\phi_{(j_1\ldots j_4)}\cdot \mathbf{1}_{\mathbf{E}}\cdot \eta(u_+u_-/ \delta)\cdot \big(1-\eta(N^{x_0}/\eps^{20})\big)\\
&=\phi_{(j_1\ldots j_4)}\cdot \widetilde{\eta}_{\delta,\eps}.
\end{split}
\end{equation}
Clearly, $\phi^{\delta,\eps}_{(j_1\ldots j_4)}\in C^\infty _0(B_{\eps^{10}}(x_0)\cap\mathbf{E})$. We would like to apply the inequality \eqref{Bar1} to the functions $\phi^{\delta,\eps}_{(j_1\ldots j_4)}$, and then let $\delta \to 0$ and $\lambda \to\infty$ (in this order).

Using the definition \eqref{pr2}, we have
\begin{equation*}
\square_\g\phi^{\delta,\eps}_{(j_1\ldots j_4)}=\widetilde{\eta}_{\delta,\eps}\cdot\square_\g\phi_{(j_1\ldots j_4)}+2\D_\al\phi_{(j_1\ldots j_4)}\cdot \D^\al \widetilde{\eta}_{\delta,\eps}+\phi_{(j_1\ldots j_4)}\cdot \square_\g\widetilde{\eta}_{\delta,\eps}.
\end{equation*}
Using the Carleman inequality \eqref{Bar1}, for any $(j_1,\ldots j_4)\in\{1,2,3,4\}^4$ we have
\begin{equation}\label{va10}
\begin{split}
&\lambda\cdot \|e^{-\lambda f_{\eps}}\cdot \widetilde{\eta}_{\delta,\eps}\phi_{(j_1\ldots j_4)} \|_{L^2}+\|e^{-\lambda f_{\eps}}\cdot \widetilde{\eta}_{\delta,\eps}|D^1\phi_{(j_1\ldots j_4)}|\, \|_{L^2}\\
&\leq \widetilde{C}\lambda ^{-1/2}\cdot \|e^{-\lambda f_{\eps}}\cdot \widetilde{\eta}_{\delta,\eps}\square_\g\phi_{(j_1\ldots j_4)}\|_{L^2}\\
&+\widetilde{C}\Big[\|e^{-\lambda f_{\eps}}\cdot \D_\al\phi_{(j_1\ldots j_4)}\D^\al \widetilde{\eta}_{\delta,\eps} \|_{L^2}+\|e^{-\lambda f_{\eps}}\cdot \phi_{(j_1\ldots j_4)} ( |\square_\g\widetilde{\eta}_{\delta,\eps}|+|D^1\widetilde{\eta}_{\delta,\eps}| )\|_{L^2}\Big],
\end{split}
\end{equation}
for any $\lambda\geq\widetilde{C}$. We estimate now $|\square_\g\phi_{(j_1\ldots j_4)}|$. Using Theorem \ref{wave} and the definition \eqref{s9.2}, in $B_{\eps^{10}}(x_0)$ we estimate pointwise
\begin{equation}\label{va11}
|\square_\g\phi_{(j_1\ldots j_4)}|\leq M\sum_{l_1,\ldots,l_4}\big(|D^1\phi_{(l_1\ldots l_4)}|+|\phi_{(l_1\ldots l_4)}|\big),
\end{equation}
for some large constant $M$. We add inequalities \eqref{va10} over $(j_1,\ldots,j_4)\in\{1,2,3,4\}^4$. The key observation is that, in view of \eqref{va11}, the first term in the right-hand side of \eqref{va10} can be absorbed into the left-hand side for $\lambda$ sufficiently large. Thus, for any $\lambda$ sufficiently large and $\delta \in(0,1]$,
\begin{equation}\label{va12}
\begin{split}
&\lambda\sum_{j_1,\ldots ,j_4}\|e^{-\lambda f_{\eps}}\cdot \widetilde{\eta}_{\delta,\eps}\phi_{(j_1\ldots j_4)} \|_{L^2}\\
&\leq \widetilde{C}\sum_{j_1,\ldots ,j_4}\Big[\|e^{-\lambda f_{\eps}}\cdot \D_\al\phi_{(j_1\ldots j_4)}\D^\al \widetilde{\eta}_{\delta,\eps} \|_{L^2}+\|e^{-\lambda f_{\eps}}\cdot \phi_{(j_1\ldots j_4)} ( |\square_\g\widetilde{\eta}_{\delta,\eps}|+|D^1\widetilde{\eta}_{\delta,\eps}| )\|_{L^2}\Big].
\end{split}
\end{equation}

We would like to let $\delta\to 0$ in \eqref{va12}. For this, we observe first that the functions $\D_\al\phi_{(j_1\ldots j_4)}\D^\al \widetilde{\eta}_{\delta,\eps}$ and $( |\square_\g\widetilde{\eta}_{\delta,\eps}|+|D^1\widetilde{\eta}_{\delta,\eps}| )$ vanish outside the set $\mathbf{A}_\delta\cup\widetilde{\mathbf{B}}_\eps$, where
\begin{equation*}
\begin{split}
&\mathbf{A}_\delta=\{x\in B_{\eps^{10}}(x_0)\cap\mathbf{E}:u_+(x)u_-(x)\in(\delta/2,\delta )\};\\
&\mathbf{B}_\eps=\{x\in B_{\eps^{10}}(x_0)\cap\mathbf{E}:N^{x_0}\in(\eps^{20}/2,\eps^{20})\}.
\end{split}
\end{equation*}
In addition, since $\phi_{(j_1\ldots j_4)}=0$ on $\mathbf{O}_{\ep_2}\cap[\delta(\II^{-}(\M^{(end)}))\cup \delta(\II^{+}(\M^{(end)}))]$ (see section \ref{horizon}), it follows from \eqref{boundedgeom2.2} and \eqref{boundedgeom2.3} that there are smooth functions ${\phi'}_{(j_1\ldots j_4)}:\mathbf{O}_{\ep_2}\to\mathbb{C}$ such that
\begin{equation}\label{va20}
\phi_{(j_1\ldots j_4)}=u_+u_-\cdot {\phi'}_{(j_1\ldots j_4)}\text{ in }\mathbf{O}_{\ep_2}.
\end{equation}

We show now that
\begin{equation}\label{va23}
\begin{split}
|\square_\g\widetilde{\eta}_{\delta,\eps}|+|D^1\widetilde{\eta}_{\delta,\eps}|&\leq \widetilde{C}(\mathbf{1}_{{\mathbf{B}}_\eps}+(1/ \delta)\mathbf{1}_{\mathbf{A}_\delta}).
\end{split}
\end{equation}
The  inequality for $|D^1\widetilde{\eta}_{\delta,\eps}|$ follows directly from the definition \eqref{pr2}. Also, using again the definition,
\begin{equation*}
|\D^\al\D_\al\widetilde{\eta}_{\delta,\eps}|\leq |\D^\al\D_\al(\mathbf{1}_{\mathbf{E}}\cdot \eta(u_+u_-/ \delta))|\cdot \big(1-\eta(N^{x_0}/\eps^{20})\big)+\widetilde{C}(\mathbf{1}_{{\mathbf{B}}_\eps}+(1/ \delta)\mathbf{1}_{\mathbf{A}_\delta}).
\end{equation*}
Thus, for \eqref{va23}, it suffices to prove that
\begin{equation}\label{va23.7}
\mathbf{1}_{\mathbf{E}\cap B_{\eps^{10}}(x_0)}\cdot |\D^\al\D_\al(\eta(u_+u_-/ \delta))|\leq \widetilde{C}/ \delta\cdot \mathbf{1}_{\mathbf{A}_\delta}.
\end{equation}
Since $u_+,u_-,\eta$ are smooth functions, for \eqref{va23.7} it suffices to prove that
\begin{equation}\label{va89}
\delta^{-2}|\D^\al(u_+u_-)\D_\al(u_+u_-)|\leq\widetilde{C}/ \delta\text{ in }\mathbf{A}_\delta, 
\end{equation}
which follows from \eqref{lubounds}.
 
We show now that 
\begin{equation}\label{va21}
|\D_\al\phi_{(j_1\ldots j_4)}\D^\al \widetilde{\eta}_{\delta,\eps}|\leq  \widetilde{C}_{\phi'}(\mathbf{1}_{{\mathbf{B}}_\eps}+\mathbf{1}_{\mathbf{A}_\delta}),
\end{equation}
where the constant $\widetilde{C}_{\phi'}$ depends on the smooth functions ${\phi'}_{(j_1\ldots j_4)}$ defined in \eqref{va20}. Using the formula \eqref{va20}, this follows easily from \eqref{va89}.

It follows from \eqref{va20}, \eqref{va23}, and \eqref{va21} that
\begin{equation*}
|\D_\al\phi_{(j_1\ldots j_4)}\D^\al \widetilde{\eta}_{\delta,\eps}|+|\phi_{j_1\ldots j_4}|( |\square_\g\widetilde{\eta}_{\delta,\eps}|+|D^1\widetilde{\eta}_{\delta,\eps}| )\leq\widetilde{C}_{\phi'}(\mathbf{1}_{{\mathbf{B}}_\eps}+\mathbf{1}_{\mathbf{A}_\delta}). 
\end{equation*}
Since $\lim_{\delta\to 0}\|\mathbf{1}_{\mathbf{A}_\delta}\|_{L^2}=0$, we can let $\delta\to
0$ in \eqref{va12} to conclude that
\begin{equation}\label{va13}
\lambda\sum_{j_1,\ldots ,j_4}\|e^{-\lambda f_{\eps}}\cdot \mathbf{1}_{B_{\eps^{10}/2}(x_0)\cap\mathbf{E}}\cdot \phi_{(j_1\ldots j_4)} \|_{L^2}\leq \widetilde{C}_{\phi'}\|e^{-\lambda f_{\eps}}\cdot \mathbf{1}_{{\mathbf{B}}_\eps}\|_{L^2}
\end{equation}
for any $\lambda$ sufficiently large. Finally, using the definition \eqref{Bar2}, we observe that
\begin{equation*}
\inf_{B_{\eps^{40}}(x_0)\cap\mathbf{E}}\,e^{-\lambda f_{\eps}}\geq e^{-\lambda \ln[\eps+\eps^{32}/2]}\geq \sup_{{\mathbf{B}}_\eps}\,e^{-\lambda f_{\eps}}.
\end{equation*}
It follows from \eqref{va13} that
\begin{equation*}
\lambda\sum_{j_1,\ldots ,j_4}\|\mathbf{1}_{B_{\eps^{40}}(x_0)\cap\E}\cdot \phi_{(j_1\ldots j_4)} \|_{L^2}\leq \widetilde{C}_{\phi'}\|\mathbf{1}_{{\mathbf{B}}_\eps}\|_{L^2}
\end{equation*}
for any $\lambda$ sufficiently large. We let $\lambda\to\infty$ to conclude that $\phi_{(j_1\ldots j_4)}=0$ in $B_{\eps^{40}}(x_0)\cap\E$, which completes the proof of the proposition.
\end{proof}

\section{Consequences of the vanishing of $\SS$}\label{section2}

We assume in this section that $\mathbf{N}\subseteq \widetilde{\mathbf{M}}$ is a open set, $S_0\subseteq\mathbf{N}$, $\mathbf{N}\cap\mathbf{E}$ is connected, and
\begin{equation}\label{big0}
\begin{cases}
&1-\sigma\neq 0\text{ in }\mathbf{N};\\
&\Ss_{\al\be\mu\nu}=\RR_{\al\be\mu\nu}+6(1-\sigma)^{-1}\big(\FF_{\al\be}\FF_{\mu\nu}-\frac{1}{3}\FF^2\II_{\al\be\mu\nu}\big)=0\text{ in }\mathbf{N}\cap\mathbf{E}.
\end{cases}
\end{equation}
It follows from the assumption \eqref{Main-Cond1}, and the identities \eqref{Ernst-important} and \eqref{big0} (which give $\D_\rho(\FF^2(1-\sigma)^{-4})=0)$ in $\mathbf{N}\cap\mathbf{E}$) that
\begin{equation}\label{big1}
-4M^2\FF^2=(1-\sigma)^4\text{ in }\mathbf{N}\cap\mathbf{E}.
\end{equation}
We define the smooth function $P=y+iz:\mathbf{N}\to\mathbb{C}$,
\begin{equation}\label{se90}
P=y+iz=(1-\sigma)^{-1}.
\end{equation}

Since $-\FF^2/4=(4MP^2)^{-2}\neq 0$ (see \eqref{big1}), there are null vector-fields $l,\ul$, locally around every point in $\mathbf{N}$, such that
\begin{equation}\label{big2}
\FF_{\al\be}l^\be=(4MP^2)^{-1}l_\a,\,\,\FF_{\al\be}\ul^\b=-(-4MP^2)^{-1}\ul_\a,\,\,\text{ and }l^\a\ul_\a=-1\text{ in }\mathbf{N}\cap\mathbf{E}.
\end{equation}
We fix a complex-valued null vector-field $m$ on $\mathbf{N}$ such that $(m,\overline{m},\ul,l)=(e_1,e_2,e_3,e_4)$ is a complex null tetrad, see the definitions in subsection \ref{complexnull}. We may also assume that $(m,\overline{m},\ul,l)$ has positive orientation, i.e.
\begin{equation*}
\in_{\al\be\mu\nu}m^\al\overline{m}^\be\ul^\mu l^\nu=i.
\end{equation*}

We prove now some identities. Most of these identities, with the exception of Proposition \ref{uniformy} and the computation of the Hessian of $y$ in Lemma \ref{coefficients}, were derived by Mars \cite{Ma1}; for the sake of completeness we rederive them in our notation. 
It follows from \eqref{big2} and \eqref{big1} that, in $\mathbf{N}\cap\mathbf{E}$,
\begin{equation}\label{se5}
\FF_{\a\b}=\frac{1}{4MP^2}\big(-l_\a\ul_\b+l_\b\ul_\a-i\in_{\al\be\mu\nu}l^\mu\ul^\nu\big).
\end{equation}

Using \eqref{se5}, we compute easily
\begin{equation}\label{se6'}
\FF_{41}=\FF_{42}=\FF_{31}=\FF_{32}=0\text{ and }\FF_{43}=\FF_{21}=1/(4MP^2).
\end{equation}
Using \eqref{big0} and \eqref{se6'} we compute
\begin{equation}\label{se6}
\begin{split}
&\RR_{4141}=\RR_{4242}=0\quad\text{ thus }\Psi_{(2)}(R)=0;\\
&\RR_{3131}=\RR_{3232}=0\quad\text{ thus }\underline{\Psi}_{(2)}(R)=0;\\
&\RR_{1434}=\RR_{2434}=0\quad\text{ thus }\Psi_{(1)}(R)=0;\\
&\RR_{1343}=\RR_{2343}=0\quad\text{ thus }\underline{\Psi}_{(1)}(R)=0;\\
&\RR_{2314}=\frac{1}{4M^2P^3},\,\RR_{1324}=0\quad\text{ thus }\Psi_{(0)}(R)=\frac{1}{8M^2P^3}.\\
\end{split}
\end{equation}
We use now the first $4$ Bianchi identities \eqref{Bi1}--\eqref{Bi4} to conclude that
\begin{equation}\label{se7}
\xi=\xib=\va=\vab=0\text{ in }\mathbf{N}\cap\mathbf{E}.
\end{equation}
The remaining $4$ Bianchi identities, \eqref{Bi5}--\eqref{Bi8} give
\begin{equation}\label{se8}
DP=\theta P,\,\,\,\underline{D} P=\overline{\thb} P,\,\,\,\delta P=\eta P,\,\,\,\overline{\delta}P=\overline{\etab} P.
\end{equation}

We analyze now the functions $y$ and $z$. By contracting \eqref{se5} with $2\T^\a$ and using $2\T^\a\FF_{\a\b}=\sigma_\be=\D_\be\sigma$ we derive
\begin{equation}\label{se9}
\D_\be y=\frac{1}{2M}\big[-(\T^\al l_\al)\ul_\b+(\T^\al \ul_\al)l_\be\big]\text{ and }\D_\b z=\frac{-1}{2M}\in_{\al\be\mu\nu}\T^\a l^\mu\ul^\nu.
\end{equation}
In particular,
\begin{equation}\label{se9.6}
\delta y=\overline{\delta}y=Dz=\underline{D} z=0.
\end{equation}
Using \eqref{se8} it follows that
\begin{equation}\label{se9.7}
Dy=\theta P,\,\,\underline{D} y=\overline{\thb} P,\,\,\delta z=-i\eta P,\,\,\overline{\delta}z=-i\overline{\etab} P.
\end{equation}
In particular $\theta P=\overline{\theta P}$, $\overline{\thb} P=\thb\overline{P}$, $-\eta P=\etab\overline{P}$, and, using again \eqref{se9},
\begin{equation}\label{se9.3}
\T^\al l_\al=2M\theta P,\,\,\T^\al\Ll_\al=-2M\overline{\thb} P.
\end{equation}
Using \eqref{se9.6} and \eqref{se9.7} we rewrite \eqref{se9} in the form
\begin{equation}\label{se9.9}
\D_\b y=-\overline{\thb} Pl_\b-\theta P\ul_\b,\,\,\,\,\D_\b z=-i\overline{\etab} P m_\b-i\eta P\overline{m}_\b.
\end{equation}

A direct computation using the definition of $P$ shows that
\begin{equation*}
\D_\a P\D^\a P=\frac{\D_\a\sigma \D^\a\sigma}{(1-\sigma)^4}=-\frac{\T^\al\T_\al}{4M^2}.
\end{equation*}
The real part of this identity and $-\T^\al\T_\al=\Re\sigma$ give
\begin{equation}\label{se10}
\D_\a y\D^\a y-\D_\a z\D^\a z=\frac{-\T^\al\T_\al}{4M^2}=\frac{1}{4M^2}\Big(1-\frac{y}{y^2+z^2}\Big).
\end{equation}
Using \eqref{se9.9} this gives
\begin{equation}\label{se10.1}
8M^2(\eta\overline{\etab}-\th\overline{\thb})P^3\overline{P}=y^2-y+z^2.
\end{equation}

\begin{lemma}\label{MarsMain}
There is a constant $B\in[0,\infty)$ such that
\begin{equation}\label{se11}
\D_\a z\D^\a z=\frac{B-z^2}{4M^2(y^2+z^2)}\text{ in }\mathbf{N}\cap\mathbf{E}.
\end{equation}
In addition $z^2\leq  B$ in $\mathbf{N}\cap\mathbf{E}$. 
\end{lemma}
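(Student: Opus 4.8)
The plan is to produce $B$ as the (constant) value on $\mathbf N\cap\mathbf E$ of the scalar
$B:=4M^2(y^2+z^2)\,\D_\a z\D^\a z+z^2$, and then to read off both assertions of the lemma from the identity $\D_\a z\D^\a z=(B-z^2)/\big(4M^2(y^2+z^2)\big)$ that results. Note $y^2+z^2=|P|^2=|1-\sigma|^{-2}>0$ on $\mathbf N$, so there is no division issue.

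First I would record the first-order facts about $z$. Contracting \eqref{se9.9} with itself and using $\g(m,\overline m)=1$, $\g(l,\ul)=-1$, the vanishing of the remaining tetrad products, and $\eta P=-\etab\overline P$ from \eqref{se9.7}, one gets $\D_\a z\D^\a z=-2\eta\overline{\etab}P^2=2|\etab|^2(y^2+z^2)$; in particular $\D_\a z\D^\a z$ is real and $\ge 0$ (equivalently, $\D z$ is spacelike, lying in the positive-definite plane $\mathrm{span}\{e_1,e_2\}$). This non-negativity is the only thing needed at the end. In the same way one sees $B=8M^2|\etab|^2(y^2+z^2)^2+z^2$ pointwise, so $B$ is automatically $\ge z^2\ge 0$; the content is that it is \emph{constant}.

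The computational core is to compute the Hessian $\D_\mu\D_\nu z$ in the complex null tetrad, exactly as Lemma \ref{coefficients} does for $y$. Differentiating \eqref{se9.9} requires the covariant derivatives of the tetrad legs $m,\overline m,l,\ul$ (given by the connection coefficients), the derivatives $DP,\underline DP,\delta P,\overline\delta P$ (from \eqref{se8}), and the derivatives of $\eta,\etab$ (from the null structure and Bianchi equations of the Appendix). Here one exploits that, by \eqref{se7}, $\xi=\xib=\va=\vab=0$ and, by \eqref{se6}, the only nonvanishing curvature component of $\RR$ is $\Psi_{(0)}(R)=\tfrac1{8M^2P^3}$, so the structure equations collapse to a small closed system relating $\theta,\thb,\eta,\etab,\zeta$ and their derivatives to $y$ and $z$; I expect $\D^2z$ to emerge as an explicit quadratic form in $l,\ul,m,\overline m$ with coefficients rational in $y,z$ and affine in $\zeta$. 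With the Hessian in hand one computes $\D_\mu B=4M^2\big[2(y\,\D_\mu y+z\,\D_\mu z)\,\D_\a z\D^\a z+2(y^2+z^2)\,\D_\mu\D_\a z\,\D^\a z\big]+2z\,\D_\mu z$ and checks, using also \eqref{se9.9}, \eqref{se10}, \eqref{se10.1} and $y^2+z^2=P\overline P$, that every term cancels. (An equivalent route, still needing the transport equations for $\eta,\etab$ but bypassing the full Hessian, is to write $B=-8M^2\eta\overline{\etab}P^3\overline P+z^2$ via the first step and verify $DB=\underline DB=\delta B=\overline\delta B=0$ directly, using $D\overline P=DP$, $\delta\overline P=\etab\overline P$, etc.) Since $\mathbf N\cap\mathbf E$ is connected, $\D B\equiv0$ forces $B$ to be constant, which is \eqref{se11}; finally $\D_\a z\D^\a z\ge0$ gives $B-z^2\ge0$ pointwise, hence $z^2\le B$ on $\mathbf N\cap\mathbf E$, and evaluating anywhere $B\ge0$.

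The main obstacle is the Hessian computation: assembling the structure and Bianchi equations from the Appendix, and using the many vanishing quantities, to obtain the explicit $\D^2z$. This is bookkeeping rather than conceptual — it is the precise analogue of the computation carried out for $\D^2y$ in Lemma \ref{coefficients} — but it is where the content sits, since the remaining steps are algebra. A subsidiary point requiring care is tracking reality: $\theta P$ and $\overline{\thb}P$ are real, $\eta P=-\etab\overline P$, $D\overline P=DP$, etc.; these relations are exactly what make $B$ and $\D_\a z\D^\a z$ real.
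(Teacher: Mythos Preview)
Your proposal is correct, and the parenthetical ``equivalent route'' you sketch --- writing $B=-8M^2\eta\overline{\etab}P^3\overline P+z^2$ and checking $DB=\underline D B=\delta B=\overline\delta B=0$ directly from the transport equations for $\eta,\etab$ --- is essentially what the paper does; your primary route through the full Hessian of $z$ would also work but is more labor than needed.

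One point of comparison worth noting: the paper rewrites $Z=4M^2P\overline P\,\D_\a z\D^\a z$ in three different but equivalent forms, choosing whichever makes a given directional derivative easiest. For $\underline D$ it uses $Z=8M^2P^2\overline P^2\etab\,\overline{\etab}$ together with the Ricci equation for $\underline D\etab$; for $D$ it uses $Z=8M^2P^2\overline P^2\eta\,\overline{\eta}$ and the Ricci equation for $D\eta$; but for $\delta$ it switches via \eqref{se10.1} to $Z+z^2=-8M^2P^2\overline P^2\th\thb-y^2+y$ and uses the Ricci equations \eqref{fo26}, \eqref{fo26'} for $\delta\th$, $\delta\thb$. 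This last substitution is a genuine shortcut: it avoids having to extract $\delta\eta$, $\delta\etab$ from \eqref{fo24}, \eqref{fo24'} (which would work, but those equations mix $\delta\eta$ with $\underline D\va$ etc.\ and require you to first invoke $\va=\vab=0$). Your approach would get there either way; the paper's trick just keeps the bookkeeping shorter.
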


\begin{proof}[Proof of Lemma \ref{MarsMain}]
For \eqref{se11} it sufficces to prove that
\begin{equation}\label{se21}
4M^2P\overline{P}\cdot \D_\a z\D^\a z+z^2=B.
\end{equation}
Let $Z=4M^2P\overline{P}\cdot \D_\a z\D^\a z$. To show $\underline{D}(Z+z^2)=0$ we use the formula $Z=8M^2P^2\overline{P}^2\etab\,\overline{\etab}$ (which follows from \eqref{se9.9} and $-\eta P=\etab\overline{P}$), the identities \eqref{se8}, $\overline{\thb} P=\thb\overline{P}$, and the Ricci equation (see \eqref{fo23'}, \eqref{se6}, and \eqref{se7})
\begin{equation*}
\underline{D}\etab=\thb(\eta-\etab)-\Gamma_{123}\etab.
\end{equation*}
Indeed,
\begin{equation*}
\begin{split}
\underline{D}(Z+z^2)&=8M^2P^2\overline{P}^2\etab\,\overline{\etab}\Big( \frac{2\underline{D}P}{P}+\frac{2\underline{D}\overline{P}}{\overline{P}}+\frac{\underline{D}\etab}{\etab}+\frac{\underline{D}\overline{\etab}}{\overline{\etab}}\Big)\\
&=8M^2P^2\overline{P}^2\etab\,\overline{\etab}[2\overline{\thb}+2\thb-\thb(\overline{P}/P+1)-\Gamma_{123}-\overline{\thb}(P/\overline{P}+1)-\Gamma_{213}]\\
&=0.
\end{split}
\end{equation*}

To show $D(Z+z^2)=0$ we use the formula $Z=8M^2P^2\overline{P}^2\eta \overline{\eta}$ (which follows from \eqref{se9.9} and $-\eta P=\etab\overline{P}$), the identities \eqref{se8}, $\theta P=\overline{\theta}\overline{P}$, and the Ricci equation (see \eqref{fo23}, \eqref{se6}, and \eqref{se7})
\begin{equation*}
D\eta=\th(\etab-\eta)-\Gamma_{124}\eta.
\end{equation*}
Indeed,
\begin{equation*}
\begin{split}
D(Z+z^2)&=8M^2P^2\overline{P}^2\eta\,\overline{\eta}\Big( \frac{2DP}{P}+\frac{2D\overline{P}}{\overline{P}}+\frac{D\eta}{\eta}+\frac{D\overline{\eta}}{\overline{\eta}}\Big)\\
&=8M^2P^2\overline{P}^2\eta\,\overline{\eta}[2\th+2\overline{\th}-\th(P/\overline{P}+1)-\Gamma_{124}-\overline{\th}(\overline{P}/P+1)-\Gamma_{214}]\\
&=0.
\end{split}
\end{equation*}

Finally, to show that $\delta(Z+z^2)=0$ we use the formula
\begin{equation*}
Z+z^2=-8M^2 P^2\overline{P}^2\th\thb-y^2+y,
\end{equation*}
which follows from \eqref{se10.1} and $\overline{\thb} P=\thb\overline{P}$, the identities \eqref{se8}, $\th P=\overline{\th }\overline{P}$, and the Ricci equations (see \eqref{fo26}, \eqref{fo26'}, \eqref{se6}, and \eqref{se7})
\begin{equation*}
\begin{cases}
&\delta\th=-\ze\th-\eta(\theta-\overline{\th});\\
&\delta\thb=\ze\thb-\etab(\thb-\overline{\thb}).
\end{cases}
\end{equation*}
Indeed,
\begin{equation*}
\begin{split}
\delta(Z+z^2)&=-8M^2P^2\overline{P}^2\th\thb \Big(\frac{\delta\th}{\th}+\frac{\delta\thb}{\thb}+\frac{2\delta P}{P}+\frac{2\delta \overline{P}}{\overline{P}}\Big)\\
&=-8M^2P^2\overline{P}^2\th\thb[-\ze-\eta(1-P/\overline{P})+\ze-\etab(1-\overline{P}/P)+2\eta+2\etab]\\
&=0.
\end{split}
\end{equation*}
This completes the proof of \eqref{se21}.
\end{proof} 

It follows from \eqref{se11} and \eqref{se10} that
\begin{equation}\label{se30}
\D_\a y\D^\a y=\frac{y^2-y+B}{4M^2(y^2+z^2)}.
\end{equation}
Using \eqref{se9.3} and \eqref{se9.9}, it follows that
\begin{equation}\label{se31}
-\th\overline{\thb} P^2=\frac{y^2-y+B}{8M^2(y^2+z^2)}=\frac{(\T^\al l_\al)\cdot(\T^\al\ul_\al)}{4M^2}.
\end{equation}
We express also the vector $\T$ in the complex null tetrad $(\Mm,\Omm,\ul,l)$. Using \eqref{se5}, and \eqref{se9},
\begin{equation}\label{se70}
\T_\a=(\FF^2/4)^{-1}{\FF_\a}^\mu\T^\b\FF_{\b\mu}=-(\T^\b\ul_\b)l_\a-(\T^\be l_\be)\ul_\a-2M\in_{\al\be\mu\nu}\D^\be zl^\mu \ul^\nu. 
\end{equation}

We prove now a uniform bound on the gradient of the function $y$.

\begin{proposition}\label{uniformy}
There is a constant $\widetilde{C}=\widetilde{C}(A_0)$ that depends only on $A_0$ such that
\begin{equation}\label{unifybound}
|D^1y|\leq\widetilde{C}\text{ in }\mathbf{N}.
\end{equation}
\end{proposition}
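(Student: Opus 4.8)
The plan is to estimate $|D^1y|$ separately on the region where $\T$ is uniformly timelike and on its complement, which contains the ergoregion as well as the part of $\mathbf{N}$ lying outside $\mathbf{E}$. First I would record the elementary consequences of \eqref{boundgeom}: the coordinate components of $\g$, $\g^{-1}$, $\T$ and of $F_{\al\be}=\D_\al\T_\be$ are all $\lesssim_{A_0}1$, hence so are $|\si_\mu|$, $|\FF^2|$ and $|\g(\T,\T)|$; moreover $\T(\si)=\si_\mu\T^\mu=2\FF_{\al\mu}\T^\al\T^\mu=0$, so $\si$, $y$, $z$ are $\T$-invariant. From $\D_\mu\si=\si_\mu$ one gets $\D_\mu P=(1-\si)^{-2}\si_\mu$, and therefore the crude pointwise bound $\sum_\mu|\partial_\mu y|\le\sum_\mu|\partial_\mu P|\le|1-\si|^{-2}\sum_\mu|\si_\mu|\lesssim_{A_0}(y^2+z^2)$, valid everywhere on $\mathbf{N}$ since $|1-\si|^{-2}=y^2+z^2$. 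This bound is useful precisely where $y^2+z^2$ is bounded.

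Two a priori facts valid on $\mathbf{N}\cap\mathbf{E}$ drive the argument. The first is a \emph{lower} bound $y^2+z^2\ge c_0(A_0)>0$: by \eqref{big1} one has $|1-\si|^4=4M^2|\FF^2|\lesssim_{A_0}1$, so $|1-\si|$, hence $(y^2+z^2)^{-1/2}$, is bounded. The second is the bound $B\le 1/4$ on the constant of Lemma \ref{MarsMain}. For this I would use that along $S_0$ the vanishing of $\al(\FF)$ and $\aa(\FF)$ (from section \ref{horizon}) forces the null vectors $l,\ul$ of \eqref{big2} to be parallel to the null generators of $\HH^+$ and $\HH^-$, which are orthogonal along the horizon to the Killing field $\T$ tangent to it; thus $\T^\al l_\al=\T^\al\ul_\al=0$ on $S_0$, so \eqref{se9} gives $\D y\equiv 0$ on $S_0$, whence $y$ equals a constant $y_{S_0}$ there and \eqref{se30} forces $y_{S_0}^2-y_{S_0}+B=0$. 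Since $\T$ is tangent to the null hypersurface $\HH^+$ we have $\g(\T,\T)\ge0$ on $S_0$, and with $y_{S_0}/(y_{S_0}^2+z^2)=1+\g(\T,\T)$ this gives $y_{S_0}\in(0,1]$, hence $B=y_{S_0}(1-y_{S_0})\in[0,1/4]$.

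Now fix $x\in\mathbf{N}$. If $\g(\T,\T)(x)\ge-1/2$ then $y(x)/(y^2+z^2)(x)=1+\g(\T,\T)(x)\ge1/2$, which together with $y(x)^2\le(y^2+z^2)(x)$ forces $0<y(x)\le2$ and $(y^2+z^2)(x)\le4$, so the crude bound gives $|D^1y(x)|\lesssim_{A_0}1$; this case also covers $\mathbf{N}\setminus\mathbf{E}$, which lies in $\O_{\eps_0}$ where $\T$ is close to its causal value on $\HH^+\cup\HH^-$, so after shrinking $\eps_0$ if needed $\g(\T,\T)\ge-1/2$ there. If instead $\g(\T,\T)(x)<-1/2$, then $x\in\mathbf{N}\cap\mathbf{E}$ and $\T$ is timelike at $x$ with $|\g(\T,\T)(x)|$ bounded below; I would set $E_0=\T/|\g(\T,\T)|^{1/2}$ and complete it to a $\g$-orthonormal frame $(E_0,E_1,E_2,E_3)$ at $x$ by Gram--Schmidt applied to the coordinate frame, so that the change-of-frame matrix is $\lesssim_{A_0}1$. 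Then $E_0(y)=|\g(\T,\T)|^{-1/2}\T(y)=0$, so $|D^1y(x)|^2\lesssim_{A_0}\sum_i|E_i(y)(x)|^2=\sum_{i\ge1}|E_i(y)(x)|^2=\D_\al y\D^\al y(x)$, the last equality because expanding $\D y$ in the frame gives $\D_\al y\D^\al y=-E_0(y)^2+\sum_{i\ge1}E_i(y)^2$. Finally \eqref{se30} rewrites $\D_\al y\D^\al y$ as $\frac{-\g(\T,\T)}{4M^2}+\frac{B-z^2}{4M^2(y^2+z^2)}$, which by $0\le z^2\le B\le1/4$, the bound on $|\g(\T,\T)|$, and $y^2+z^2\ge c_0$ is $\lesssim_{A_0}1$. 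Combining the two cases yields \eqref{unifybound}.

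I expect the only real difficulty to be conceptual rather than computational: in the ergoregion $\T$ is spacelike, so boundedness of the Lorentzian quantity $\D_\al y\D^\al y$ gives no control of the coordinate gradient $|D^1y|$, while the crude bound $|D^1y|\lesssim(y^2+z^2)$ is useless where $y^2+z^2$ is large, as it is near spatial infinity where $\T$ is timelike. The point is that these two deficiencies never occur together: in a neighborhood of the ergoregion $y$ and $y^2+z^2$ are automatically bounded through $y/(y^2+z^2)=1+\g(\T,\T)$, whereas in the region where $\T$ is uniformly timelike the invariance $\T(y)=0$ removes the offending timelike derivative, so gluing the two estimates across $\{\g(\T,\T)=-1/2\}$ closes the argument. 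The only places where the hypotheses of this section genuinely enter are the two constants $c_0$ and $B$, obtained via \eqref{big1} (hence, ultimately, via the technical assumption \eqref{Main-Cond1}) and via Lemma \ref{MarsMain}.
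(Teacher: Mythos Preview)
Your proof is correct and follows essentially the same approach as the paper: split according to whether $\T$ is uniformly timelike (equivalently, according to the size of $|1-\sigma|$), use the crude smoothness bound $|D^1 y|\lesssim |1-\sigma|^{-2}$ on the region where it is not, and on the region where it is combine the orthogonality $\T(y)=0$ with the identity \eqref{se30} to control the coordinate gradient by the Lorentzian quantity $\D^\al y\,\D_\al y$. The paper's version is shorter---it splits at $|1-\sigma|=1/4$ and works directly with the vector $Y=\g^{\al\be}\partial_\al y\,\partial_\be$ rather than an explicit orthonormal frame---and leaves implicit the boundedness of $B$ that you take care to establish; your second paragraph essentially anticipates part of Lemma~\ref{yhorizon}, which is harmless since that lemma does not rely on Proposition~\ref{uniformy}.
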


\begin{proof}[Proof of Proposition \ref{uniformy}] For $p\in\Phi^{x_0}(B_1)$, $x_0\in\Sigma_0$, the gradient $|D^1y|$ is defined using the coordinate chart $\Phi^{x_0}$, i.e.
\begin{equation*}
|D^1y|(p)=\sum_{j=1}^{4}|\partial_j(y)(p)|.
\end{equation*}
In view of the definition $y=\Re[(1-\sigma)^{-1}]$ and the smoothness of $\sigma$, the bound \eqref{unifybound} is clear if $|1-\sigma(p)|\geq 1/4$. Assume that $|1-\sigma(p)|<1/4$. Since
\begin{equation*}
\Re(1-\sigma)=1+\g(\T,\T),
\end{equation*}
it follows that $\g(\T,\T)(p)<-3/4$. In particular, $p\in \mathbf{N}\cap\mathbf{E}$. We define the vector-field,
\begin{equation}
Y=\g^{\al\be}\partial_\al y\partial_\be. \label{vect-Y}
\end{equation}
In view of \eqref{se30} and $T(y)=0$, we have
\begin{equation*}
|\g(Y,Y)|=\Big|\frac{y^2-y+B}{4M^2(y^2+z^2)}\Big|\leq\widetilde{C}\text{ and }\g(\T,Y)=\T(y)=0\text{ at } p.
\end{equation*}
Since $\g(\T,\T)<-3/4$ it follows that $Y_p$ is a space-like vector with norm (as induced by the coordinate chart $\Phi^{x_0}$) dominated by $\widetilde{C}$. The bound \eqref{unifybound} follows since $\partial_jy=\g(Y,\partial_j)$.
\end{proof}

\subsection{The connection coefficients and the Hessian of $y$}

Assume now that $\mathbf{N}'$ is a subset of $\mathbf{N}\cap\mathbf{E}$ with the property that
\begin{equation*}
y^2-y+B>0\text{ in }\mathbf{N}'.
\end{equation*}
Using \eqref{se31}, we can normalize the vector $l$ such that
\begin{equation}\label{stuff2}
\T^\al l_\al=2M\quad \text{ in }\,\,\mathbf{N}'.
\end{equation}
Thus, using \eqref{se9.3} and \eqref{se31}, we compute
\begin{equation}\label{stuff3}
\theta=1/P\text{ and }\thb=-W/\overline{P}=-\frac{1}{\overline{P}}\cdot \frac{y^2-y+B}{8M^2(y^2+z^2)}\quad \text{ in }\,\, \mathbf{N}'.
\end{equation}
Using the null structure equation \eqref{fo22} (see also \eqref{se7})
\begin{equation*}
D\th=-\th^2-\om\th,
\end{equation*}
together with \eqref{se8} and \eqref{stuff3}, we compute
\begin{equation}\label{stuff4}
\om=0\quad \text{ in }\,\,\mathbf{N}'.
\end{equation}
Using the null structure equation \eqref{fo22'} (see also \eqref{se7})
\begin{equation*}
\underline{D}\,\thb=-\thb^2-\omb\,\thb,
\end{equation*}
together with \eqref{se9.6}, \eqref{se9.7}, and \eqref{stuff3}, we compute
\begin{equation}\label{stuff5}
\omb=\frac{y^2-z^2-2y(B-z^2)}{8M^2(y^2+z^2)^2}\quad \text{ in }\,\,\mathbf{N}'.
\end{equation}
We can express $\omb$ in the form,
\bea
\omb&=& HW,\qquad H=\frac{y^2-z^2-2y(B-z^2)}{(y^2-y+B)(y^2+z^2)}.\label{stuff50}
\eea
Using the null structure equation \eqref{fo26} (see also \eqref{se7} and \eqref{se6})
\begin{equation*}
\delta\theta=-\ze\theta-\eta(\th-\overline{\th}),
\end{equation*}
together with \eqref{se8} and \eqref{stuff3}, we compute
\begin{equation}\label{stuff6}
\ze=\frac{\eta P}{\overline{P}}=-\etab\quad \text{ in }\,\,\mathbf{N}'.
\end{equation}
Using \eqref{se10.1} and \eqref{stuff3},
\begin{equation}\label{stuff7}
|\ze|^2=\etab\,\overline{\etab}=\frac{B-z^2}{8M^2(y^2+z^2)^2}\text{ and }\eta=-\frac{\etab\overline{P}}{P}\quad \text{ in }\,\,\mathbf{N}'.
\end{equation}
Finally, using \eqref{se9.9}, we rewrite \eqref{se70} in the form
\begin{equation}\label{stuff8}
\T=-2M(Wl+\ul-\overline{\ze}Pm-\ze\overline{P}\overline{m})\quad \text{ in }\,\, \mathbf{N}'.
\end{equation}
We summarize these computations in the first part of the  following lemma.

\begin{lemma}\label{coefficients} Let $\N$ be the set 
defined by \eqref{big0}  and $\mathbf{N}'$ the subset of 
$\N\cap\E$ for which $y^2-y+B>0$, with $B$ the constant
 of lemma \ref{MarsMain}.
In $\mathbf{N}'$ we have, with $P=y+iz=(1-\si)^{-1}$,
\begin{equation*}
\begin{split}
&\qquad \qquad\qquad\qquad  \xi=\xib=\va=\vab=\om=0,\quad\\
&\omb=HW=\frac{y^2-z^2-2y(B-z^2)}{8M^2(y^2+z^2)^2},\quad 
 H=\frac{y^2-z^2-2y(B-z^2)}{(y^2-y+B)(y^2+z^2)},\\
&W=\frac{y^2-y+B}{8M^2(y^2+z^2)}>0,\qquad\qquad\qquad  |z|^2\le B, \\
&\th=1/P,\quad\thb=-W/ \overline{P},\quad|\ze|^2=\frac{B-z^2}{8M^2(y^2+z^2)^2},\quad\eta=\ze\frac{\overline{P}}{P},\quad\etab =-\ze,\\
&\delta y=\overline{\delta}y=Dz=\underline{D}z=0,\quad Dy=1,\quad \underline{D}y=-W,\quad \D_\al y\D^\al y=2W.
\end{split}
\end{equation*}
We also have, for the Hessian of  $y$, 
\bea
\label{hessian-y}
\begin{cases}
&(\D^2 y)_{44}=(\D^2 y)_{33}=0, \qquad\qquad (\D^2 y)_{43}=(\D^2 y)_{34}=-WH\\
&(\D^2 y)_{41}=(\D^2 y)_{14}=\ze,\qquad \qquad (\D^2 y)_{42}=(\D^2 y)_{24}=\overline{\ze} \\
&(\D^2 y)_{31}=(\D^2 y)_{13}=\eta W, \qquad\quad  (\D^2 y)_{32}= (\D^2 y)_{23}=\overline{\eta}  W\\
&(\D^2 y)_{12}=(\D^2 y)_{21} = W\frac{2y}{y^2+z^2},\quad\,\, (\D^2 y)_{11}=(\D^2 y)_{22}=0.
\end{cases}
\eea
\end{lemma}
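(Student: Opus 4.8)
The plan is to prove Lemma \ref{coefficients} in two stages: first collect the connection-coefficient identities by feeding the curvature and Maxwell data into the null structure and Ricci equations, then compute the Hessian of $y$ componentwise by differentiating the gradient formula \eqref{se9.9}.

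For the first stage, I would proceed exactly as indicated in the surrounding text. One starts from the decomposition \eqref{se6} of the self-dual curvature $\RR$, which was already derived from the vanishing of $\SS$ together with the explicit form \eqref{se6'} of $\FF$; the first four Bianchi identities \eqref{Bi1}--\eqref{Bi4} then force $\xi=\xib=\va=\vab=0$, and the remaining four give the transport equations \eqref{se8} for $P$. With these in hand, $\theta$ and $\thb$ are read off from \eqref{se9.3} and \eqref{se31} after the normalization \eqref{stuff2}: $\theta=1/P$ and $\thb=-W/\overline{P}$ with $W=(y^2-y+B)/(8M^2(y^2+z^2))>0$, the positivity being exactly the defining property of $\mathbf{N}'$. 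Then $\om=0$ comes from the null structure equation $D\th=-\th^2-\om\th$ using $D\th=D(1/P)=-\theta P/P^2$ and $DP=\theta P$; $\omb$ comes from $\underline D\,\thb=-\thb^2-\omb\thb$ together with $\underline D y=-W$, $\underline D z=0$, and a direct differentiation of $W$, yielding \eqref{stuff5}, which I rewrite as $\omb=HW$ with $H$ as in \eqref{stuff50}. The identities $\eta=\ze\overline{P}/P$ and $\etab=-\ze$ follow from $\delta\th=-\ze\th-\eta(\th-\overline\th)$ using $\delta P=\eta P$, and $|\ze|^2=(B-z^2)/(8M^2(y^2+z^2)^2)$ follows by combining \eqref{se10.1} with the formulas for $\th,\thb$. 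The last line ($\delta y=Dz=0$, $Dy=1$, $\underline D y=-W$, $\D_\a y\D^\a y=2W$) is just \eqref{se9.6}, \eqref{se9.7}, \eqref{stuff3}, and \eqref{se30} specialized to the normalization \eqref{stuff2}.

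For the second stage I would compute $(\D^2y)_{ab}=(\D_{e_a}\D y)(e_b)=e_a(\D_b y)-\Gamma_{ab}^{\ c}\D_c y$ for each pair of tetrad indices, using the gradient formula $\D_\b y=-\overline{\thb}Pl_\b-\theta P\ul_\b=-\overline{\thb}Pl_\b-\ul_\b$ (since $\theta P=1$) together with $\overline{\thb}P=\thb\overline P=-W$, so in fact $\D_\b y=Wl_\b-\ul_\b$. Then $\D_a\D_b y = (\D_a W)l_b + W\D_a l_b - \D_a\ul_b$, and each covariant derivative of a null-frame vector is expressed through the Ricci rotation coefficients $\th,\thb,\om,\omb,\eta,\etab,\xi,\xib,\ze,\va,\vab$ collected in the Appendix. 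Using $\xi=\xib=\va=\vab=\om=0$ kills most terms, and $dW$ is controlled through $DW=\theta\,(\text{stuff})$ etc., or more simply through $Dy=1$, $\underline D y=-W$, $\delta y=0$ and the chain rule on $W=W(y,z)$ together with $Dz=\underline D z=0$, $\delta z=-i\eta P$. Carrying this through entry by entry produces exactly \eqref{hessian-y}; for instance $(\D^2y)_{12}=W\Gamma$-terms and reduces to $W\cdot 2y/(y^2+z^2)$, and $(\D^2y)_{43}=-WH$ matches $\underline D$ of the coefficient of $l$ in $\D y$, namely $\underline D W=-\omb\cdot(\ldots)$ producing the factor $H$.

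The main obstacle I expect is the Hessian computation: it is a genuinely lengthy bookkeeping exercise in the complex null tetrad formalism, and one must be careful with the (non-integrable) horizontal distribution, so that the connection coefficients $\Gamma_{ab}^{\ c}$ do not all behave as in the integrable case --- this is precisely the point the authors flag in the introduction about their formalism differing from \cite{CKl}. The safest route is to differentiate the very clean identity $\D_\b y = Wl_\b-\ul_\b$ rather than the original expression, since then only $dW$, $\D l$, and $\D\ul$ enter, and $\D l$, $\D\ul$ are given directly by the structure equations with $\xi=\xib=\va=\vab=\om=0$ already established. A useful consistency check at the end is that the trace $\g^{\al\be}(\D^2y)_{\al\be}=2(\D^2y)_{34}\Omega^{-1}+(\D^2y)_{11}+(\D^2y)_{22}$ must equal $\square_\g y=(2y-1)/(4M^2(y^2+z^2))$, which follows from \eqref{se30} by taking a divergence; matching this against $-2WH+0$ pins down $H$ and catches sign errors.
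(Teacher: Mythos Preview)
Your proposal is correct and follows essentially the same route as the paper. For the first block of identities you recapitulate precisely the derivations scattered through the text preceding the lemma. For the Hessian, the paper simply invokes the bare formula $(\D^2 y)_{\al\be}=e_\al(e_\be y)-{\Gamma^\mu}_{\be\al}e_\mu(y)$ together with the table \eqref{def1} and the first part of the lemma; your plan to differentiate $\D_\be y = Wl_\be - \ul_\be$ is an equivalent repackaging of the same computation, though the paper's version is slightly more economical since $e_\mu(y)$ takes only the four values $1,-W,0,0$.

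One small correction to your consistency check: in the complex null tetrad used in Section~\ref{section2} the inverse metric has $\g^{12}=1$, $\g^{34}=-1$, and $\g^{11}=\g^{22}=0$, so the trace reads $\square_\g y = 2(\D^2 y)_{12} - 2(\D^2 y)_{34}$, not the expression you wrote; the quantity $\Omega$ you invoked belongs to the unrelated optical-function construction of Section~\ref{subs:optical} and does not appear here. With this fix the check does indeed reproduce \eqref{yho5}.
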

\begin{proof}
It only remains to prove formulas in \eqref{hessian-y}. These formulas follow easily using $(\D^2 y)_{\al\be}=e_\al(e_\be y)-{\Gamma^\mu}_{\be\al}e_\mu(y)$, the first part of the lemma, and the table \eqref{def1}.
\end{proof}

\section{The main bootstrap argument}\label{lastsection}

In this section we show that
\begin{equation}\label{mainclaim}
1-\sigma\neq 0 \,\text{ and }\, \SS=0\,\,\, \text{ on }\,\Sigma_0\cap\mathbf{E}.\end{equation}
In view of our  assumption {\bf{AF}}, this suffices to show that $\SS=0$ in $\mathbf{E}$. Our  Main Theorem is  then consequence of the main result  of Mars  in \cite{Ma1}.

We show first that the function $y$ is constant on $\HH^+\cup\HH^-$ and increases in $\mathbf{E}$.

\begin{lemma}\label{yhorizon}
There is a constant $y_{S_0}\in(1/2,1]$ such that
\begin{equation}\label{yho1}
y=y_{S_0}\text{ on }\HH^+\cup\HH^-.
\end{equation}
In addition $B\in[0,1/4)$, where $B$ is the constant in Lemma \ref{MarsMain}. Finally, for sufficiently small $\ep=\ep(A_0)>0$,
\begin{equation}\label{yho2}
y>y_{S_0}+\widetilde{C}^{-1}u_+u_-\text{ on }\O_\ep\cap\mathbf{E},
\end{equation}
where $\mathbf{O}_\ep$ are the open sets defined in section \ref{preliminaries}, and $\widetilde{C}=\widetilde{C}(A_0)>0$.
\end{lemma}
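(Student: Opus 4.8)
The plan is to prove the three assertions of Lemma~\ref{yhorizon} in order: first that $y$ is constant on $\HH^+\cup\HH^-$, then that this constant $y_{S_0}$ lies in $(1/2,1]$ with $B\in[0,1/4)$, and finally the quantitative lower bound \eqref{yho2}. For the first assertion, recall from Section~\ref{horizon} that $\Ss$ vanishes on $\HH^+\cup\HH^-$, so in particular $\Ss=0$ on a neighborhood of $S_0$ inside the horizon; together with Remark~\ref{Swelldef} ($\Re(1-\sigma)\geq 1/2$ near $S_0$) we may invoke the computations of Section~\ref{section2} in a set $\mathbf{N}$ with $S_0\subseteq\mathbf{N}$, $1-\sigma\neq0$ in $\mathbf{N}$. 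The key point is that \eqref{se9.9} gives $\D_\b y=-\overline{\thb}Pl_\b-\theta P\ul_\b$, which is a combination of $l_\b$ and $\ul_\b$ only. On $\HH^+$ the vector $l$ is tangent to the null generators and $\ul$ is transversal; since the tangent space to $\HH^+$ is spanned by $l,m,\overline m$, the restriction of $dy$ to $\HH^+$ is proportional to $l_\b$, i.e. $\delta y=\overline{\delta} y=0$ there (which we already know from \eqref{se9.6}) and additionally $Dy=\theta P$. I would then argue that $\theta P$ (equivalently $Dy$ along the generators) must vanish on $\HH^+$: on $\HH^+$ the null second fundamental form $\chi$ vanishes (Proposition~\ref{NE-hor}), which forces $\theta=0$ there, hence $Dy=0$ and $y$ is constant along each generator; combined with $\delta y=\overline\delta y=0$ this gives $y$ constant on all of $\HH^+$, and similarly on $\HH^-$ using $\ul$ and $\underline D y=\overline{\thb}P$ with $\thb=0$. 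By continuity $y$ takes the same value $y_{S_0}$ on $S_0=\HH^+\cap\HH^-$, hence on $\HH^+\cup\HH^-$.

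For the second assertion, since $y$ is constant on $S_0$ and $T$ is tangent to $S_0$, the tangential derivatives of $y$ vanish there; evaluating Mars's identity \eqref{se30}, $\D_\a y\D^\a y=(y^2-y+B)/[4M^2(y^2+z^2)]$, I expect the left side to vanish on $S_0$ (the gradient of $y$ on $S_0$ lies in the degenerate null directions), which forces
\begin{equation*}
y_{S_0}^2-y_{S_0}+B=0.
\end{equation*}
Here one must be a little careful: $\D_\a y$ need not be tangent to $S_0$, but from \eqref{se9.9} it is a combination of $l_\a$ and $\ul_\a$ with $l$ null and $\ul$ null, and on $S_0$ both $\theta=\thb=0$ (non-expanding, and $\chi=\chib=0$ on $S_0$ from Proposition~\ref{NE-hor}), so indeed $\D_\a y=0$ on $S_0$. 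Solving the quadratic, $B=y_{S_0}-y_{S_0}^2=y_{S_0}(1-y_{S_0})$. Now I use the technical hypothesis \eqref{Main-Cond2}, transported as \eqref{boundedgeom4.1}: there is a point of $S_0$ where $\Re\big((1-\sigma)^{-1}\big)=y>1/2$; since $y$ is constant on $S_0$ this gives $y_{S_0}>1/2$. On the other hand $\Re(1-\sigma)=1+\g(\T,\T)$ is nonpositive on $S_0$ (where $\T$ is spacelike or null), which combined with $z^2\leq B$ (Lemma~\ref{MarsMain}) gives $y_{S_0}\leq 1$; indeed $y_{S_0}=\Re(1-\sigma)/|1-\sigma|^2$ and $\Re(1-\sigma)=1+\g(\T,\T)\le 1$ while $|1-\sigma|^2\ge (\Re(1-\sigma))^2$ gives $y_{S_0}\le 1/\Re(1-\sigma)$... the clean way is: $1-\sigma=1/(y_{S_0}+iz)$ so $\Re(1-\sigma)=y_{S_0}/(y_{S_0}^2+z^2)\ge 0$, hence $y_{S_0}\ge0$, and $y_{S_0}>1/2$ from \eqref{boundedgeom4.1}; then from $B=y_{S_0}(1-y_{S_0})\ge 0$ and $y_{S_0}>1/2$ we get $y_{S_0}\le 1$, and $B=y_{S_0}(1-y_{S_0})<1/4$ since $y_{S_0}\ne 1/2$. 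So $y_{S_0}\in(1/2,1]$ and $B\in[0,1/4)$.

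For the third assertion I would use the wave equation for $y$ that follows from $\Ss=0$, namely $\D^\al\D_\al y=(2y-1)/[4M^2(y^2+z^2)]$ (stated in Section~\ref{lastsection} of the introduction and derivable from $\Ss=0$ via the Hessian formulas of Lemma~\ref{coefficients}, taking the trace of \eqref{hessian-y}). Since $y_{S_0}>1/2$, by continuity $2y-1>0$ in a small neighborhood $\O_\ep\cap\E$ of $S_0$, so $\square_\g y>0$ there. Now I combine this with the boundary behavior: $y=y_{S_0}$ on $\HH^+\cup\HH^-$, and $y-y_{S_0}$ vanishes there but is a smooth function of $(u_+,u_-)$ near $S_0$. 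Writing $y-y_{S_0}$ in terms of $u_+,u_-$ using \eqref{boundedgeom2.2}--\eqref{boundedgeom2.3} repeatedly (since $y-y_{S_0}$ vanishes on $\{u_+=0\}$ and on $\{u_-=0\}$ within $\O_{\ep_2}$), we get $y-y_{S_0}=u_+u_-\cdot g$ for a smooth $g$. The sign $\square_\g y>0$ together with the transport equations $Dy=\theta P$, $\underline D y=\overline{\thb}P$ of \eqref{se8}/\eqref{se9.9}, evaluated just off the horizon where $\theta,\thb$ are small but the second-order behavior is controlled by $\square_\g y$ and the structure equations for $D\theta,\underline D\thb$ (namely $D\theta=-\theta^2-\omega\theta$, $\underline D\thb=-\thb^2-\omb\thb$), should give that $g$ is bounded below by a positive constant $\widetilde C^{-1}=\widetilde C^{-1}(A_0)$ on $\O_\ep\cap\E$ for $\ep$ small. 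Concretely: $Dy|_{\HH^-\cap\O_\ep}$ and $\underline D y|_{\HH^+\cap\O_\ep}$ control the first $u$-derivatives, and differentiating the transport equations once more and using $\square_\g y>0$ pins down the sign of the mixed second derivative $\partial_{u_+}\partial_{u_-}(y-y_{S_0})$ at $S_0$; a Taylor/Hopf-type argument then yields \eqref{yho2}. I expect this last step to be the main obstacle, because it requires turning the single scalar inequality $\square_\g y>0$ into the quantitative, uniform-in-$A_0$ lower bound $y-y_{S_0}\ge\widetilde C^{-1}u_+u_-$, which needs careful bookkeeping of the second fundamental form of the null hypersurfaces near the non-smooth bifurcate sphere and the fact that $\Omega>1/2$ on $\O_{\ep_0}$ to relate $\square_\g$ to $\partial_{u_+}\partial_{u_-}$ modulo lower-order terms.
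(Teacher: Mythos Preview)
Your overall strategy matches the paper's, but there are two genuine gaps in the setup and an unnecessary complication in the last step.

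\textbf{Gap 1: you need Proposition~\ref{Lemmaa1}, not just Section~\ref{horizon}.} You justify invoking the computations of Section~\ref{section2} by citing only that $\Ss=0$ on $\HH^+\cup\HH^-$. But the hypotheses \eqref{big0} of Section~\ref{section2} require $\Ss=0$ in the open set $\mathbf N\cap\mathbf E$, and all of the identities you use (\eqref{se9.6}, \eqref{se9.7}, \eqref{se9.9}, \eqref{se30}, \eqref{big1}) are derived there, then extended to the horizon by continuity. The paper sets $\mathbf N=\mathbf O_{r_1}$ from Proposition~\ref{Lemmaa1} precisely for this reason. Without that open-set vanishing you cannot even conclude $-4M^2\FF^2=(1-\sigma)^4$ off $S_0$, which is needed to normalize the eigenvalues of $\FF$ as in \eqref{big2}.

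\textbf{Gap 2: alignment of the tetrad with the horizon.} You assert that ``on $\HH^+$ the vector $l$ is tangent to the null generators,'' but the pair $(l,\ul)$ in Section~\ref{section2} is defined as the eigenvectors of $\FF$, with no a priori relation to the horizon. The paper fills this by observing (from \eqref{avanish}) that the null generator $\widetilde l$ of $\HH^+$ satisfies $\FF_{\al\be}\widetilde l^{\,\be}=C\widetilde l_\al$, hence $\widetilde l$ is parallel to $l$ or $\ul$; in particular $m$ is tangent to $S_0$. Only after this does $\delta y=0$ imply $y$ constant on $S_0$, and only then does $\theta=0$ on $\HH^+$ (from Proposition~\ref{NE-hor}) combine with $Dy=\theta P$ to give constancy along generators.

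\textbf{Part 3 is much simpler than you fear.} Once you have $y-y_{S_0}=u_+u_-\,y'$ with $y'$ smooth and the wave equation $\square_\g y=(2y-1)/[4M^2(y^2+z^2)]$, just evaluate $\square_\g(u_+u_-\,y')$ on $S_0$. Since $u_\pm=0$ and $\D_\mu(u_+u_-)=0$ there, only the term $y'\,\square_\g(u_+u_-)=2y'\,\D^\mu u_+\D_\mu u_-=2\Omega\, y'$ survives; with $\Omega=1$ on $S_0$ this gives
\[
y'\big|_{S_0}=\frac{2y_{S_0}-1}{8M^2(y_{S_0}^2+z^2)}>0
\]
uniformly (since $y_{S_0}>1/2+A_0^{-1}$ by \eqref{boundedgeom4.1}). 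Continuity then yields $y'>\widetilde C^{-1}$ on $\mathbf O_\ep$ for $\ep$ small, which is \eqref{yho2}. No transport equations, no Hopf lemma, no second-fundamental-form bookkeeping are needed.

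Your treatment of the quadratic $y_{S_0}^2-y_{S_0}+B=0$ is fine; the paper just writes it as $y_{S_0}=\tfrac12(1+\sqrt{1-4B})$, which packages the same information.
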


\begin{proof}[Proof of Lemma \ref{yhorizon}] Let $\mathbf{N}=\mathbf{O}_{r_1}$ denote the set constructed in Proposition \ref{Lemmaa1}. Since $\SS=0$ in $\mathbf{N}$, we can apply the computations of the previous section. It follows from \eqref{avanish} that if $\widetilde{l}$ is tangent to the null generators of $\HH^+$ then $\FF_{\al\be}\widetilde{l}^\be=C\widetilde{l}_\al$ for some scalar $C$. Thus $\widetilde{l}$ is parallel to either $l$ or $\ul$ on $\HH^+$. Similarly, the null generator of $\HH^-$ is also parallel to either $l$ or $\ul$ on $\HH^-$. Thus the vector $m$ is tangent to the bifurcate sphere $S_0$. Using $\delta y=0$, see \eqref{se9.6}, it follows that $y$ is constant on $S_0$. Using \eqref{se9.7} and Proposition \ref{NE-hor} it follows that $y$ is constant on $\HH^+\cup\HH^-$, which gives \eqref{yho1}. Also, using \eqref{se31} on $S_0$ and the fact that $\T$ is tangent to $S_0$, it follows that
\begin{equation*}
y_{S_0}^2-y_{S_0}+B=0.
\end{equation*}
Since $B\in[0,\infty)$ and $y_{S_0}>1/2$ (using  assumption \eqref{Main-Cond2}), it follows that $B\in[0,1/4)$ and
\begin{equation*}
y_{S_0}=\frac{1+\sqrt{1-4B}}{2}\in(1/2,1].
\end{equation*}

To prove \eqref{yho2} we consider  the open sets $\O_\ep$ and the functions $u_{\pm}:\mathbf{O}_\ep\to\mathbb{R}$ defined in section \ref{preliminaries}.  It follows from \eqref{yho1} combined with \eqref{boundedgeom2.2}, \eqref{boundedgeom2.3} that
\begin{equation}\label{yho4}
y=y_{S_0}+u_+u_-\cdot y',
\end{equation} 
for some smooth function $y':\mathbf{O}_\ep\to\mathbb{R}$, with $|D^1y'|\leq\widetilde{C}$. The identities $P=(1-\sigma)^{-1}$, $\D^\mu\D_\mu\sigma=-\FF^2$, $\D_\mu\sigma\D^\mu\sigma=\T^\al\T_\al\cdot \FF^2=-\Re\sigma\cdot \FF^2$ (see \eqref{Ernst1}), and $\FF^2=-(1-\sigma)^4/(4M^2)$ (see \eqref{big1}) show that,
\beaa
\D^\mu\D_\mu P&=&(1-\si)^{-2}\D^\mu\D_\mu\si+2(1-\si)^{-3}\D^\mu\si \D_\mu \si\\
&=&\frac{1}{4M^2}(1-\si)(1+\bar\si)=\frac{2\overline{P}-1}{4M^2 P\overline{P}}.
\eeaa
Therefore,
\begin{equation}\label{yho5}
\D^\mu\D_\mu y=\frac{2y-1}{4M^2(y^2+z^2)}.
\end{equation}
We substitute $y=y_{S_0}+u_+u_-\cdot y'$ (see \eqref{yho4}) and evaluate on $S_0$
\begin{equation*}
\frac{2y_{S_0}-1}{4M^2(y_{S_0}^2+z^2)}=\D^\mu\D_\mu(y_{S_0}+u_+u_-\cdot y')=2 \D^\mu(u_+)\D_\mu(u_-)\cdot y'=4y'.
\end{equation*}
Since $y_{S_0}>1/2+\widetilde{C}^{-1}$ it follows that $y'>\widetilde{C}^{-1}$ on $S_0$. Thus, for $\ep\in(0,r_1)$ sufficiently small,
\begin{equation*}
y>y_{S_0}+\widetilde{C}^{-1}\, u_+ u_-\quad \text{ in }\O_\ep\cap\mathbf{E},
\end{equation*}
as desired.
\end{proof}

We define the set
\begin{equation*}
\Sigma'_0=\{x\in\Sigma_0\cap\E:\sigma(x)\neq 1\}.
\end{equation*}
Clearly, $\Sigma'_0$ is an open subset of $\Sigma_0\cap\E$ which contains a neighborhood of $S_0$ in $\Sigma_0\cap\E$. We define the function (which agrees with the function $y$ defined earlier on open sets)
\begin{equation*}
y:\Sigma'_0\to\mathbb{R},\quad y(x)=\Re[(1-\sigma)^{-1}].
\end{equation*}
For any $R>y_{S_0}$ let $\VV_R=\{x\in\Sigma'_0:y(x)<R\}$ and $\UU_R$ the unique connected component of $\VV_R$ whose closure in $\Sigma_0$ contains $S_0$ (this unique connected component exists since $y(x)=y_{S_0}<R$ on $S_0$). We prove now the first step in our bootstrap argument.

\begin{proposition}\label{bootstrap1}
There is a real number $R_1\geq y_{S_0}+\widetilde{C}^{-1}$, for some constant $\widetilde{C}=\widetilde{C}(A_0)>0$, such that $\SS=0$ in $\UU_{R_1}$.
\end{proposition}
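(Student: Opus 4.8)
The plan is to combine Proposition~\ref{Lemmaa1}, which already gives $\SS\equiv 0$ in $\O_{r_1}\cap\E$, with the monotonicity estimate \eqref{yho2} of Lemma~\ref{yhorizon}; the point is to choose $R_1$ so close to $y_{S_0}$ that the set $\UU_{R_1}$ is forced to stay inside $\O_{r_1}\cap\E$, after which there is nothing left to prove.

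First I would fix $\ep_3>0$ small enough (depending only on $A_0$) that $\ep_3\leq\min(\ep,\ep_0)$, where $\ep$ is the constant in \eqref{yho2}; then $\overline{\O_{\ep_3}}\subseteq\O_{\ep_0}$ and the functions $u_\pm$ are strictly positive on $\O_{\ep_3}\cap\E$. On the part of $\partial\O_{\ep_3}$ contained in $\E\cap\Sigma_0$ at least one of $u_+,u_-$ equals $\ep_3$, so the bound \eqref{boundedgeom3.1}, $u_+/u_-+u_-/u_+\leq A_0$ on $\O_{\ep_0}\cap\E\cap\Sigma_0$, forces $u_+u_-\geq\ep_3^2/A_0$ there. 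Plugging this into \eqref{yho2} gives $y\geq y_{S_0}+\widetilde{C}^{-1}\ep_3^2/A_0$ on $\partial\O_{\ep_3}\cap\E\cap\Sigma_0$. I would then set
\begin{equation*}
R_1=y_{S_0}+\widetilde{C}^{-1}\ep_3^2/A_0,
\end{equation*}
a quantity depending only on $A_0$, so that $y\geq R_1$ everywhere on $\partial\O_{\ep_3}\cap\E\cap\Sigma_0$, while of course $\UU_{R_1}\subseteq\{y<R_1\}$.

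The crucial step is the topological claim that $\UU_{R_1}\subseteq\O_{\ep_3}\cap\E$. Since $S_0\subseteq\O_{\ep_3}$ lies in the closure of $\UU_{R_1}$, the connected open set $\UU_{R_1}$ meets $\O_{\ep_3}$. If $\UU_{R_1}$ were not contained in $\O_{\ep_3}$, I would connect a point of $\UU_{R_1}\cap\O_{\ep_3}$ to a point of $\UU_{R_1}\setminus\O_{\ep_3}$ by a path inside $\UU_{R_1}$; its first exit point from $\O_{\ep_3}$ would then lie in $\partial\O_{\ep_3}$, and also in $\E\cap\Sigma_0$ because $\UU_{R_1}\subseteq\Sigma'_0$, hence would satisfy $y\geq R_1$, contradicting that the whole path lies in $\{y<R_1\}$. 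Therefore $\UU_{R_1}\subseteq\O_{\ep_3}\cap\E\subseteq\O_{r_1}\cap\E$, and Proposition~\ref{Lemmaa1} yields $\SS\equiv 0$ in $\UU_{R_1}$. All the analytic content is already contained in Proposition~\ref{Lemmaa1} and Lemma~\ref{yhorizon}, so the only delicate point is this trapping argument, namely excluding that $\UU_{R_1}$ escapes around the outer part of $\partial\O_{\ep_3}$ — which is precisely what the lower bound on $u_+u_-$ coming from \eqref{boundedgeom3.1} rules out.
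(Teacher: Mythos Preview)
Your proof is correct and follows essentially the same approach as the paper: both combine Proposition~\ref{Lemmaa1} with the lower bound \eqref{yho2} and the comparability \eqref{boundedgeom3.1} of $u_+$ and $u_-$ on $\Sigma_0$ to trap $\UU_{R_1}$ inside a small $\O_\ep$. The paper compresses everything into the two-sided estimate $y-y_{S_0}\in[\widetilde{C}^{-1}(u_+^2+u_-^2),\widetilde{C}(u_+^2+u_-^2)]$ on $\Sigma_0\cap\E\cap\O_\ep$ and then simply asserts that for $R_1$ close enough to $y_{S_0}$ the set $\UU_{R_1}$ stays in $\O_\ep$; you instead isolate the lower bound on $y$ along the boundary $\partial\O_{\ep_3}\cap\E\cap\Sigma_0$ and make the connectedness/first-exit argument explicit, which is exactly the content the paper leaves to the reader.
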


\begin{proof}[Proof of Proposition \ref{bootstrap1}] 
With $\ep$ as in Lemma \ref{yhorizon}, it follows from Proposition \ref{Lemmaa1} that $\SS=0$ in $\mathbf{O}_\ep\cap\mathbf{E}$. Also, since $u_+/u_-+u_-/u_+\leq A_0$ in $\Sigma_0\cap\E\cap\O_\ep$, it follows from \eqref{yho2} that
\begin{equation*}
y-y_{S_0}\in[\widetilde{C}^{-1}(u_+^2+u_-^2),\widetilde{C}(u_+^2+u_-^2)]\text{ in }\Sigma_0\cap\E\cap\O_\ep.
\end{equation*}
Thus, for $R_1$ sufficiently close to $y_{S_0}$, the set $\UU_{R_1}$ is included in $\mathbf{O}_\ep$, and the proposition follows.
\end{proof}

With $R_1$ as in Proposition \ref{bootstrap1}, the main result in this section is the following:

\begin{proposition}\label{bootstrap2}
For any $R_2\geq R_1$ we have $\SS=0$ in $\UU_{R_2}$.
\end{proposition}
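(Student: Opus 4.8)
The plan is to prove Proposition \ref{bootstrap2} by a continuous induction (bootstrap) over $R_2 \geq R_1$. Let $R^\ast$ denote the supremum of all $R_2 \geq R_1$ for which $\SS = 0$ in $\UU_{R_2}$; by Proposition \ref{bootstrap1} the set of such $R_2$ is nonempty, and since $\UU_R$ increases with $R$ and $\SS$ is continuous, $\SS = 0$ in $\UU_{R^\ast} = \bigcup_{R < R^\ast}\UU_R$. The goal is to show $R^\ast = \infty$. Suppose for contradiction that $R^\ast < \infty$. First I would note that, since $\SS = 0$ in $\UU_{R^\ast}$ and $\UU_{R^\ast}\cap\E$ is connected and contains $S_0$, the hypotheses of section \ref{section2} are satisfied on a suitable open neighborhood $\mathbf{N}$ of $\overline{\UU_{R^\ast}}$ intersected with $\E$ — here one must first invoke Proposition \ref{uniformy} together with $z^2 \leq B$ from Lemma \ref{MarsMain} to conclude that $1 - \sigma \neq 0$ on the closure $\overline{\UU_{R^\ast}}$ in $\Sigma_0$, so that $\UU_{R^\ast}$ does not "run into" the bad set $\{\sigma = 1\}$ and its closure is genuinely a compact piece of $\Sigma_0 \cap \overline{\E}$.

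The heart of the argument is the local unique continuation step: for any point $x_0$ on the boundary of $\UU_{R^\ast}$ in $\Sigma_0 \cap \E$, I would show $\SS = 0$ in a full spacetime neighborhood of $x_0$. This is exactly Proposition \ref{vanishingS3} (the second Carleman estimate, Lemma \ref{Carl2}, applied to the covariant wave equation \eqref{cartoon1} for $\SS$ together with the invariance \eqref{cartoon2}), whose weight is built from the function $y$ using the Hessian computations of Lemma \ref{coefficients}. The decisive input is that the function $h = y$ satisfies the $\T$-conditional pseudo-convexity condition \eqref{HoCond2}: $\T(y) = 0$ holds since $\si$ (hence $y$) is $\T$-invariant, and the negativity of $\D^2 y(X,X)$ on the cone $\g(X,X) = \g(X,\D y) = \g(\T,X) = 0$ is read off from the explicit Hessian \eqref{hessian-y} in the canonical complex null tetrad — on that cone the "bad" terms $(\D^2 y)_{33}, (\D^2 y)_{44}$ vanish and the surviving combination, controlled by $W > 0$ and $|\ze|^2$, has the right sign when $y^2 - y + B > 0$, which holds precisely on $\UU_{R^\ast}$ since there $y > y_{S_0}$ and $y^2 - y + B$ vanishes only at $y = y_{S_0}$ (and again only for $y$ at the larger root, which lies outside our range since $B < 1/4$). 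One then verifies the quantitative conditions \eqref{po5}, \eqref{po3.2}, \eqref{po3} of Definition \ref{psconvex} for the weight family $h_\eps = y - y(x_0) + \eps$ (plus a localizing negligible perturbation as in the proof of Lemma \ref{Car}), reduces the tensorial equation \eqref{cartoon1} to a diagonal system of scalar wave inequalities, and applies Proposition \ref{Cargen} with $V = \T$; the term $\eps^{-6}\|e^{-\lambda f_\eps}V(\phi)\|_{L^2}$ on the right of \eqref{Car1gen} is harmless because $\mathcal{L}_\T \SS = 0$ forces $\T(\phi_{(j_1\ldots j_4)})$ to be a bounded combination of the $\phi_{(l_1 \ldots l_4)}$ themselves (Christoffel symbols times components), and this contribution is absorbed into the left-hand side for $\lambda$ large, exactly as the $\square_\g$ source term is absorbed in the proof of Proposition \ref{Lemmaa1}.

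Granting the local step, I would finish the bootstrap as follows. The set $\partial \UU_{R^\ast} \cap (\Sigma_0 \cap \E)$ is compact (using again that $\UU_{R^\ast}$ stays away from $\{\sigma = 1\}$ by the gradient bound \eqref{ro12} and $z^2 \leq B$, so its closure is a compact subset of $\Sigma_0 \cap \overline\E$ and the boundary portion inside $\E$ is closed and bounded). Cover it by finitely many of the neighborhoods produced by Proposition \ref{vanishingS3}; then $\SS = 0$ on $\UU_{R^\ast}$ together with all these neighborhoods, which is an open set containing $\overline{\UU_{R^\ast} }\cap \E$ minus possibly $S_0$-type boundary — more precisely, it contains a set of the form $\{x \in \Sigma'_0 : y(x) < R^\ast + \delta_0\}$ restricted to the connected component through $S_0$, for some $\delta_0 > 0$. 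Hence $\SS = 0$ in $\UU_{R^\ast + \delta_0}$, contradicting the definition of $R^\ast$. Therefore $R^\ast = \infty$ and $\SS = 0$ in $\UU_{R_2}$ for every $R_2 \geq R_1$.

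The main obstacle is the local unique continuation step, and within it the verification that the quantitative pseudo-convexity conditions of Definition \ref{psconvex} genuinely hold for the weight built from $y$ \emph{uniformly} near the boundary of $\UU_{R^\ast}$; this is where the remarkable structure of Lemma \ref{coefficients} — especially that $\om = 0$, that $W > 0$ strictly on $\UU_{R^\ast}$, and the precise form of the mixed Hessian entries $(\D^2 y)_{41} = \ze$, $(\D^2 y)_{31} = \eta W$ — must cooperate with the factor $\eps^{-2}$ multiplying $|X^\al \D_\al h_\eps|^2$ and $|X^\al \T_\al|^2$ in \eqref{po3}. The secondary subtlety is bookkeeping: one must ensure the compact set being covered really is compact, i.e.\ that $\UU_{R^\ast}$ cannot accumulate at $\{\sigma = 1\}$, for which the uniform gradient bound of Proposition \ref{uniformy} combined with $z^2 \leq B$ is exactly the tool, and one must confirm that $y^2 - y + B > 0$ throughout $\UU_{R^\ast}$ so that Lemma \ref{coefficients} applies on the relevant set $\mathbf{N}'$.
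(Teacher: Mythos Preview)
Your proposal is correct and follows essentially the same approach as the paper: bootstrap over $R$, invoke Proposition~\ref{vanishingS3} (the second Carleman estimate with weight built from $y$) at each boundary point of $\UU_{R^\ast}$, and use compactness of the boundary to advance. One point deserves more care than you give it: the assertion that the covered open set contains $\UU_{R^\ast+\delta_0}$ for some $\delta_0>0$ is not automatic from containing $\overline{\UU_{R^\ast}}\cap(\Sigma_0\cap\E)$, because a priori the connected component $\UU_{R^\ast+\delta_0}$ could pick up distant pieces of $\{y<R^\ast+\delta_0\}$ --- the paper handles this via a short topological argument (tracing a path from $S_0$, finding the first exit point, then flowing \emph{backward} along $Y=\nabla y$ to land in $\UU_{R^\ast}$ and hence near some boundary point already covered), which you should supply in place of the one-line ``more precisely'' claim.
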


The proof of Proposition \ref{bootstrap2}, which  will be completed in subsection \ref{subsectionlast}, is done by induction.
In view of Proposition \ref{bootstrap1}, we may assume that the claims in Proposition \ref{bootstrap2} hold for some value $R_2\geq R_1$.
We therefore make the following induction hypothesis:

\medskip
{\bf Induction  Hypothesis}.\quad  For a fixed $R_2\ge R_1$  the tensor  $\Ss$ vanishes on the set 
$\UU_{R_2}$, which is the unique connected component of 
the set $\VV_{R_2}=\{x\in \Si_0\cap\E: y(x)<R_2,\,\, \si(x)\neq 1\}$
 whose closure in $\Si_0$ contains the bifurcate sphere  $S_0$.
\medskip

To complete the proof of the proposition we have to advance  these claims for $R'_2=R_2+r'$, where $r'>0$ depends only on the constants $A_0$, $\widetilde{A}_{\widetilde{C}^{-1}}$  (here  $\widetilde{A}_{\widetilde{C}^{-1}}=\widetilde{A}_\ep$ with $\ep=\widetilde{C}^{-1}$,  see \eqref{nontangq} for the definition of  $\widetilde{A}_\ep$), and $R_2$ (as before, the constants $\widetilde{C}$ may depend only on $A_0$). In the rest of this section we let $\widetilde{C}_{R_2}$ denote various constants in $[1,\infty)$ that may depend only on $A_0$, $\widetilde{A}_{\widetilde{C}^{-1}}$, and $R_2$. It is important that such constants do not depend on other parameters, such as the point $x_0\in\delta_{\Sigma_0\cap \mathbf{E}}(\UU_{R_2})$ chosen below.

Assume $x_0\in\delta_{\Sigma_0\cap \mathbf{E}}(\UU_{R_2})$ is a point on the boundary of $\UU_{R_2}$ in $\Sigma_0\cap \mathbf{E}$. Clearly,
\begin{equation*}
y(x_0)=R_2.
\end{equation*}
Thus $|1-\sigma(x_0)|=(R_2^2+z(x_0)^2)^{-1/2}$. Since $1-\sigma$ is a smooth function on $\widetilde{\mathbf{M}}$ and $z(x_0)^2\leq B<1/4$ (see Lemma \ref{coefficients}), there is $r'_2=r'_2(A_0,R_2)>0$ such that $|1-\sigma(x)|\in(1/(2R_2),2/R_2)$ in $B_{r'_2}(x_0)$. Thus the function
\begin{equation*}
y:B_{r'_2}(x_0)\to\mathbb{R},\quad y(x)=\Re[(1-\sigma(x))^{-1}],
\end{equation*}
is well defined; observe that, with $\partial_j$ defined according to the coordinate charts defined  in section 2.2,   
\begin{equation}\label{alexnew10}
\sup_{x\in B_{r'_2(x_0)}}\big(|y(x)|+|D^1y(x)|+\ldots+|D^4y(x)|\big)\leq\widetilde{C}_{R_2}.
\end{equation}
By choosing $r'_2$ sufficiently small it follows from $y(x_0)=R_2$ and \eqref{alexnew10} that
\begin{equation}\label{alexnew11}
y(x)\in((y_{S_0}+R_1)/2,2R_2)\,\, \text{ for any }\,\,x\in B_{r'_2}(x_0).
\end{equation}
In view of \eqref{nontangq} there is $\delta_2>\widetilde{C}_{R_2}^{-1}$ small\footnote{The constants $r'_2$ and $\delta_2$ are fixed in this paragraph such that $r'_2,\delta_2\ll 1$. We later fix the constants $r_2\ll \min(r'_2,\delta_2)$ (Lemma \ref{constr_2}), $r_3\ll  r_2$ (Proposition \ref{vanishingS3}), and $r'\ll  r_3$ (proof of Proposition \ref{bootstrap2}). All of these constants are bounded from below by some constant $\widetilde{C}_{R_2}^{-1}$} such that the set $(-\delta_2,\delta_2)\times (B_{r'_2}(x_0)\cap\Sigma_0)$ is diffeomorphic to the set $\cup_{|t|<\delta_2}\Phi_t(B_{r'_2}(x_0)\cap\Sigma_0)$. We  let $Q:\cup_{|t|<\delta_2}\Phi_t(B_{r'_2}(x_0)\cap\Sigma_0)\to B_{r'_2}(x_0)\cap\Sigma_0$ denote the induced  smooth projection which takes every  point $\Phi_t(x)$ into $x$.  

We now  define the connected open set of  $\widetilde{\mathbf{M}}$,  which we denote by $\N_{R_2}$,
\begin{equation}
\mathbf{N}_{R_2}=\text{ connected component of }\big[\big(\cup_{t\in\mathbb{R}}\Phi_t(\UU_{R_2})\big)\cup\O_{r_1}\big]\cap\widetilde{\mathbf{M}}\text{ containing }\UU_{R_2}, \label{NR2}
\end{equation}
where $r_1$ is as in Proposition \ref{Lemmaa1}. Since $\T$ is a Killing vector-field, $\LL_\T\SS=0$ in $\widetilde{\mathbf{M}}$ and $\T(1-\sigma)=0$. In view of our induction hypothesis  $\SS=0$ in $\UU_{R_2}$ and $\T$ does not vanish in $\mathbf{E}$;  it follows that
\begin{equation*}
1-\sigma\neq 0\text{ in }\mathbf{N}_{R_2}\,\, \text{ and }\,\, \SS=0\text{ in }\mathbf{N}_{R_2}\cap\mathbf{E}.
\end{equation*}
Thus the computations in section \ref{section2} can be applied in the open set $\mathbf{N}_{R_2}$.

\begin{lemma}\label{constr_2}
With $x_0\in\delta_{\Sigma_0\cap \mathbf{E}}(\UU_{R_2})$ as before, there is $r_2\in(0,r'_2]$ such that 
\begin{equation}\label{bingo}
\{x\in B_{r_2}(x_0):y(x)<R_2\}\subseteq \cup_{|t|<\delta_2}\Phi_t(\UU_{R_2}).
\end{equation}
\end{lemma}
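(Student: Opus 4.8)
\textbf{Proof strategy for Lemma \ref{constr_2}.}

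The plan is to argue by contradiction using a compactness argument, exploiting the continuity of the function $y$, the non-tangency assumption \eqref{nontangq}, and the local structure of the flow $\Phi_t$ near $x_0$. First I would observe that the claim is local near $x_0$ and that, by construction, $x_0$ lies in the boundary of $\UU_{R_2}$ in $\Sigma_0\cap\E$, so $y(x_0)=R_2$ and there are points of $\UU_{R_2}$ arbitrarily close to $x_0$. Suppose, for contradiction, that \eqref{bingo} fails for every $r_2\in(0,r'_2]$; then there is a sequence $x_n\to x_0$ with $x_n\in B_{1/n}(x_0)$, $y(x_n)<R_2$, but $x_n\notin\cup_{|t|<\delta_2}\Phi_t(\UU_{R_2})$.

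The key step is to use the projection map $Q$ introduced just before the lemma, which retracts the flow-saturated neighborhood $\cup_{|t|<\delta_2}\Phi_t(B_{r'_2}(x_0)\cap\Sigma_0)$ onto the slice $B_{r'_2}(x_0)\cap\Sigma_0$. Since $x_n\to x_0$ and $x_0\in\Sigma_0$, for $n$ large $x_n$ lies in the domain of $Q$; set $x_n'=Q(x_n)\in B_{r'_2}(x_0)\cap\Sigma_0$, so $x_n=\Phi_{t_n}(x_n')$ for some $|t_n|<\delta_2$, and $x_n'\to x_0$ as well (continuity of $Q$ and the fact that $Q(x_0)=x_0$). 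Because $\T(y)=0$ (recall $y$ is a function of $\sigma$ and $\LL_\T\sigma=0$), the value $y$ is $\Phi_t$-invariant, so $y(x_n')=y(x_n)<R_2$; thus $x_n'\in\VV_{R_2}\cap\Sigma_0=\{x\in\Sigma_0\cap\E: y(x)<R_2,\ \sigma(x)\neq 1\}$ (the condition $\sigma\ne 1$ holds automatically in $B_{r'_2}(x_0)$ since $|1-\sigma|\in(1/(2R_2),2/R_2)$ there). Since $x_n'\to x_0$ and $x_0$ lies in the closure of the connected set $\UU_{R_2}$, for $n$ large $x_n'$ lies in the same connected component of $\VV_{R_2}$ as the nearby points of $\UU_{R_2}$, hence $x_n'\in\UU_{R_2}$. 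But then $x_n=\Phi_{t_n}(x_n')\in\cup_{|t|<\delta_2}\Phi_t(\UU_{R_2})$, contradicting the choice of $x_n$. This contradiction establishes \eqref{bingo} for $r_2$ small enough.

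I expect the main technical point to be the step asserting that $x_n'$ (for $n$ large) actually lies in $\UU_{R_2}$ rather than in some other connected component of $\VV_{R_2}$: this requires knowing that $\UU_{R_2}$, together with $x_0$, fills out an entire neighborhood of $x_0$ inside $\VV_{R_2}\cap\Sigma_0$ from the $y<R_2$ side. This in turn follows because $\VV_{R_2}$ is open, $x_0\in\overline{\UU_{R_2}}$, and $\UU_{R_2}$ is a connected component of $\VV_{R_2}$: any point of $\VV_{R_2}$ sufficiently close to $x_0$ is connected within $\VV_{R_2}$ to points of $\UU_{R_2}$ (using local path-connectedness of the manifold $\Sigma_0$), hence lies in $\UU_{R_2}$. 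A minor subtlety is ensuring the slice $B_{r'_2}(x_0)\cap\Sigma_0$ is itself connected, which can be arranged by shrinking $r'_2$; and one should note that the constant $r_2$ obtained this way depends only on $A_0$, $\widetilde{A}_{\widetilde{C}^{-1}}$, and $R_2$, as claimed, since all the auxiliary quantities ($r'_2$, $\delta_2$, the size of the chart) have been bounded below by $\widetilde{C}_{R_2}^{-1}$.
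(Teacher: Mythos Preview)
Your argument has a genuine gap at exactly the step you flag as the ``main technical point.'' You assert that because $x_0\in\overline{\UU_{R_2}}$, $\UU_{R_2}$ is a connected component of $\VV_{R_2}$, and $\Sigma_0$ is locally path-connected, any point of $\VV_{R_2}$ sufficiently close to $x_0$ must lie in $\UU_{R_2}$. This is false in general: an open set can have several connected components all accumulating at the same boundary point. Concretely, if $y$ restricted to $\Sigma_0$ behaved near $x_0$ like $x_1^2-x_2^2$ near the origin, the sublevel set $\{y<R_2\}$ would have two components both touching $x_0$, and your projected points $x_n'$ could land in the ``wrong'' one. Local path-connectedness of the ambient manifold $\Sigma_0$ says nothing about the local connectedness of the sublevel set $\VV_{R_2}$ itself.

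The paper closes this gap with an ingredient you have not used: the identity \eqref{se30},
\[
\D_\al y\,\D^\al y=\frac{y^2-y+B}{4M^2(y^2+z^2)},
\]
valid in $\mathbf{N}_{R_2}$. Since $y(x_0)=R_2>y_{S_0}>1/2$ and $B<1/4$, this gives a uniform lower bound $\D^\al y\,\D_\al y\geq\widetilde{C}_{R_2}^{-1}$ in a ball $B_{r''_2}(x_0)$. Thus $y$ is a submersion there, the level set $\{y=R_2\}$ is a smooth hypersurface, and one can choose a neighborhood $B'$ with $B_{r_2}(x_0)\subseteq B'$ in which the sublevel set $\{y<R_2\}$ is connected (essentially a half-ball). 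Only then does the projection argument via $Q$ force $Q(\{x\in B':y(x)<R_2\})\subseteq\UU_{R_2}$. The quantitative lower bound on $|\nabla y|^2$ is also what guarantees that $r_2$ depends only on $A_0$, $\widetilde{A}_{\widetilde{C}^{-1}}$, and $R_2$ rather than on the particular boundary point $x_0$; your compactness-by-contradiction argument, even if patched, would not directly yield this uniformity.
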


\begin{proof}[Proof of Lemma \ref{constr_2}] In view of  \eqref{se30},
\begin{equation*}
\D^\al y\D_\al y=\frac{y^2-y+B}{4M^2(y^2+z^2)}\text{ in }\mathbf{N}_{R_2}.
\end{equation*}
Thus, if $r_2''\leq\widetilde{C}_{R_2}^{-1}$ is sufficiently small then $\D^\al y\D_\al y\geq\widetilde{C}_{R_2}^{-1}$ in $B_{r''_2}(x_0)$. It follows that there exists  $r_2=r_2(A_0,\widetilde{A}_{\widetilde{C}^{-1}},R_2)>0$ and an open set  $B'$, 
$ B_{r_2}(x_0)\subseteq  B'\subseteq B_{r''_2}(x_0)$,  such that the set $\{x\in B':y(x)<R_2\}$ is connected. Let $Q:B' \to B_{r'_2}(x_0)\cap\Sigma_0$ denote the projection defined above. The set $Q(\{x\in B':y(x)<R_2\})\subseteq B_{r'_2}(x_0)\cap\Sigma_0$ is connected and contains the set $\{x\in B'\cap\Sigma_0:y(x)<R_2\}$. Since $y(Q(x))=y(x)$, it follows from the definition of $\UU_{R_2}$ (as a  connected component of the set $\VV_{R_2}$) that 
\begin{equation*}
Q(\{x\in B':y(x)<R_2\})\subseteq\UU_{R_2}.
\end{equation*}
The claim \eqref{bingo} follows.
\end{proof}

We define now $\mathbf{N}'=\mathbf{N}_{R_2}\cap B_{r_2}(x_0)$. Since $y^2-y+B>\widetilde{C}_{R_2}^{-1}$ in $\mathbf{N}'$, the calculations following \eqref{stuff2} in the previous sections are also applicable in $\mathbf{N}'$. Recall the function $H$ defined in \eqref{stuff50},
\begin{equation*}
H=\frac{y^2-z^2-2y(B-z^2)}{(y^2-y+B)(y^2+z^2)}.
\end{equation*}
Since $B\in[0,1/4)$ (see Lemma \ref{yhorizon}) and $y\geq y_{S_0}+\widetilde{C}_{R_2}^{-1}\geq 1/2+\widetilde{C}_{R_2}^{-1}$, it follows that $H\geq\widetilde{C}_{R_2}^{-1}$ in $\mathbf{N}'$.

\subsection{Vanishing of $\SS$ is a neighborhood of $x_0$}\label{vanishingS2}

Assume $x_0\in\delta_{\Sigma_0\cap \mathbf{E}}(\UU_{R_2})$ is as before, and $r_2>0$ is constructed as in Lemma \ref{constr_2}. We show now that the tensor $\SS$ vanishes in a neighborhood of $x_0$.

\begin{proposition}\label{vanishingS3}
There is $r_3=r_3(A_0,\widetilde{A}_{\widetilde{C}^{-1}},R_2)\in(0,r_2)$ such that
$
\SS=0\text{ in }B_{r_3}(x_0).
$
\end{proposition}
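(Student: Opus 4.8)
\textbf{Proof strategy for Proposition \ref{vanishingS3}.}
The plan is to prove this vanishing statement by a Carleman estimate argument, exactly parallel to the proof of Proposition \ref{Lemmaa1}, but now using a weight built from the function $y$ rather than from the optical functions $u_\pm$. The key point is that we are extending the vanishing of $\Ss$ \emph{across} the level surface $\{y = R_2\}$, from the region $\{y < R_2\}$ (where $\Ss = 0$ by the induction hypothesis) into a full neighborhood of the boundary point $x_0$. First I would introduce the weight $h_\eps = h_\eps(y)$ --- roughly $h_\eps = \eps + (R_2 - y) + $ (lower-order correction terms to enforce the non-characteristic and the large-$\mu$ inequalities) --- and a negligible perturbation of the form $\eps^{12}N^{x_0}$ as in Lemma \ref{Car}. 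I would then verify that $\{h_\eps\}$ is a $\T$-conditional pseudo-convex family in the sense of Definition \ref{psconvex}, taking $V = \T$: the condition $\T(h_\eps) = 0$ holds because $\T(y) = 0$ (Lemma \ref{coefficients}); the non-characteristic condition \eqref{po3.2} reduces to $\D^\al y\D_\al y = 2W > 0$, which holds since $y^2 - y + B > \widetilde{C}_{R_2}^{-1}$ on $\mathbf{N}'$; and the crucial inequality \eqref{po3} must be checked using the explicit Hessian of $y$ computed in \eqref{hessian-y}, against the background metric in the complex null tetrad $(m,\overline m,\ul,l)$.

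The heart of the argument is the verification of the conditional pseudo-convexity inequality \eqref{po3}. This is where the special structure of stationary vacuum Kerr-like spacetimes enters: one must show that for all vectors $X$ satisfying $\g(X,X) = \g(X,\D y) = \g(\T,X) = 0$, the quadratic form $X^\al X^\be(\mu\g_{\al\be} - \D_\al\D_\be y)$ is strictly positive for a suitable choice of $\mu$. Using the table \eqref{hessian-y} and the expression \eqref{stuff8} for $\T$ in the null tetrad, the constraint $\g(\T,X) = 0$ together with $\g(X,\D y) = 0$ (equivalently $X^3 = X^4 W/\ldots$, using $\D_\al y$ from Lemma \ref{coefficients}) cuts the vector $X$ down to essentially the horizontal $m,\overline m$ directions, and on that subspace the relevant piece of the Hessian is controlled by the $(\D^2 y)_{12} = W\cdot 2y/(y^2+z^2)$ term, which is positive because $y > 1/2$. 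The factor $H \geq \widetilde{C}_{R_2}^{-1} > 0$ established just before the statement is exactly what makes the $(\D^2 y)_{34} = -WH$ term cooperate. I expect this computation --- the algebra showing that the restriction of $\mu\g - \D^2 y$ to the two-dimensional space $\{\g(X,\D y) = \g(X,\T) = 0\}\cap\{\g(X,X)=0\}$ is positive definite --- to be the main obstacle, though it is ultimately a finite linear-algebra verification using the data of Lemma \ref{coefficients}.

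Once the Carleman inequality of the form \eqref{Car1gen} with $V = \T$ is in hand (call it Lemma \ref{Carl2}), the remainder follows the template of the proof of Proposition \ref{Lemmaa1}. I would apply it to the scalar components $\phi_{(j_1\ldots j_4)} = \SS(\partial_{j_1},\ldots,\partial_{j_4})$ in the coordinate chart $\Phi^{x_0}$, using Theorem \ref{wave} to bound $|\square_\g\phi_{(j_1\ldots j_4)}|$ by $M\sum(|D^1\phi_{(l_1\ldots l_4)}| + |\phi_{(l_1\ldots l_4)}|)$ and the invariance \eqref{cartoon2} to bound the $\T(\phi)$ terms; the curvature term is absorbed into the left side for $\lambda$ large. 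Crucially, the right-hand side involves a cutoff supported in $\{y \geq R_2 - \text{const}\cdot\delta\}\cap\mathbf{E}$, i.e. in the region where $\Ss$ already vanishes by the induction hypothesis combined with Lemma \ref{constr_2} (which guarantees $\{x \in B_{r_2}(x_0): y(x) < R_2\}\subseteq\cup_{|t|<\delta_2}\Phi_t(\UU_{R_2})$, hence $\Ss = 0$ there since $\LL_\T\Ss = 0$). Letting first $\delta \to 0$ and then $\lambda \to \infty$, the weight inequality $\inf_{B_{r_3}(x_0)}e^{-\lambda f_\eps} \geq \sup_{\text{support of error}}e^{-\lambda f_\eps}$ forces $\phi_{(j_1\ldots j_4)} = 0$ on $B_{r_3}(x_0)$ for $r_3$ small enough depending on $\eps$ and hence on $A_0,\widetilde{A}_{\widetilde{C}^{-1}},R_2$, which is the claim.
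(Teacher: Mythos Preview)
Your approach is essentially the same as the paper's and is correct in outline, but two details need fixing. First, the sign of the weight: you want $h_\eps = y - R_2 + \eps$ (not $\eps + (R_2 - y)$), so that $h_\eps$ is \emph{larger} on the error support $\{y\geq R_2,\ N^{x_0}\geq \eps^{40}/2\}$ than on the small ball $B_{\eps^{100}}(x_0)$; with your sign $h_\eps$ could become nonpositive where $y>R_2+\eps$, and the final weight comparison $\inf_{B_{r_3}}e^{-\lambda f_\eps}\geq\sup_{\text{error}}e^{-\lambda f_\eps}$ would fail. Second, the $\delta\to 0$ step is unnecessary here: unlike Proposition~\ref{Lemmaa1}, where $\Ss$ vanished only on the characteristic hypersurface $\HH^+\cup\HH^-$ and a cutoff $\eta(u_+u_-/\delta)$ was needed to approach it, here $\Ss$ already vanishes on the full open set $\{y<R_2\}\cap B_{r_2}(x_0)$ by Lemma~\ref{constr_2} and the induction hypothesis, so the single spatial cutoff $1-\eta(N^{x_0}/\eps^{40})$ suffices and its derivative is supported where $\Ss\neq 0$ only on $\{y\geq R_2,\ N^{x_0}\in[\eps^{40}/2,\eps^{40}]\}$. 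With these two adjustments your argument matches the paper's proof.
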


As in section \ref{sectionS=01}, the main ingredient needed to prove Proposition \ref{vanishingS3} is a Carleman inequality. We define the smooth function $N^{x_0}:\Phi^{x_0}(B_1)=B_1(x_0)\to[0,\infty)$
\begin{equation*}
N^{x_0}(x)=|(\Phi^{x_0})^{-1}(x)|^2.
\end{equation*}

\begin{lemma}\label{Carl2}
There is $\ep\in (0,r_2]$ sufficiently small and $\widetilde{C}_\ep$ sufficiently
large such that for any $\lambda\geq\widetilde{C}_\ep$ and any $\phi\in C^\infty_0(B_{\ep^{10}}(x_0))$
\begin{equation}\label{Carb1}
\lambda \|e^{-\lambda \f_\ep}\phi\|_{L^2}+\|e^{-\lambda \f_\ep}|D^1\phi|\,\|_{L^2}\leq \widetilde{C}_\ep\lambda^{-1/2}\|e^{-\lambda \f_\ep}\,\square_{\g}\phi\|_{L^2}+\ep^{-6}\|e^{-\lambda \f_\ep}\T(\phi)\,\|_{L^2},
\end{equation}
where, with $R_2=y(x_0)$,
\begin{equation}\label{Carb2}
\f_\ep=\ln[y-R_2+\ep+\ep^{12}N^{x_0}].
\end{equation}
\end{lemma}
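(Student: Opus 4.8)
The plan is to derive Lemma~\ref{Carl2} from the general Carleman estimate of Proposition~\ref{Cargen}, applied with $V=\T$, the family of weights $h_\ep=y-R_2+\ep$ on $B_{\ep^{10}}(x_0)$, and the perturbation $e_\ep=\ep^{12}N^{x_0}$; with these choices $\f_\ep=\ln(h_\ep+e_\ep)$ is exactly the weight in \eqref{Carb2}, and the $\T(\phi)$ term in \eqref{Carb1} is the $V$-term of \eqref{Car1gen}. So everything reduces to verifying, for $\ep$ small (depending only on $A_0$, $\widetilde{A}_{\widetilde{C}^{-1}}$, $R_2$), that $\{h_\ep\}$ is a $\T$-conditional pseudo-convex family in the sense of Definition~\ref{psconvex} at $x_0$, plus the routine check that $e_\ep$ is negligible in the sense of \eqref{smallweight} (immediate from \eqref{boundgeom} since $N^{x_0}$ is a fixed quadratic polynomial in the coordinates). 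Throughout I would use that $x_0\in\overline{\mathbf{N}'}$, $\mathbf{N}'=\mathbf{N}_{R_2}\cap B_{r_2}(x_0)$, so that the identities of Section~\ref{section2} and Lemma~\ref{coefficients} hold at $x_0$ by continuity, together with $y(x_0)=R_2$.

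The easy conditions in \eqref{po5} are immediate: $h_\ep(x_0)=\ep$ since $y(x_0)=R_2$; the derivative bounds follow from $|D^jy|\le\widetilde{C}_{R_2}$ on $B_{r'_2}(x_0)\supseteq B_{\ep^{10}}(x_0)$, see \eqref{alexnew10}; and $|\T(h_\ep)(x_0)|=|\T(y)(x_0)|=0$ because $\sigma$, hence $y=\Re[(1-\sigma)^{-1}]$, is invariant under the flow of $\T$. The non-characteristic condition \eqref{po3.2} reads $(\D_\al y\D^\al y)^2-\ep\,\D^\al y\D^\be y(\D^2y)_{\al\be}\ge\eps_1^2$ at $x_0$; by Lemma~\ref{coefficients} the leading term is $(2W)^2$ with $W\ge\widetilde{C}_{R_2}^{-1}$ (since $y^2-y+B\ge\widetilde{C}_{R_2}^{-1}$ in $\mathbf{N}'$ and $y^2+z^2$ is bounded above and below there), while the remaining term is $O(\ep)$, so this holds for $\ep$ small.

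The heart of the argument is the conditional pseudo-convexity inequality \eqref{po3}, the quantitative form of \eqref{HoCond2} for $h=y$. I would first establish the qualitative statement with a \emph{uniform} constant: for every null $X\in\mathbf{T}_{x_0}(\mathbf{M})$ with $\g(X,Y)=\g(X,\T)=0$, where $Y=\g^{\al\be}\D_\al y\,\partial_\be$, one has $(\D^2y)_{\al\be}X^\al X^\be\le-\widetilde{C}_{R_2}^{-1}|X|^2$. This is a direct computation in the complex null tetrad $(m,\overline m,\ul,l)$ of Section~\ref{section2}. Using \eqref{stuff8} for $\T$ and the tetrad values $\D_3y=-W$, $\D_4y=1$, $\D_1y=\D_2y=0$, the three conditions $\g(X,Y)=\g(X,\T)=\g(X,X)=0$ become $X^4=WX^3$, $\Re(\ze\overline PX^1)=-WX^3$, and $|X^1|^2=W(X^3)^2$; inserting the Hessian entries \eqref{hessian-y} together with $\eta=\ze\overline P/P$ and $\Re P=y$, everything collapses to
\[
(\D^2y)_{\al\be}X^\al X^\be=-2W^2(X^3)^2\Big(\frac{2y}{y^2+z^2}+H\Big),
\]
with $H$ the function in \eqref{stuff50}. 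Since $y>1/2$, $y^2+z^2>0$, $W>0$, and --- crucially --- $H\ge\widetilde{C}_{R_2}^{-1}>0$ in $\mathbf{N}'$ (this is exactly where $B<1/4$ and $y\ge y_{S_0}>1/2$ enter, i.e.\ ultimately the technical hypothesis \eqref{Main-Cond2}, see Lemma~\ref{yhorizon}), the right-hand side is negative, and $\le-\widetilde{C}_{R_2}^{-1}|X|^2$ because $|X|^2\sim(X^3)^2$ on the constraint set.

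Finally I would upgrade this to \eqref{po3}. On the $2$-plane $\Pi=\{Y,\T\}^{\perp}$ the functionals $X\mapsto\g(X,\T)$ and $X\mapsto\g(X,Y)$ vanish, so it suffices to choose $\mu$ so that $\mu\g|_\Pi-(\D^2y)|_\Pi$ is positive definite; given the qualitative bound above and the uniform two-sided bounds on $W$, $H$, $y$, $z$, $B$, and on the tetrad in $\mathbf{N}'$, the standard linear algebra of a pair of quadratic forms on a plane provides such a $\mu=\mu(x_0)$ with $|\mu|\le\widetilde{C}_{R_2}$ (with a harmless case distinction according to the signature of $\g|_\Pi$, which is indefinite or degenerate inside the ergoregion). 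A compactness argument on the unit sphere of $\mathbf{T}_{x_0}(\mathbf{M})$ then extends the bound off $\Pi$ at the cost of the large factor $\ep^{-2}$ in \eqref{po3}: where $X$ is bounded away from $\Pi$, $|\g(X,\T)|^2+|\g(X,Y)|^2$ is bounded below, so $\ep^{-2}$ times it dominates the uniformly bounded form $\mu\g-\D^2y$ once $\ep$ is small. Taking $\eps_1=\widetilde{C}_{R_2}^{-1}$ small enough to serve simultaneously in \eqref{po5}, \eqref{po3.2}, the constraint $|\mu|\le\eps_1^{-1}$ and this last step, and invoking Proposition~\ref{Cargen}, yields \eqref{Carb1}. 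The main obstacle is precisely this pair of interlocking steps: getting the right sign in the Hessian identity for $\D^2y$ restricted to the null screen (forced by the positivity of $H$, hence by \eqref{Main-Cond2}), and then carrying out the passage to the quantitative bound \eqref{po3} with constants uniform in $x_0\in\delta_{\Sigma_0\cap\mathbf{E}}(\UU_{R_2})$, a uniformity on which the global bootstrap in Proposition~\ref{bootstrap2} depends.
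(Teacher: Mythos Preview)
Your overall strategy is exactly the paper's: apply Proposition~\ref{Cargen} with $V=\T$, $h_\ep=y-R_2+\ep$, $e_\ep=\ep^{12}N^{x_0}$, and verify Definition~\ref{psconvex}. Your checks of \eqref{po5}, \eqref{po3.2}, and the negligibility of $e_\ep$ coincide with the paper's. Your Hessian computation on the constrained null cone,
\[
(\D^2y)(X,X)=-2W^2(X^3)^2\Big(\frac{2y}{y^2+z^2}+H\Big)\qquad\text{for }X\ \text{null},\ \g(X,\nabla y)=\g(X,\T)=0,
\]
is correct and makes transparent why $H>0$ (hence \eqref{Main-Cond2}) is the mechanism behind the $\T$-conditional pseudo-convexity.

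Where your argument diverges from the paper is in the passage from this qualitative null-cone bound to the full quantitative inequality \eqref{po3}. You propose to first find a bounded $\mu$ making $\mu\g-\D^2y$ positive on the $2$-plane $\Pi=\{\nabla y,\T\}^\perp$, via ``standard linear algebra of a pair of quadratic forms,'' and then extend off $\Pi$ by compactness. This two-step upgrade is not quite complete as written. The delicate point is the \emph{uniform} bound on $\mu$: when $\g(\T,\T)<0$ the restricted metric $\g|_\Pi$ is positive definite, so there are no null vectors in $\Pi$, the hypothesis you established is vacuous, and nothing you have said prevents the largest generalized eigenvalue of $\D^2y|_\Pi$ relative to $\g|_\Pi$ from blowing up as $\g|_\Pi$ degenerates (approaching the ergosphere boundary). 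One can rescue this by a continuity argument---the degenerating direction of $\g|_\Pi$ limits to a null direction on which you do control $-\D^2y$---but that is additional work you have not indicated, and it must be done uniformly in $x_0$.

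The paper avoids this entirely: it makes the explicit choice $\mu=3R_2W/(R_2^2+z^2)$ and verifies \eqref{po3} for \emph{all} $X$ by a direct tetrad computation (writing $X=X^{(1)}e_1+\overline{X^{(1)}}e_2+Ye_3+Ze_4$, expanding via \eqref{grady}--\eqref{hessy} and \eqref{stuff8}, then completing the square in the variables $Z-WY$ and $L+2WY$ with $L=\ze\overline{P}X^{(1)}+\overline{\ze}P\overline{X^{(1)}}$). This is less conceptual than your route but yields the uniform constants immediately, with no case distinction on the signature of $\g|_\Pi$ and no separate compactness step. Your identity for $(\D^2y)(X,X)$ on the null screen is in fact a clean corollary of the paper's computation, and is a nice way to see \emph{why} their explicit choice of $\mu$ succeeds.
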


\begin{proof}[Proof of Lemma \ref{Carl2}] We will use the notation $\widetilde{C}_{R_2}$ to denote various constants in $[1,\infty)$ that may depend only on the constants $A_0$, $\widetilde{A}_{\widetilde{C}^{-1}}$, and $R_2$. We would like to apply Proposition \ref{Cargen} with $V=\T$, $h_\ep=y-R_2+\ep$ and $e_\eps=\ep^{12}N^{x_0}$. The condition \eqref{smallweight} for the negligible perturbation $e_\ep$ is clearly satisfied if $\ep$ is sufficiently small. It remains to show that there is $\ep_1$ sufficiently small such that the family of weights $\{h_\ep\}_{\ep\in(0,\ep_1)}$ satisfies the pseudo-convexity conditions \eqref{po5}, \eqref{po3.2}, and \eqref{po3}. 

Clearly, $h_\ep(x_0)=\ep$ and $\T(h_\ep)(x_0)=0$ since $\T(\sigma)=0$. Also $|D^j y|\leq \widetilde{C}_{R_2}$ for $j=1,2,3,4$ in $B_{r_2}(x_0)$, see \eqref{alexnew10}, thus condition \eqref{po5} is satisfied if $\eps_1$ is sufficiently small.  

To prove \eqref{po3.2} and \eqref{po3} we use the complex null tetrad $l=e_{4}$, $\Ll=e_{3}$, $\Mm=e_{1}$, $\Omm=e_{2}$, normalized as in \eqref{stuff2}. With $\D_{(\a)}=\D_{e_\al}$, using Lemma \ref{coefficients} and the definition $h_\eps=y-R_2+\ep$ we have
\begin{equation}\label{grady}
\D_{(1)}h_\eps=\D_{(2)}h_\eps=0,\qquad \D_{(3)}h_\eps=-W,\qquad \D_{(4)}h_\eps=1,
\end{equation}
and, using also $\eta=\ze\frac{\overline{P}}{P}$,
\begin{equation}\label{hessy}
\begin{cases}
\D_{(4)}\D_{(4)}h_\eps=\D_{(3)}\D_{(3)}h_\eps=0, \qquad\qquad &\D_{(4)}\D_{(3)}h_\eps=\D_{(3)}\D_{(4)}h_\eps=-WH\\
\D_{(4)}\D_{(1)}h_\eps=\D_{(1)}\D_{(4)}h_\eps=\ze,\qquad \qquad &\D_{(4)}\D_{(2)}h_\eps=\D_{(2)}\D_{(4)}h_\eps=\overline{\ze} \\
\D_{(3)}\D_{(1)}h_\eps=\D_{(1)}\D_{(3)}h_\eps=W\ze\frac{\overline{P}}{P}, \qquad\quad  &\D_{(3)}\D_{(2)}h_\eps=\D_{(2)}\D_{(3)}h_\eps=W\overline{\ze}\frac{P}{\overline{P}}\\
\D_{(1)}\D_{(2)}h_\eps=\D_{(2)}\D_{(1)}h_\eps=W\frac{2R_2}{R_2^2+z^2},\quad\, &\D_{(1)}\D_{(1)}h_\eps=\D_{(2)}\D_{(2)}h_\eps=0.
\end{cases}
\end{equation}
where all the functions are evaluated at $x_0$. Thus
\begin{equation*}
\D^\al h_\eps\D^\be h_\eps(\D_\al h_\ep\D_\be h_\ep-\ep\D_\al\D_\be h_\ep)=4W^2-2\ep W^2H,
\end{equation*}
which is  bounded from below by $\eps_1^2$ if $\eps_1$ is sufficiently small, since $W(x_0)\geq \widetilde{C}_{R_2}^{-1}$ and $|H(x_0)|\leq \widetilde{C}_{R_2}$. The condition \eqref{po3.2} is therefore satisfied.

We prove now condition \eqref{po3} for a vector $X=X^{(1)}e_1+\overline{X^{(1)}}e_2+Ye_3+Ze_4$, $Y,Z\in\mathbb{R}$, $X^{(1)}\in\mathbb{C}$. Recall, see \eqref{stuff8},
\begin{equation*}
\T/(2M)=\overline{\ze}Pe_1+\ze\overline{P}e_2-e_3-We_4.
\end{equation*}
Thus, using also \eqref{grady}
\begin{equation}\label{ro1}
\begin{split}
\ep^{-2}(&|X^\al\T_\al|^2+|X^\al\D_\al h_\ep|^2)\\
&=\ep^{-2}(Z-WY)^2+\ep^{-2}4M^2(\ze\overline{P}X^{(1)}+\overline{\ze}P\overline{X^{(1)}}+YW+Z)^2\\
&\geq(\ep^{-2}/2)(Z-WY)^2+(\ep^{-1}/2)(\ze\overline{P}X^{(1)}+\overline{\ze}P\overline{X^{(1)}}+2YW)^2 
\end{split}
\end{equation}
for $\ep$ sufficiently small. Using \eqref{hessy}
\begin{equation*}
\begin{split}
X^\al X^\be&(\mu\g_{\al\be}-\D_\al\D_\be h_\ep)=2X^{(1)}\overline{X^{(1)}}\Big(\mu-\frac{2R_2W}{R_2^2+z^2}\Big)+2YZ(-\mu+WH)\\
&-2\ze X^{(1)}[Z+WY(\overline{P}/P)]-2\overline{\ze}\overline{X^{(1)}}[Z+WY(P/\overline{P})].
\end{split}
\end{equation*}
Let $L=\ze\overline{P}X^{(1)}+\overline{\ze}P\overline{X^{(1)}}$. We write $Z=WY+Z-WY$, and then $L=-2WY+L+2WY$, and use
\begin{equation*}
1+(\overline{P}/P)=\overline{P}\frac{2R_2}{R_2^2+z^2},\qquad 1+(P/\overline{P})=P\frac{2R_2}{R_2^2+z^2}, 
\end{equation*}
to rewrite
\begin{equation}\label{ro2}
\begin{split}
&X^\al X^\be(\mu\g_{\al\be}-\D_\al\D_\be h_\ep)=2X^{(1)}\overline{X^{(1)}}\Big(\mu-\frac{2R_2W}{R_2^2+z^2}\Big)+2Y^2(-W\mu+W^2H)\\
&-\frac{4R_2}{R_2^2+z^2}WY\cdot L+(Z-WY)[2Y(-\mu+WH)-2\ze X^{(1)}-2\overline{\ze}\overline{X^{(1)}}]\\
&=2X^{(1)}\overline{X^{(1)}}\Big(\mu-\frac{2R_2W}{R_2^2+z^2}\Big)+2Y^2\Big(-W\mu+W^2H+\frac{4R_2W^2}{R_2^2+z^2}\Big)\\
&-\frac{4R_2}{R_2^2+z^2}WY\cdot (L+2WY)+(Z-WY)[2Y(-\mu+WH)-2\ze X^{(1)}-2\overline{\ze}\overline{X^{(1)}}].
\end{split}
\end{equation}
We set now $\mu=3R_2W/(R_2^2+z^2)$ and combine \eqref{ro1} and \eqref{ro2}. Since $H(x_0)\geq 0$ it follows that
\begin{equation*}
\begin{split}
&X^\al X^\be(\mu\g_{\al\be}-\D_\al\D_\be h_\ep)+\ep^{-2}(|X^\al\T_\al|^2+|X^\al\D_\al h_\ep|^2)\\
&\geq (\ep^{-2}/2)(Z-WY)^2+(\ep^{-1}/2)(L+2YW)^2+2|X^{(1)}|^2\frac{R_2W}{R_2^2+z^2}+2Y^2\frac{R_2W^2}{R_2^2+z^2}\\
&-\widetilde{C}_{R_2}(|Z-WY|+|L+WY|)(|Y|+|X^{(1)}|)\\
&\geq (\ep^{-2}/4)(Z-WY)^2+(\ep^{-1}/4)(L+2YW)^2+|X^{(1)}|^2\frac{R_2W}{R_2^2+z^2}+Y^2\frac{R_2W^2}{R_2^2+z^2}
\end{split}
\end{equation*}
if $\eps$ is sufficiently small, since $W\geq\widetilde{C}_{R_2}^{-1}$. It follows that
\begin{equation*}
X^\al X^\be(\mu\g_{\al\be}-\D_\al\D_\be h_\ep)+\ep^{-2}(|X^\al\T_\al|^2+|X^\al\D_\al h_\ep|^2)\geq \widetilde{C}_{R_2}^{-1}(Z^2+|X^{(1)}|^2+Y^2),
\end{equation*}
thus the condition \eqref{po3} is satisfied for $\eps_1$ sufficiently small. This completes the  proof of the lemma.
\end{proof}

We prove now Proposition \ref{vanishingS2}.

\begin{proof}[Proof of Proposition \ref{vanishingS2}] We use the Carleman estimate in Lemma \ref{Carl2} and Lemma \ref{constr_2}. In view of Lemma \ref{Carl2}, there are constants $\ep\in(0,r_2]$ and $\widetilde{C}_\ep\geq 1$ such that for any $\lambda\geq \widetilde{C}_\ep$ and any $\phi\in C^\infty_0(B_{\ep^{10}}(x_0))$,

\begin{equation}\label{Carbre1}
\lambda \|e^{-\lambda \f_\ep}\phi\|_{L^2}+\|e^{-\lambda \f_\ep}|D^1\phi|\,\|_{L^2}\leq \widetilde{C}_\ep\lambda^{-1/2}\|e^{-\lambda \f_\ep}\,\square_{\g}\phi\|_{L^2}+\ep^{-6}\|e^{-\lambda \f_\ep}\T(\phi)\,\|_{L^2},
\end{equation}
where
\begin{equation}\label{Carbre2}
\f_\ep=\ln[y-R_2+\ep+\ep^{12}N^{x_0}].
\end{equation}
The constant $\ep$ will remain fixed in this proof. For simplicity of notation, we replace the constants $\widetilde{C}_\ep$ with $\widetilde{C}_{R_2}$; since $\ep$ is fixed, these constants may depend only on the constants $A_0$, $\widetilde{A}_{\widetilde{C}^{-1}}$, and $R_2$. We will show that $\Ss\equiv 0$ in the set $B_{\ep^{100}}=B_{\ep^{100}}(x_0)$.

In view of Theorem \ref{wave} and the fact that $\T$ is a Killing vector-field
\begin{equation}\label{maineq}
\begin{cases}
&\square_\g\SS_{\al_1\ldots\al_4}=\SS_{\be_1\ldots \be_4}{\AA^{\be_1\ldots\be_4}}_{\alpha_1\ldots\alpha_4}+\D_{\be_5}\SS_{\be_1\ldots \be_4}{\BB^{\be_1\ldots\be_5}}_{\alpha_1\ldots\alpha_4};\\
&\mathcal{L}_\T\SS=0,
\end{cases}
\end{equation}
in $B_{\ep^{10}}(x_0)$, for some smooth tensor-fields $\AA$ and $\BB$. Also, using Lemma \ref{constr_2} and the fact that $\SS$ vanishes in $\UU_{R_2}$ (the bootstrap assumption),
\begin{equation}\label{maineq2}
\SS=0\text{ in }\{x\in B_{\ep^{10}}(x_0):y(x)<R_2\}.
\end{equation}

As in the proof of Proposition \ref{Lemmaa1}, for $(j_1,\ldots,j_4)\in\{1,2,3,4\}^4$ we define, using the coordinate chart $\Phi$,
\begin{equation*}
\phi_{(j_1\ldots j_4)}=\mathcal{S}(\partial_{j_1},\ldots ,\partial_{j_k}).
\end{equation*}
The functions $\phi_{(j_1\ldots j_4)}:B_{\ep^{10}}(x_0)\to\mathbb{C}$ are smooth. Let $\eta:\mathbb{R}\to[0,1]$ denote a smooth function supported
in $[1/2,\infty)$ and equal to $1$ in $[3/4,\infty)$. We define
\begin{equation*}
\begin{split}
\phi^{\ep}_{(j_1\ldots j_4)}=\phi_{(j_1\ldots j_4)}\cdot \big(1-\eta(N(x)/ \ep^{40})\big)=\phi_{(j_1\ldots j_4)}\cdot \widetilde{\eta}_\ep.
\end{split}
\end{equation*}
Clearly, $\phi^{\ep}_{(j_1\ldots j_4)}\in C^\infty _0(B_{\ep^{10}}(x_0))$ and
\begin{equation*}
\begin{cases}
&\square_\g\phi^{\ep}_{(j_1\ldots j_4)}=\widetilde{\eta}_{\ep}\cdot\square_\g\phi_{(j_1\ldots j_4)}+2\D_\al\phi_{(j_1\ldots j_4)}\cdot \D^\al \widetilde{\eta}_{\ep}+\phi_{(j_1\ldots j_4)}\cdot \square_\g\widetilde{\eta}_{\ep}\\
&\T(\phi^{\ep}_{(j_1\ldots j_4)})=\widetilde{\eta}_{\ep}\cdot \T(\phi_{(j_1\ldots j_4)})+\phi_{(j_1\ldots j_4)}\cdot \T(\widetilde{\eta}_{\ep}).
\end{cases}
\end{equation*}
Using the Carleman inequality \eqref{Carbre1}, for any $(j_1,\ldots j_4)\in\{1,2,3,4\}^4$ we have
\begin{equation}\label{tr30}
\begin{split}
&\lambda\cdot \|e^{-\lambda \f_{\ep}}\cdot \widetilde{\eta}_{\ep}\phi_{(j_1\ldots j_4)} \|_{L^2}+\|e^{-\lambda \f_{\ep}}\cdot \widetilde{\eta}_{\ep}|D^{1} \phi_{(j_1\ldots j_4)}|\, \|_{L^2}\\
&\leq \widetilde{C}_{R_2}\lambda ^{-1/2}\cdot \|e^{-\lambda \f_{\ep}}\cdot \widetilde{\eta}_{\ep}\square_\g\phi_{(j_1\ldots j_4)}\|_{L^2}+\widetilde{C}_{R_2}\|e^{-\lambda \f_\ep}\cdot \widetilde{\eta}_\ep\T(\phi_{(j_1\ldots j_4)})\|_{L^2}\\
&+\widetilde{C}_{R_2}\Big[\|e^{-\lambda \f_{\ep}}\cdot \D_\al\phi_{(j_1\ldots j_4)}\D^\al \widetilde{\eta}_{\ep} \|_{L^2}+\|e^{-\lambda \f_{\ep}}\cdot \phi_{(j_1\ldots j_4)} ( |\square_\g\widetilde{\eta}_{\ep}|+|D^1\widetilde{\eta}_{\ep}| )\|_{L^2}\Big],
\end{split}
\end{equation}
for any $\lambda\geq\widetilde{C}_{R_2}$. Using the identities in \eqref{maineq}, in $ B_{\ep^{10}}(x_0)$ we estimate pointwise
\begin{equation}\label{tr31}
\begin{cases}
&|\square_\g\phi_{(j_1\ldots j_4)}|\leq \widetilde{C}_{R_2}\sum_{l_1,\ldots,l_4}\big(|D^1\phi_{(l_1\ldots l_4)}|+|\phi_{(l_1\ldots l_4)}|\big);\\
&|\T(\phi_{(j_1\ldots j_4)})|\leq \widetilde{C}_{R_2}\sum_{l_1,\ldots,l_4}|\phi_{(l_1\ldots l_4)}|.
\end{cases}
\end{equation}
We add up the inequalities \eqref{tr30} over $(j_1,\ldots,j_4)\in\{1,2,3,4\}^4$. The key observation is that, in view of \eqref{tr31}, the first two terms in the right-hand side can be absorbed into the left-hand side for $\lambda$ sufficiently large. Thus, for any $\lambda\geq \widetilde{C}_{R_2}$
\begin{equation}\label{tr32}
\begin{split}
&\lambda\sum_{j_1,\ldots ,j_4}\|e^{-\lambda \f_{\ep}}\cdot \widetilde{\eta}_{\ep}\phi_{(j_1\ldots j_4)} \|_{L^2}\\
&\leq \widetilde{C}_{R_2}\sum_{j_1,\ldots ,j_4}\Big[\|e^{-\lambda \f_{\ep}}\cdot \D_\al\phi_{(j_1\ldots j_4)}\D^\al \widetilde{\eta}_{\ep} \|_{L^2}+\|e^{-\lambda \f_{\ep}}\cdot \phi_{(j_1\ldots j_4)} ( |\square_\g\widetilde{\eta}_{\ep}|+|D^1\widetilde{\eta}_{\ep}| )\|_{L^2}\Big].
\end{split}
\end{equation}

Using the hypothesis \eqref{maineq2} and the definition of the function $\widetilde{\eta}_\ep$, we have
\begin{equation*}
|\D_\al\phi_{(j_1\ldots j_4)}\D^\al \widetilde{\eta}_{\ep}|+\phi_{(j_1\ldots j_4)} ( |\square_\g\widetilde{\eta}_{\ep}|+|D^1\widetilde{\eta}_{\ep}| )\leq\widetilde{C}_{R_2}\cdot \mathbf{1}_{\{x\in B_{\ep^{10}}(x_0):\,y(x)\geq R_2\text{ and }N(x)\geq\ep^{50}\}}.
\end{equation*}
Using the definition \eqref{Carbre2}, we observe also that
\begin{equation*}
\inf_{B_{\ep^{100}}}e^{-\lambda \f_\ep}\geq e^{-\lambda \ln(\ep+\ep^{70})}\geq\sup_{\{x\in B_{\ep^{10}}(x_0):\,y(x)\geq R_2\text{ and }N(x)\geq\ep^{50}\}}e^{-\lambda \f_\ep}.  
\end{equation*}
It follows from these last two inequalities and \eqref{tr32} that
\begin{equation*}
\lambda\sum_{j_1,\ldots ,j_4}\|\mathbf{1}_{B_{\ep^{100}}}\cdot \phi_{(j_1\ldots j_4)} \|_{L^2}\leq \widetilde{C}_{R_2}\sum_{j_1,\ldots ,j_4}\|\mathbf{1}_{\{x\in B_{\ep^{10}}(x_0):\,y(x)\geq R_2\text{ and }N(x)\geq\ep^{50}\}}\|_{L^2},
\end{equation*}
for any $\lambda\geq \widetilde{C}_{R_2}$. The proposition follows by letting $\lambda\to\infty$.
\end{proof}

\subsection{Proof of Proposition \ref{bootstrap2} and the Main Theorem}\label{subsectionlast} In this subsection we complete the proof of the Main Theorem.

\begin{proof}[Proof of Proposition \ref{bootstrap2}] In view of Proposition \ref{vanishingS3}, the tensor $\SS$ vanishes in the connected open set $\mathbf{N'}=\mathbf{N}_{R_2}\bigcup\big(\cup_{x_0\in \delta_{\Sigma_0\cap\mathbf{E}}(\UU_{R_2})}B_{r_3}(x_0)\big)$. It remains to show that for  some  $r'\ll r_3$ we have
\bea
\label{endp1}
\UU_{R_2+r'} &\subseteq & \UU_{R_2}\cup\big(\cup_{x_0\in \delta_{\Sigma_0\cap\mathbf{E}}(\UU_{R_2})}G_{r_3/\tilde{C}}(x_0)\big),\\ 
 G_{r}(x_0) &= &   \{x\in B_{r}(x_0)\cap\Sigma_0:y(x)<R_2+r'\}\big),\nn
\eea
where $\widetilde{C}$ is sufficiently large so that,
\bea
\overline{\big(\cup_{x_0\in \delta_{\Sigma_0\cap\mathbf{E}}(\UU_{R_2})}G_{r_3/\widetilde{C}}(x_0)\big)}\subseteq \big(\cup_{x_0\in \delta_{\Sigma_0\cap\mathbf{E}}(\UU_{R_2})}\overline{G_{r_3/4}(x_0) }\big),\label{choose-tildeC}
\eea
with the bars denoting  the closures in $\Si_0$.  We observe that such 
a constant exists in view of the fact that $\delta_{\Sigma_0\cap\mathbf{E}}(\UU_{R_2})$ is compact and the function $y$ tends to infinity in the asymptotic region of $\Si_0$ (in view of our assumption {\bf AF}).

Assume, by contradiction,  that \eqref{endp1} does not hold, thus there exists  $p\in\UU_{R_2+r'}$ which does not belong in  the open set (in $\Sigma_0$) in the right-hand side of \eqref{endp1}. Let $\gamma:[0,1]\to\UU_{R_2+r'}\cup S_0$ denote a smooth curve such that $\gamma(0)\in S_0$ and $\gamma(1)=p$. Let $p'=\gamma(t')$ denote the first point on this curve which is not in the open set in the right-hand side of \eqref{endp1}. Clearly, $p'$ does not belong to the closure of $\UU_{R_2}$, thus
\begin{equation*}
p'\in \overline{\cup_{x_0\in \delta_{\Sigma_0\cap\mathbf{E}}(\UU_{R_2})}G_{r_3/\tilde{C}} (x_0)  }.
\end{equation*}
In view of \eqref{choose-tildeC} we infer that, for some  
$x_0\in \delta_{\Sigma_0\cap\mathbf{E}}   (\UU_{R_2})$,
\begin{equation}\label{endp2}
p'\in\{x\in B_{r_3/2}(x_0)\cap\Sigma_0:y(x)<R_2+r'\}. 
\end{equation}  

Recall our  smooth vector-field $Y=\g^{\al\be}\partial_\al y\partial_\be$,
 see \eqref{vect-Y} and discussion following it,     with the property that $\g(Y,Y)\geq \widetilde{C}_{R_2}^{-1}$ in $B_{r_3}(x_0)$. We consider the integral curve starting from the point $p'$ and flowing (backwards) a short distance $\widetilde{C}_{R_2}^{-1}$ (much smaller than $r_3$) along $Y$, and project this integral curve to $\Sigma_0$ using the smooth projection $Q: \cup_{|t|<\delta_2}\Phi_t(B_{r_3}(x_0)\cap\Sigma_0)\to B_{r_3}(x_0)\cap\Sigma_0$.
  The resulting curve is a smooth curve in  $B_{r_3}(x_0)\cap\Sigma_0$; if $r'$ sufficiently small then this curve contains a point $p''$ such that $y(p'')<R_2$. In view  of Lemma \ref{constr_2}, $p''\in\UU_{R_2}$, thus there is a point $p'''\in\delta_{\Sigma_0\cap\mathbf{E}}(\UU_{R_2})$ on the curve joining $p'$ and $p''$. Then $p'\in B_{r_3/\widetilde{C}}(p''')$, which gives a contradiction.
\end{proof}

To complete the proof of the Main Theorem we use Proposition \ref{bootstrap2} and Proposition \ref{uniformy}. Using Proposition \ref{bootstrap2}, it follows that the tensor $\SS$ vanishes in the connected component of the set $\Sigma'_0$ whose closure in $\Sigma_0$ contains $S_0$. Assume $(\Sigma_0\cap\E)\setminus\Sigma'_0\neq\emptyset$ and let $p\in(\Sigma_0\cap\E)\setminus\Sigma'_0$. Assume $\gamma:[0,1]\to\Sigma_0\cap\overline{\E}$ is a smooth curve such that $\gamma(0)\in S_0$ and $\gamma(1)=p$. Let $p'=\gamma(t')$ denote the first point on this curve which is not in $\Sigma'_0\cup S_0$. Thus $\gamma(t'')$ belongs to the connected component of the set $\Sigma'_0$ whose closure in $\Sigma_0$ contains $S_0$ for any $t''<t'$. Since $\SS$ vanishes in this connected component, it follows from Lemma \ref{uniformy} that the function $y$ is bounded by a constant at all points $\gamma(t'')$, $t''<t'$. Thus $p'\in\Sigma'_0$, contradiction.

It follows that $\Sigma'_0=\Sigma\cap\E$ and $\SS=0$ in $\Sigma\cap\E$, which establishes the claim \eqref{mainclaim}. 

\appendix 

\section{The main formalism}\label{NP}

\subsection{Horizontal structures}
Assume $(\mathbf{N},\g)$ is a smooth\footnote{As before, $\mathbf{N}$ is assumed to be a connected, orientable, paracompact $C^\infty$ manifold without boundary.} vacuum Einstein space-time of dimension $4$. Assume $(l,\ul)$ is a null pair on $\mathbf{N}$, i.e.
\begin{equation*}
\g(l,l)=\g(\ul,\ul)=0\text{ and }\g(l,\ul)=-1.
\end{equation*}
We say that a vector-field $X$ is {\it{horizontal}} if
\begin{equation*}
\g(l,X)=\g(\ul,X)=0.
\end{equation*}
Let $\O(\N)$ denote the vector space of horizontal vector-fields on $\N$. We define the induced metric,
and induced  volume form,
\begin{equation}\label{induced}
\begin{cases}
&\gamma(X,Y)=\g(X,Y)\,\qquad \qquad \forall\, X, Y\in \O(\N),\\
&\in(X,Y)=\in(X,Y,\lb,l)\qquad \forall\,  X, Y\in \O(\N).
\end{cases}
\end{equation}
where $\in$ denotes the standard volume form
on $\N$.  If $(e_a)_{a=1,2}$ is an orthonormal
basis of horizontal vector-fields, i.e. $\ga(e_a, e_b)
=\de_{ab}$, we write $\in_{ab}=\in(e_a, e_b)$ and 
without loss of generality we assume that $\in_{12}=1$.  

In general  the 
commutator $[X,Y]$ of two horizontal vector-fields
may fail  to be horizontal. We say that the pair $(l,\lb)$ is \emph{integrable} if the set of 
horizontal vector-fields  forms an integrable distribution,
i.e. $X, Y\in\O(\mathbf{N}) $ implies that $[X,Y]\in\O(\N)$.
For any vector-field $X\in \mathbf{T}(\mathbf{N})$ we define its horizontal projection
\begin{equation*}
{}^{(h)}X=X+\g(X,\ul)l+\g(X,l)\ul.
\end{equation*}
Using this projection we define the horizontal covariant derivative $\nabla_XY$, $X\in\mathbf{T}(\mathbf{N})$, $Y\in \mathbf{O}(\mathbf{N})$,
\begin{equation*}
\nabla_XY={}^{(h)}(\D_XY)=\D_XY-g(\D_X\ul,Y)l-g(\D_Xl,Y)\ul.
\end{equation*}
The definition shows easily that,
\begin{equation}\label{cov1}
\begin{cases}
&\nabla_{fX+f'X'}Y=f\nabla_XY+f'\nabla_{X'}Y;\\
&\nabla_X(fY+f'Y')=f\nabla_XY+X(f)Y+f'\nabla_XY'+X(f')Y';\\
&X \ga(Y,Y')=\ga(\nabla_XY,Y')+\ga(Y,\nabla_XY'),
\end{cases}
\end{equation}
for any $X,X'\in\mathbf{T}(\mathbf{N})$, $Y,Y'\in\mathbf{O}(\mathbf{N})$, $f,f'\in C^\infty(\mathbf{N})$.  In particular we see that $\nab$ is compatible with the horizontal  metric $\ga$.

In what follows  we  identify covariant and contravariant  horizontal  tensor-fields  using  the induced metric ${}^{(h)}\ga$.
For any $k\in\mathbb{Z}_+$ let
  $\mathbf{O}_k(\mathbf{N})$ denote the vector space of {\it{$k$ horizontal tensor-fields}}
\begin{equation*}
U:\mathbf{O}(\mathbf{N})\times\ldots\times\mathbf{O}(\mathbf{N})\to\mathbb{C}.
\end{equation*}
Given a horizontal tensor-field $U\in\mathbf{O}_k(\mathbf{N})$ and $X\in\mathbf{T}(\mathbf{N})$ we define the covariant derivative $\nabla_XU\in\mathbf{O}_k(\mathbf{N})$ by the formula
\begin{equation}\label{cov2}
\nabla_XU(Y_1,\ldots,Y_k)=X(U(Y_1,\ldots,Y_k))-U(\nabla_XY_1,\ldots,Y_k)-\ldots-U(Y_1,\ldots,\nabla_XY_k).
\end{equation}
According to the  definition  the mapping $(X,Y_1,\ldots,Y_k)\to\nabla_XU(Y_1,\ldots,Y_k)$ is a multilinear mapping on $\mathbf{T}(\mathbf{N})\times\mathbf{O}(\mathbf{N})\times\ldots\times\mathbf{O}(\mathbf{N})$.

We define  the null second fundamental forms ${}^{(h)}\chi,{}^{(h)}\underline\chi\in\mathbf{O}_2(\mathbf{N})$ by
\begin{equation}\label{fo1}
\begin{cases}
&{}^{(h)}\chi(X,Y)=g(\D_Xl,Y),\\
&{}^{(h)}\underline\chi(X,Y)=g(\D_X\ul,Y).
\end{cases}
\end{equation}
Observe that    ${}^{(h)}\chi$
 and ${}^{(h)}\chib$  are  symmetric if and 
 only if   the horizontal structure is 
 integrable. Indeed this follows easily from
 the formulas,
 \beaa
 {}^{(h)}\chi(X,Y)-{}^{(h)}\chi(Y,X)&=&\g(\D_X l, Y)-\g(\D_Yl,X)=-\g(l, [X,Y])\\
  {}^{(h)}\chib(X,Y)-{}^{(h)}\chib(Y,X)&=&\g(\D_X \lb, Y)-\g(\D_Y\lb,X)=-\g(\lb, [X,Y]).
\eeaa
The trace of an horizontal 2-tensor $U$ is defined
according to 
\beaa
\mbox{tr} (U):=\de^{ab}U_{ab}
\eeaa
where $(e_a)_{a=1,2}$ is an arbitrary orthonormal
frame of horizontal vector-fields. Observe that the definition does not depend on the particular 
frame. We denote by  $\trch$  and $\trchb$ the traces of ${}^{(h)}\chi$ and ${}^{(h)}\chib$. 
If  $U\in \O_k(\N) $ with $k=1,2$   we define its dual, expressed relative to an arbitrary orthonormal 
frame  $(e_a)_{a=1,2}\in \O(\N)$,
\beaa
\dual U_a=\in_{ab}U_b,\qquad \dual U_{ab}=
\in_{ac}U_{cb}
\eeaa
  Clearly    $\dual (\dual \om)=-\om$.     If 
   $\om\in\O(\N)_2$  is symmetric traceless then so is its dual $\dual\om$.

We define also the horizontal $1$-forms ${}^{(h)}\xi,{}^{(h)}\xib,{}^{(h)}\eta,{}^{(h)}\etab,{}^{(h)}\ze\in\mathbf{O}_1(\mathbf{N})$ by
\begin{equation}\label{fo2}
\begin{cases}
&{}^{(h)}\xi(X)=\g(\D_ll,X),\quad {}^{(h)}\xib(X)=\g(\D_{\ul}\ul,X),\\
&{}^{(h)}\eta(X)=\g(\D_{\ul}l,X),\quad {}^{(h)}\etab(X)=\g(\D_l\ul,X),\\
&{}^{(h)}\ze(X)=\g(\D_X l,\ul),
\end{cases}
\end{equation}
and the real scalars
\begin{equation}\label{fo3}
\om=\g(\D_ll,\ul),\quad\omb=\g(\D_{\ul}\ul,l).
\end{equation}

Assume that $W\in\mathbf{T}_4^0(\mathbf{N})$ is a Weyl field, i.e.
\begin{equation}\label{fo4}
\begin{cases}
&W_{\al\be\mu\nu}=-W_{\be\al\mu\nu}=-W_{\al\be\nu\mu}=W_{\mu\nu\al\be};\\
&W_{\al\be\mu\nu}+W_{\al\mu\nu\be}+W_{\al\nu\be\mu}=0;\\
&\g^{\be\nu}W_{\al\be\mu\nu}=0.
\end{cases}
\end{equation}
We define the null components of the Weyl field $W$, $\al(W),\aa(W),\varrho(W)\in \mathbf{O}_2(\mathbf{N})$ and $\be(W),\bb(W)\in\mathbf{O}_1(\mathbf{N})$ by the formulas
\begin{equation}\label{fo5}
\begin{cases}
\al(W)(X,Y)=W(l,X,l,Y),\\
\aa(W)(X,Y)=W(\ul,X,\ul,Y),\\
\be(W)(X)=W(X,l,\ul,l),\\
\bb(W)(X)=W(X,\ul,l,\ul),\\
\varrho(W)(X,Y)=W(X,\ul,Y,l).
\end{cases}
\end{equation}
Recall that if $W$ is a Weyl field its 
Hodge dual $\dual W$, defined by 
${}^{\ast}W_{\al\be\mu\nu}=\frac{1}{2}{\in_{\mu\nu}}^{\rho\si}W_{\al\be\rho\si}$,  is also a Weyl field. We easily
check the formulas,
\begin{equation}
\begin{cases}
&\aa(\dual W)=\dual \aa(W),\qquad \a(\dual W)=-
\dual \a(W) \\
&\bb(\dual W)=\dual\bb(W),\qquad \b(\dual W)=-\dual \b(W)\\
&\varrho(\dual W)=\dual \varrho(W) 
\end{cases}
\end{equation}
It is easy to check that $\a,\aa$ are symmetric traceless  horizontal tensor-fields   in $\O_2(\N)$.
 On the other hand    $\varrho\in \O_2(\N)$ is however  neither symmetric nor 
 traceless.  It is convenient to express it in terms
 of  the following  two scalar quantities,
 \bea
 \label{fo5'}
 \rho(W)=W(l,\lb,l,\lb),\qquad \dual\rho(W)=
 \dual W(l,\lb,l,\lb)\label{rho-dualrho}.
 \eea
 Observe also that,
 \beaa
\rho(\dual W)=\rhod(W), \qquad \rhod(\dual W)=-\rho.
\eeaa  
Thus,
\bea
\varrho(X,Y)=\frac{1}{2}\big(-\rho\,\ga(X,Y)+\rhod\, \in(X,Y)\big),\qquad 
\forall\, X,Y\in \O(\N).
\eea
We have,
 $
W(X, Y, \lb,l)=\varrho(W)(X,Y)-\varrho(W)(Y,X)=\rhod(W)\in(X,Y)
$. 
Also, since  $\dual(\dual W)=-W$, we deduce  that
$W(X, Y, X', Y')= \in(X,Y)\dual 
W(X', Y' , \lb, l)=\in(X,Y)\in(X', Y')
 \rhod(\dual W)$.  Therefore,
\beaa
\begin{cases}
&W(X, Y, \lb,l)=\in(X,Y)\rhod(W)\\
&W(X, Y, X' , Y')=- \in(X,Y)\in(X', Y')\rho(W)\\
&W(X,Y, Z ,\lb)=\in(X,Y)\,\, \bb(W)(Z).\\
\end{cases}
\eeaa
We also consider the case of a self-dual 
Weyl field $\WW=W+i\dual W$, i.e.
$\dual \WW=-i \WW$. Defining the null 
decomposition  $\aa(\WW),\, \bb(\WW),\, \rho(\WW), \, \rhod(\WW),
\b(\WW),\,  \a(\WW)$ as in \eqref{fo5}, \eqref{fo5'} and  setting  $\rhod(\WW):=\rho(\dual\WW)$  as in \eqref{fo5'},
we find, 
\beaa
\rhod(\WW)=-i \rho(\WW)
\eeaa
Relative to a null frame $e_1, e_2, e_3=\lb, l= e_4$ we have,
\bea
\WW_{ab34}=-i \in_{ab}\rho(\WW),\quad \WW_{abcd}=
-\in_{ab}\in_{cd}\rho(\WW),\quad \WW_{abc3}=\in_{ab}\bb_c
(\WW)\label{dualWW-form}
\eea

\subsection{Complex null tetrads}\label{complexnull}
We extend  by linearity the definition of 
horizontal vector-fields to complex ones.
We say that a complex vector-field $m$ on $\mathbf{N}$ is  {\it{compatible}} with the null pair $(\lb, l)$
if, , i.e. 
\begin{equation*}
\g(l,m)=\g(\ul,m)=\g(m,m)=0,\quad \g(m,\overline{m})=1.
\end{equation*}
In that case we say that
$(m,\overline{m},\ul,l)$ forms \emph{ a complex null tetrad}.
Clearly $m$ is compatible if and only if $m=\frac{1}{\sqrt{2}}(X+iY)$ for some  real vectors $X,Y\in\mathbf{O}(\mathbf{N})$ with $g(X,Y)=0$, $g(X,X)=g(Y,Y)=1$.
Given a compatible vector-field $m$ and ${}^{(h)}U\in \mathbf{O}_1(\mathbf{N})$ we can define the complex scalar $U_1:\mathbf{N}\to\mathbb{C}$,
\begin{equation*}
U_1={}^{(h)}U(m).
\end{equation*}
Similarly, given ${}^{(h)}V\in \mathbf{O}_2(\mathbf{N})$ we can define the complex scalars $V_{21},V_{11}:\mathbf{N}\to\mathbb{C}$,
\begin{equation*}
V_{21}={}^{(h)}V(\overline{m},m),\quad V_{11}={}^{(h)}V(m,m).
\end{equation*}
The complex scalars $U_1$, respectively $V_{21}$ and $V_{11}$, determine uniquely the real  horizontal tensors fields ${}^{(h)}U$ and ${}^{(h)}V$ respectively.

Given a compatible vector-field $m$ we define (compare with \eqref{fo1}, \eqref{fo2}, and \eqref{fo3})
\begin{equation}\label{fo10}
\begin{split}
\th={}^{(h)}\chi(\overline{m},m)=\g(\D_{\overline{m}}l,m)&,\qquad \thb={}^{(h)}\underline\chi(\overline{m},m)=\g(\D_{\overline{m}}\ul,m),\\
\va={}^{(h)}\chi(m,m)=\g(\D_ml,m)&,\qquad \vab={}^{(h)}\underline\chi(m,m)=\g(\D_m\ul,m),\\
\xi={}^{(h)}\xi(m)=\g(\D_ll,m)&,\qquad \xib={}^{(h)}\xib(m)=\g(\D_{\ul}\ul,m),\\
\eta={}^{(h)}\eta(m)=\g(\D_{\ul}l,m)&,\qquad \etab={}^{(h)}\etab(m)=\g(\D_l\ul,m),\\
\om=g(\D_ll,\ul)&,\qquad \omb=\g(\D_{\ul}\ul,l),\\
\ze={}^{(h)}\ze(m)=\g(\D_m l,\ul)&.
\end{split}
\end{equation}
The complex scalars $\th,\thb,\va,\vab,\xi,\xib,\eta,\etab,\ze$ and the real scalars $\om,\omb$ are the main connection coefficients of the null tetrad. 

Similarly, given a real-valued Weyl field $W$ we define (compare with \eqref{fo5})
\begin{equation}\label{fo11}
\begin{cases}
\Psi_{(2)}=\Psi_{(2)}(W)=\al(W)(m,m)=W(l,m,l,m),\\
{\underline\Psi}_{(2)}={\underline\Psi}_{(2)}(W)=\aa(W)(m,m)=W(\ul,m,\ul,m),\\
\Psi_{(1)}=\Psi_{(1)}(W)=\be(W)(m)=W(m,l,\ul,l),\\
{\underline\Psi}_{(1)}={\underline\Psi}_{(1)}(W)=\bb(W)(m)=W(m,\ul,l,\ul),\\
\Psi_{(0)}=\Psi_{(0)}(W)=\varrho(W)(\overline{m},m)=W(\overline{m},\ul,m,l).
\end{cases}
\end{equation}
Notice that, in view of \eqref{fo4}, $\al(W)(\overline{m},m)=\aa(W)(\overline{m},m)=\varrho(W)(m,m)=0$, so the scalars $\Psi_{(2)},{\underline\Psi}_{(2)}, \Psi_{(1)}, {\underline\Psi}_{(1)}, \Psi_{(0)}$ uniquely determine the real-valued Weyl field $W$. In addition, if 
\begin{equation*}
{}^{\ast}W_{\al\be\mu\nu}=\frac{1}{2}{\in_{\mu\nu}}^{\rho\si}W_{\al\be\rho\si}
\end{equation*}
is the dual dual of $W$, and the null tetrad $(m,\overline{m},\ul,l)$ has positive orientation (i.e. $\in_{\al\be\mu\nu}m^\al\overline{m}^\be\ul^\mu l^\nu=i$) then
\begin{equation}\label{fo11.11}
\begin{split}
&\Psi_2({}^{\ast}W)=(-i)\Psi_2(W),\quad \Psi_1({}^{\ast}W)=(-i)\Psi_1(W), \quad\Psi_0({}^{\ast}W)=(-i)\Psi_0(W),\\
&{\underline\Psi}_{(2)}({}^{\ast}W)=i{\underline\Psi}_2(W),\quad\quad {\underline\Psi}_{(1)}({}^{\ast}W)=i{\underline\Psi}_1(W).
\end{split}
\end{equation} 
In what follows we denote,
\begin{equation*}
e_{1}=m,\,\,e_{2}=\overline{m}\,\,e_{3}=\ul,\,\,e_{4}=l.
\end{equation*}
We define the connection coefficients ${\Gamma^{\mu}}_{\a\b},\Gamma_{\mu\a\b}$ by the formulas
\begin{equation*}
\D_{e_{\b}}e_{\a}={\Gamma^{\mu}}_{\a\b}e_{\mu}.
\end{equation*}
and
\begin{equation*}
\Gamma_{\mu\a\b}=\g_{\mu\nu}{\Gamma^{\nu}}_{\a\b}=\g(e_{\mu},\D_{e_{\b}}e_{\a}).
\end{equation*}
Clearly
\begin{equation*}
\Gamma_{\mu\a\b}+\Gamma_{\a\mu\b}=0.
\end{equation*}
We easily check  the formulas,
\begin{equation}\label{def1}
\begin{split}
&\Gamma_{144}=\xi,\quad\Gamma_{244}=\overline{\xi},\quad\Gamma_{133}=\xib,\quad\Gamma_{233}=\overline{\xib},\\
&\Gamma_{143}=\eta,\quad\Gamma_{243}=\overline{\eta},\quad\Gamma_{134}=\etab,\quad\Gamma_{234}=\overline{\etab},\\
&\Gamma_{142}=\th,\quad\Gamma_{241}=\overline{\th},\quad\Gamma_{132}=\thb,\quad\Gamma_{231}=\overline{\thb},\\
&\Gamma_{141}=\va,\quad\Gamma_{242}=\overline{\va},\quad\Gamma_{131}=\vab,\quad\Gamma_{232}=\overline{\vab},\\
&\Gamma_{344}=\om,\quad\Gamma_{433}=\omb,\quad\Gamma_{341}=\ze,\quad\Gamma_{342}=\overline{\ze}.\\
&
\end{split}
\end{equation}
Using the definition \eqref{cov2} we see easily that if ${}^{(h)}U\in\mathbf{O}_1(\mathbf{N})$, ${}^{(h)}V\in\mathbf{O}_2(\mathbf{N})$, and $\a\in\{1,2,3,4\}$ then
\begin{equation}\label{formula1}
\nabla_{\a}{}^{(h)}U_{1}=(e_{\a}+\Gamma_{12\a})({}^{(h)}U_{1}),
\end{equation}
and
\begin{equation}\label{formula2}
\begin{split}
\nabla_{\a}{}^{(h)}V_{11}=(e_{\a}+2\Gamma_{12\a})({}^{(h)}V_{11}),\quad\nabla_{\a}{}^{(h)}V_{21}=e_{\a}({}^{(h)}V_{21}).
\end{split}
\end{equation}

\subsection{The null structure equations and the Bianchi identities}

We define
\begin{equation*}
D=l=e_{4},\,\,\underline{D}=\ul=e_{3},\,\,\delta=m=e_{1},\,\,\overline{\delta}=\overline{m}=e_{2}.
\end{equation*}
Let $R$ denote the Riemann curvature tensor on $\mathbf{M}$. We compute
\begin{equation*}
\begin{split}
R_{\a\b\mu\nu}&=\g(e_{\a},[\D_{e_{\mu}}(\D_{e_{\nu}}e_{\b})-\D_{e_{\nu}}(\D_{e_{\mu}}e_{\b})-\D_{[e_{\mu},e_{\nu}]}e_{\b}])\\
&=\g(e_{\a},[\D_{e_{\mu}}({\Gamma^{\rho}}_{\b\nu}e_{\rho})-\D_{e_{\nu}}({\Gamma^{\rho}}_{\b\mu}e_{\rho})-({\Gamma^{\rho}}_{\nu\mu}-{\Gamma^{\rho}}_{\mu\nu})\D_{e_{\rho}}e_{\b}])\\
&=e_{\mu}(\Gamma_{\a\b\nu})-e_{\nu}(\Gamma_{\a\b\mu})+{\Gamma^{\rho}}_{\b\nu}\Gamma_{\a\rho\mu}-{\Gamma^{\rho}}_{\b\mu}\Gamma_{\a\rho\nu}+({\Gamma^{\rho}}_{\mu\nu}-{\Gamma^{\rho}}_{\nu\mu})\Gamma_{\a\b\rho}.
\end{split}
\end{equation*}
Using this formula and the table \eqref{def1} we derive the null structure equations. Using $R_{1441}=-\Psi_{(2)}(R)$ we derive
\begin{equation}\label{fo21}
(D+2\Gamma_{124})\va-(\delta+\Gamma_{121})\xi=\xi(2\ze+\eta+\etab)-\va(\om+\th+\overline{\th})-\Psi_{(2)}(R).
\end{equation}
Using $R_{1331}=-{\underline\Psi}_2(R)$ we derive
\begin{equation}\label{fo21'}
(\underline{D}+2\Gamma_{123})\vab-(\delta+\Gamma_{121})\xib=\xib(-2\ze+\etab+\eta)-\vab(\omb+\thb+\overline{\thb})-{\underline\Psi}_{(2)}(R).
\end{equation}
Using $R_{1442}=0$ we derive
\begin{equation}\label{fo22}
D\th-(\overline{\delta}+\Gamma_{122})\xi=-\th^2-\om\th-\va\overline{\va}+\overline{\xi}\eta+\xi(2\overline{\ze}+\overline{\etab}).
\end{equation}
Using $R_{1332}=0$ we derive
\begin{equation}\label{fo22'}
\underline{D}\,\thb-(\overline{\delta}+\Gamma_{122})\xib=-\thb^2-\omb\,\thb-\vab\,\overline{\vab}+\overline{\xib}\,\etab+\xib(-2\overline{\ze}+\overline{\eta}).
\end{equation}
Using $R_{1443}=-\Psi_{(1)}(R)$ we derive
\begin{equation}\label{fo23}
(D+\Gamma_{124})\eta-(\underline{D}+\Gamma_{123})\xi=-2\omb\xi+\th(\etab-\eta)+\va(\overline{\etab}-\overline\eta)-\Psi_{(1)}(R).
\end{equation}
Using $R_{1334}=-\underline{\Psi}_{(1)}(R)$ we derive
\begin{equation}\label{fo23'}
(\underline{D}+\Gamma_{123})\etab-(D+\Gamma_{124})\xib=-2\om\xib+\thb(\eta-\etab)+\vab(\overline{\eta}-\overline\etab)-\underline{\Psi}_{(1)}(R).
\end{equation}
Using $R_{1431}=0$ we derive
\begin{equation}\label{fo24}
(\underline{D}+2\Gamma_{123})\va-(\delta+\Gamma_{121})\eta=\eta^2+\xib\xi-\vab\th+\va(\omb-\overline{\thb}).
\end{equation}
Using $R_{1341}=0$ we derive
\begin{equation}\label{fo24'}
(D+2\Gamma_{124})\vab-(\delta+\Gamma_{121})\etab=\etab^2+\xi\xib-\va\thb+\vab(\om-\overline{\th}).
\end{equation}
Using $R_{1432}=-\Psi_{(0)}(R)$ we derive
\begin{equation}\label{fo25}
\underline{D}\th-(\overline{\delta}+\Gamma_{122})\eta=\xi\overline{\xib}+\eta\overline{\eta}-\va\overline{\vab}+\theta(\omb-\thb)-\Psi_{(0)}(R).
\end{equation}
Using $R_{1342}=-\overline{\,\Psi_{(0)}(R)}$ we derive
\begin{equation}\label{fo25'}
D\thb-(\overline{\delta}+\Gamma_{122})\etab=\xib\overline{\xi}+\etab\,\overline{\etab}-\vab\overline{\va}+\thb(\om-\th)-\overline{\,\Psi_{(0)}(R)}.
\end{equation}
Using $R_{1421}=-\Psi_{(1)}(R)$ we derive
\begin{equation}\label{fo26}
(\overline{\delta}+2\Gamma_{122})\va-\delta\theta=\ze\theta-\overline{\ze}\va+\eta(\th-\overline{\th})+\xi(\thb-\overline{\thb})-\Psi_{(1)}(R).
\end{equation}
Using $R_{1321}=-\underline{\Psi}_{(1)}(R)$ we derive
\begin{equation}\label{fo26'}
(\overline{\delta}+2\Gamma_{122})\vab-\delta\thb=-\ze\thb+\overline{\ze}\vab+\etab(\thb-\overline{\thb})+\xib(\th-\overline{\th})-\underline{\Psi}_{(1)}(R).
\end{equation}
Using $R_{3441}=-\Psi_{(1)}(R)$ we derive
\begin{equation}\label{fo27}
(D+\Gamma_{124})\ze-\delta\om=\om(\ze+\etab)+\overline{\th}(\etab-\ze)+\va(\overline{\etab}-\overline{\ze})-\xi(\overline{\thb}+\omb)-\overline{\xi}\vab-\Psi_{(1)}(R).
\end{equation}
Using $R_{4331}=-\underline{\Psi}_{(1)}(R)$ we derive
\begin{equation}\label{fo27'}
(\underline{D}+\Gamma_{123})(-\ze)-\delta\omb=\omb(-\ze+\eta)+\overline{\thb}(\eta+\ze)+\vab(\overline{\eta}+\overline{\ze})-\xib(\overline{\th}+\om)-\overline{\xib}\va-\underline{\Psi}_{(1)}(R).
\end{equation}
Using $R_{3443}=\Psi_{(0)}(R)+\overline{\,\Psi_{(0)}(R)}$ we derive
\begin{equation}\label{fo28}
D\omb+\underline{D}\om=\overline{\xi}\xib+\xi\overline{\xib}-\overline{\eta}\etab-\eta\overline{\etab}+\ze(\overline{\eta}-\overline{\etab})+\overline{\ze}(\eta-\etab)-(\Psi_{(0)}(R)+\overline{\,\Psi_{(0)}(R)}).
\end{equation}
Using $R_{3421}=\Psi_{(0)}(R)-\overline{\,\Psi_{(0)}(R)}$ we derive
\begin{equation}\label{fo28'}
(\delta-\Gamma_{121})\overline{\ze}-(\overline{\delta}+\Gamma_{122})\ze=(\overline{\va}\vab-\va\overline{\vab})+(\theta\overline{\thb}-\overline{\th}\thb)+\omb(\theta-\overline{\theta})-\om(\thb-\overline{\thb})-(\Psi_{(0)}(R)-\overline{\,\Psi_{(0)}(R)}).
\end{equation}

We derive now the Bianchi identities. Assume $W$ is a real-valued Weyl field, see \eqref{fo4}, and
\begin{equation*}
\D^{\al}W_{\al\be\mu\nu}=J_{\be\mu\nu},
\end{equation*}
for some Weyl current $J\in\mathbf{T}_3^0(\mathbf{M})$. Then, using Proposition \ref{prop:Bianchi-dual},
\begin{equation}\label{bianchi-use}
\D_{[\rho}W_{\al\be]\mu\nu}=\D_\rho W_{\al\be\mu\nu}+\D_{\al}W_{\be\rho\mu\nu}+\D_{\be}W_{\rho\al\mu\nu}=\in_{\sigma\rho\al\be}{{}^\ast J^{\sigma}}_{\mu\nu},
\end{equation}
where
\begin{equation*}
{{}^\ast J^{\sigma}}_{\mu\nu}=\frac{1}{2}{\in_{\mu\nu}}^{\gamma\delta}{J^\sigma}_{\gamma\delta}.
\end{equation*}
Using \eqref{fo4}, we derive the following
\begin{equation}\label{Weylcomp}
\begin{split}
&W_{3141}=W_{3242}=W_{4241}=W_{3231}=0,\\
&W_{4141}=\Psi_{(2)},\quad W_{4242}=\overline{\,\Psi_{(2)}},\quad W_{3131}=\underline{\Psi}_{(2)},\quad W_{3232}=\overline{\,\underline{\Psi}_{(2)}},\\
&W_{2314}=\Psi_{(0)},\quad W_{1324}=\overline{\,\Psi_{(0)}},\quad W_{4343}=W_{1212}=-\Psi_{(0)}-\overline{\,\Psi_{(0)}},\quad W_{1234}=\overline{\,\Psi_{(0)}}-\Psi_{(0)},\\
&W_{1434}=W_{2141}=\Psi_{(1)},\quad W_{2434}=W_{1242}=\overline{\,\Psi_{(1)}},\\
&W_{1343}=W_{2131}={\underline\Psi}_{(1)},\quad W_{2343}=W_{1232}=\overline{\,{\underline\Psi}_{(1)}}.
\end{split}
\end{equation}

We use the table \eqref{Weylcomp} and the formula \eqref{bianchi-use} to derive the Bianchi identities. Using $\D_{[2}W_{41]41}=-J_{414}$ we derive
\begin{equation}\label{Bi1}
(\overline{\delta}+2\Gamma_{122})\Psi_{(2)}-(D+\Gamma_{124})\Psi_{(1)}=-(2\overline{\ze}+\overline{\etab})\Psi_{(2)}+(4\th+\om)\Psi_{(1)}+3\xi\Psi_{(0)}-J_{414}.
\end{equation}
Using $\D_{[2}W_{31]31}=-J_{313}$ we derive
\begin{equation}\label{Bi2}
(\overline{\delta}+2\Gamma_{122})\underline{\Psi}_{(2)}-(\underline{D}+\Gamma_{123})\underline{\Psi}_{(1)}=-(-2\overline{\ze}+\overline{\eta})\underline{\Psi}_{(2)}+(4\thb+\omb)\underline{\Psi}_{(1)}+3\xib\overline{\,\Psi_{(0)}}-J_{313}.
\end{equation}
Using $\D_{[3}W_{41]41}=J_{114}$ we derive
\begin{equation}\label{Bi3}
(\underline{D}+2\Gamma_{123})\Psi_{(2)}-(\delta+\Gamma_{121})\Psi_{(1)}=(2\omb-\overline{\thb})\Psi_{(2)}+(\ze+4\eta)\Psi_{(1)}+3\va\Psi_{(0)}+J_{114}
\end{equation}
Using $\D_{[4}W_{31]31}=J_{113}$ we derive
\begin{equation}\label{Bi4}
(D+2\Gamma_{124})\underline{\Psi}_{(2)}-(\delta+\Gamma_{121})\underline{\Psi}_{(1)}=(2\om-\overline{\th})\underline{\Psi}_{(2)}+(-\ze+4\etab)\underline{\Psi}_{(1)}+3\vab\overline{\,\Psi_{(0)}}+J_{113}.
\end{equation}
Using $\D_{[2}W_{34]41}=-J_{214}$ we derive
\begin{equation}\label{Bi5}
-D\Psi_{(0)}-(\overline\delta+\Gamma_{122})\Psi_{(1)}=-\overline{\vab}\Psi_{(2)}+(2\overline{\etab}+\overline{\ze})\Psi_{(1)}+3\th\Psi_{(0)}+2\xi\overline{\,\underline{\Psi}_{(1)}}-J_{214}.
\end{equation}
Using $\D_{[2}W_{43]31}=-J_{213}$ we derive
\begin{equation}\label{Bi6}
-\underline{D}\overline{\,\Psi_{(0)}}-(\overline\delta+\Gamma_{122})\underline{\Psi}_{(1)}=-\overline{\va}\underline{\Psi}_{(2)}+(2\overline{\eta}-\overline{\ze})\underline{\Psi}_{(1)}+3\thb\overline{\,\Psi_{(0)}}+2\xib\overline{\,\Psi_{(1)}}-J_{213}.
\end{equation}
Using $\D_{[1}W_{42]31}=J_{413}$ we derive
\begin{equation}\label{Bi7}
\delta\overline{\,\Psi_{(0)}}+(D+\Gamma_{124})\underline{\Psi}_{(1)}=-2\vab\overline{\,\Psi_{(1)}}-3\etab\overline{\,\Psi_{(0)}}+(\omega-2\overline{\theta})\underline{\Psi}_{(1)}+\overline{\xi}\underline{\Psi}_{(2)}+J_{413}.
\end{equation}
Using $\D_{[1}W_{32]41}=J_{314}$ we derive
\begin{equation}\label{Bi8}
\delta\Psi_{(0)}+(\underline{D}+\Gamma_{123})\Psi_{(1)}=-2\va\overline{\underline{\Psi}}_1-3\eta\Psi_{(0)}+(\omb-2\overline{\thb})\Psi_{(1)}+\overline{\xib}\Psi_{(2)}+J_{314}.
\end{equation}

\subsection{Symmetries of the formalism}

We discuss now the main symmetries of the formalism introduced in this section. 

{\bf{1. Interchange of the vectors $l$ and $\ul$}}. We define the complex tetrad $(m',\overline{m'},\ul',l')$,
\begin{equation}\label{trans1}
e'_1=m'=m,\quad e'_2=\overline{m'}=\overline{m},\quad e'_3=\ul'=l,\quad e'_4=l'=\ul.
\end{equation}
Using this new complex tetrad we define the scalars $\th',\thb',\va',\vab',\xi',\xib',\eta',\etab',\om',\omb',\ze'$ as in \eqref{fo10}. Given a real-valued Weyl field $W$, we define the scalars $\Psi'_{(2)},\underline{\Psi}'_{(2)},\Psi'_{(1)},\underline{\Psi}'_{(1)},\Psi'_{(0)}$ as in \eqref{fo11}. We define the connection coefficients $\Gamma'_{\mu\al\be}=g(e'_\mu,\D_{e'_\be}e'_\al)$. The definitions show easily that
\begin{equation}\label{tab1}
\begin{split}
&\th'=\thb,\,\thb'=\th,\,\va'=\vab,\,\vab'=\va,\,\xi'=\xib,\,\xib'=\xi,\,\eta'=\etab,\,\etab'=\eta,\,\om'=\omb,\,\omb'=\om,\,\ze'=-\ze,\\
&\Psi'_{(2)}=\underline{\Psi}_{(2)},\,\underline{\Psi}'_{(2)}=\Psi_{(2)},\,\Psi'_{(1)}=\underline{\Psi}_{(1)},\,\underline{\Psi}'_{(1)}=\Psi_{(1)},\,\Psi'_{(0)}=\overline{\,\Psi_{(0)}},\\
&\delta'=\delta,\,\overline{\delta'}=\overline{\delta},\,\underline{D}'=D,\,D'=\underline{D},\,\Gamma'_{121}=\Gamma_{121},\,\Gamma'_{122}=\Gamma_{122},\,\Gamma'_{123}=\Gamma_{124},\,\Gamma'_{124}=\Gamma_{123}.
\end{split}
\end{equation}
The Ricci equations \eqref{fo21}--\eqref{fo28'} and the Bianchi identities \eqref{Bi1}--\eqref{Bi8} are invariant with respect to the transformation \eqref{trans1}. For example, the equation corresponding to \eqref{fo21} in the complex tetrad $(m',\overline{m'},\ul',l')$ reads
\begin{equation*}
(D'+2\Gamma'_{124})\va'-(\delta'+\Gamma'_{121})\xi'=\xi'(2\ze'+\eta'+\etab')-\va'(\om'+\th'+\overline{\th'})-\Psi'_{(2)}(R).
\end{equation*}
After using the table \eqref{tab1}, this is equivalent to
\begin{equation*}
(\underline{D}+2\Gamma_{123})\underline{\va}-(\delta+\Gamma_{121})\xib=\xib(-2\ze+\etab+\eta)-\vab(\omb+\thb+\overline{\thb})-\underline{\Psi}_{(2)}(R),
\end{equation*}
which is \eqref{fo21'}.

{\bf{2. Interchange of the vectors $m$ and $\overline{m}$}}. We define the complex tetrad $(m',\overline{m'},\ul',l')$,
\begin{equation}\label{trans2}
e'_1=m'=\overline{m},\quad e'_2=\overline{m'}=m,\quad e'_3=\ul'=\ul,\quad e'_4=l'=l.
\end{equation}
Using this new complex tetrad we define the scalars $\th',\thb',\va',\vab',\xi',\xib',\eta',\etab',\om',\omb',\ze'$ as in \eqref{fo10}. Given a real-valued Weyl field $W$, we define the scalars $\Psi'_{(2)},\underline{\Psi}'_{(2)},\Psi'_{(1)},\underline{\Psi}'_{(1)},\Psi'_{(0)}$ as in \eqref{fo11}. We define the connection coefficients $\Gamma'_{\mu\al\be}=g(e'_\mu,\D_{e'_\be}e'_\al)$. The definitions show easily that
\begin{equation}\label{tab2}
\begin{split}
&\th'=\overline{\th},\,\thb'=\overline{\thb},\,\va'=\overline{\va},\,\vab'=\overline{\vab},\,\xi'=\overline{\xi},\,\xib'=\overline{\xib},\,\eta'=\overline{\eta},\,\etab'=\overline{\etab},\,\om'=\om,\,\omb'=\omb,\,\ze'=\overline{\ze},\\
&\Psi'_{(2)}=\overline{\,\Psi_{(2)}},\,\underline{\Psi}'_{(2)}=\overline{\,\underline{\Psi}_{(2)}},\,\Psi'_{(1)}=\overline{\,\Psi_{(1)}},\,\underline{\Psi}'_{(1)}=\overline{\,\underline{\Psi}_{(1)}},\,\Psi'_{(0)}=\overline{\,\Psi_{(0)}},\\
&\delta'=\overline{\delta},\,\overline{\delta'}=\delta,\,\underline{D}'=\underline{D},\,D'=D,\,\Gamma'_{121}=\overline{\Gamma_{121}},\,\Gamma'_{122}=\overline{\Gamma_{122}},\,\Gamma'_{123}=\overline{\Gamma_{123}},\,\Gamma'_{124}=\overline{\Gamma_{124}}.
\end{split}
\end{equation}
The Ricci equations \eqref{fo21}--\eqref{fo28'} and the Bianchi identities \eqref{Bi1}--\eqref{Bi8} are invariant with respect to the transformation \eqref{trans2}. For example, the equation corresponding to \eqref{fo21} in the complex tetrad $(m',\overline{m'},\ul',l')$ reads
\begin{equation*}
(D'+2\Gamma'_{124})\va'-(\delta'+\Gamma'_{121})\xi'=\xi'(2\ze'+\eta'+\etab')-\va'(\om'+\th'+\overline{\th'})-\Psi'_{(2)}(R).
\end{equation*}
After using the table \eqref{tab2}, this is equivalent to
\begin{equation*}
(D+2\overline{\Gamma_{124}})\overline{\va}-(\overline{\delta}+\overline{\Gamma_{121}})\overline{\xi}=\overline{\xi}(2\overline{\ze}+\overline{\eta}+\overline{\etab})-\overline{\va}(\om+\overline{\th}+\th)-\overline{\,{\Psi}_{(2)}(R)},
\end{equation*}
which is equivalent to \eqref{fo21} after complex conjugation.

{\bf{3. Rescaling of the null pair $l,\ul$}}. We define the complex tetrad $(m',\overline{m'},\ul',l')$,
\begin{equation}\label{trans3}
e'_1=m'=m,\quad e'_2=\overline{m'}=\overline{m},\quad e'_3=\ul'=A^{-1}\ul,\quad e'_4=l'=A\cdot l,
\end{equation}
for some smooth function $A:\mathbf{N}\to\mathbb{R}\setminus\{0\}$. Using this new complex tetrad we define the scalars $\th',\thb',\va',\vab',\xi',\xib',\eta',\etab',\om',\omb',\ze'$ as in \eqref{fo10}. Given a real-valued Weyl field $W$, we define the scalars $\Psi'_{(2)},\underline{\Psi}'_{(2)},\Psi'_{(1)},\underline{\Psi}'_{(1)},\Psi'_{(0)}$ as in \eqref{fo11}. We define the connection coefficients $\Gamma'_{\mu\al\be}=g(e'_\mu,\D_{e'_\be}e'_\al)$. The definitions show easily that
\begin{equation}\label{tab3}
\begin{split}
&\th'=A\th,\,\thb'=A^{-1}\thb,\,\va'=A\va,\,\vab'=A^{-1}\vab,\,\xi'=A^2\xi,\,\xib'=A^{-2}\xib,\,\eta'=\eta,\,\etab'=\etab,\\
&\Psi'_{(2)}=A^2\Psi_{(2)},\,\underline{\Psi}'_{(2)}=A^{-2}\underline{\Psi}_{(2)},\,\Psi'_{(1)}=A\Psi_{(1)},\,\underline{\Psi}'_{(1)}=A^{-1}\underline{\Psi}_{(1)},\,\Psi'_{(0)}=\Psi_{(0)},\\
&\delta'=\delta,\,\overline{\delta'}=\overline{\delta},\,\underline{D}'=A^{-1}\underline{D},\,D'=AD,\,\Gamma'_{121}=\Gamma_{121},\,\Gamma'_{122}=\Gamma_{122},\\
&\om'=A\om-D(A),\,\omb'=A^{-1}\omb-\underline{D}(A^{-1}),\,\ze'=\ze-\delta(A)/A,\,\Gamma'_{123}=A^{-1}\Gamma_{123},\,\Gamma'_{124}=A\Gamma_{124}.
\end{split}
\end{equation}
The Ricci equations \eqref{fo21}--\eqref{fo28'} and the Bianchi identities \eqref{Bi1}--\eqref{Bi8} are invariant with respect to the transformation \eqref{trans3}. For example, the equation corresponding to \eqref{fo21} in the complex tetrad $(m',\overline{m'},\ul',l')$ reads
\begin{equation*}
(D'+2\Gamma'_{124})\va'-(\delta'+\Gamma'_{121})\xi'=\xi'(2\ze'+\eta'+\etab')-\va'(\om'+\th'+\overline{\th'})-\Psi'_{(2)}(R).
\end{equation*}
After using the table \eqref{tab3}, this is equivalent to
\begin{equation*}
\begin{split}
&(AD+2A\Gamma_{124})(A\va)-(\delta+\Gamma_{121})(A^2\xi)\\
&=A^2\xi(2\ze-2\delta(A)/A+\eta+\etab)-A\va(A\om-D(A)+A\theta+A\overline{\theta})-A^2\Psi_{(2)}(R).
\end{split}
\end{equation*}
This is equivalent to \eqref{fo21}, after simplifying the term $AD(A)\va-2A\delta(A)\xi$ and multiplying by $A^{-2}$.

{\bf{4. Rotation of the vector $m$}}. We define the complex tetrad $(m',\overline{m'},\ul',l')$,
\begin{equation}\label{trans4}
e'_1=m'=Bm,\quad e'_2=\overline{m'}=B^{-1}\overline{m},\quad e'_3=\ul'=\ul,\quad e'_4=l'=l,
\end{equation}
for some smooth function $B:\mathbf{N}\to\mathbb{C}$, $|B|\equiv 1$. Using this new complex tetrad we define the scalars $\th',\thb',\va',\vab',\xi',\xib',\eta',\etab',\om',\omb',\ze'$ as in \eqref{fo10}. Given a real-valued Weyl field $W$, we define the scalars $\Psi'_{(2)},\underline{\Psi}'_{(2)},\Psi'_{(1)},\underline{\Psi}'_{(1)},\Psi'_{(0)}$ as in \eqref{fo11}. We define the connection coefficients $\Gamma'_{\mu\al\be}=g(e'_\mu,\D_{e'_\be}e'_\al)$. The definitions show easily that
\begin{equation}\label{tab4}
\begin{split}
&\th'=\th,\,\thb'=\thb,\,\va'=B^2\va,\,\vab'=B^2\vab,\,\xi'=B\xi,\,\xib'=B\xib,\,\eta'=B\eta,\,\etab'=B\etab,\,\om'=\om,\,\omb'=\omb,\\
&\ze'=B\ze,\,\Psi'_{(2)}=B^2\Psi_{(2)},\,\underline{\Psi}'_{(2)}=B^2\underline{\Psi}_{(2)},\,\Psi'_{(1)}=B\Psi_{(1)},\,\underline{\Psi}'_{(1)}=B\underline{\Psi}_{(1)},\,\Psi'_{(0)}=\Psi_{(0)},\\
&\delta'=B\delta,\,\overline{\delta'}=B^{-1}\overline{\delta},\,\Gamma'_{121}=B\Gamma_{121}-\delta(B),\,\Gamma'_{122}=B^{-1}\Gamma_{122}+\overline{\delta}(B^{-1}),\\
&\,\underline{D}'=\underline{D},\,D'=D,\,\Gamma'_{123}=\Gamma_{123}-\underline{D}(B)/B,\,\Gamma'_{124}=\Gamma_{124}-D(B)/B.
\end{split}
\end{equation}
The Ricci equations \eqref{fo21}--\eqref{fo28'} and the Bianchi identities \eqref{Bi1}--\eqref{Bi8} are invariant with respect to the transformation \eqref{trans4}. For example, the equation corresponding to \eqref{fo21} in the complex tetrad $(m',\overline{m'},\ul',l')$ reads
\begin{equation*}
(D'+2\Gamma'_{124})\va'-(\delta'+\Gamma'_{121})\xi'=\xi'(2\ze'+\eta'+\etab')-\va'(\om'+\th'+\overline{\th'})-\Psi'_{(2)}(R).
\end{equation*}
After using the table \eqref{tab4}, this is equivalent to
\begin{equation*}
\begin{split}
&(D+2\Gamma_{124}-2D(B)/B)(B^2\va)-(B\delta+B\Gamma_{121}-\delta(B))(B\xi)\\
&=B\xi(2B\ze+B\eta+B\etab)-B^2(\om+\th+\overline{\theta})-B^2\Psi_{(2)}(R).
\end{split}
\end{equation*}
This is equivalent to \eqref{fo21}, after simplifying the left-hand side and multiplying by $B^{-2}$.

\end{document}